\definecolor{gray}{rgb}{0.93,0.93,0.93}
\definecolor{light-gold}{rgb}{0.99,0.97,0.78}
\def\be{\begin{equation}}
\def\ee{\end{equation}}
\def\bm{\begin{multline}}
\def\bfig{\begin{figure}[htb]}
\def\efig{\end{figure}}
\newcommand{\dd}{{\rm d}}
\newcommand{\e}[1]{\,{\rm e}^{#1}\,}
\newcommand{\ii}{{\rm i}}
\def\Tr{{\operatorname{Tr\,}}}
\renewcommand{\Re}{{\rm Re}\;}
\newcommand{\transit}[4]{\; \substack{#1 \;\;\;\;\; #2 \\ =\!=\!=\!= \\ #3 \;\;\;\;\; #4} \;}
\numberwithin{equation}{section}
\newtheorem{theorem}{Theorem}[section]
\newtheorem{proposition}[theorem]{Proposition}
\newtheorem{lemma}[theorem]{Lemma}
\newtheorem{corollary}[theorem]{Corollary}
\newtheorem{conjecture}{Conjecture}
\newcommand{\eps}{{\varepsilon}}
\newcommand{\caC}{{\mathcal C}}
\newcommand{\caE}{{\mathcal E}}
\newcommand{\caH}{{\mathcal H}}
\newcommand{\caL}{{\mathcal L}}
\newcommand{\caP}{{\mathcal P}}
\newcommand{\caT}{{\mathcal T}}
\newcommand{\caV}{{\mathcal V}}
\newcommand{\caW}{{\mathcal W}}
\newcommand{\bbC}{{\mathbb C}}
\newcommand{\bbE}{{\mathbb E}}
\newcommand{\bbN}{{\mathbb N}}
\newcommand{\bbP}{{\mathbb P}}
\newcommand{\bbR}{{\mathbb R}}
\newcommand{\bbS}{{\mathbb S}}
\newcommand{\bbZ}{{\mathbb Z}}
\newcommand{\Id}{\mathrm{\texttt{Id}}}
  \def\tagform@#1{\maketag@@@{\tiny{(#1)}\@@italiccorr}}
\renewcommand{\eqref}[1]{(\ref{#1})}
\begin{document}


\title{Random loop representations for quantum spin systems}

\author{Daniel Ueltschi}
\address{Department of Mathematics, University of Warwick,
Coventry, CV4 7AL, United Kingdom}
\email{daniel@ueltschi.org}

\subjclass{60K35, 82B10, 82B20, 82B26, 82B31}

\keywords{Random loop model, quantum Heisenberg models, Mermin-Wagner theorem, reflection positivity}

\begin{abstract}
We describe random loop models and their relations to a family of quantum spin systems on finite graphs. The family includes spin $\frac12$ Heisenberg models with possibly anisotropic spin interactions and certain spin 1 models with SU(2)-invariance. Quantum spin correlations are given by loop correlations. Decay of correlations is proved in 2D-like graphs, and occurrence of macroscopic loops is proved in the cubic lattice in dimensions 3 and higher. As a consequence, a magnetic long-range order is rigorously established for the spin 1 model, thus confirming the presence of a nematic phase.
\end{abstract}

\thanks{Work partially supported by EPSRC grant EP/G056390/1.}
\thanks{\copyright{} 2013 by the author. This paper may be reproduced, in its
entirety, for non-commercial purposes.}

\maketitle

\vspace{-5mm}
\centerline{\em Dedicated to Elliott Lieb on the occasion of his eightieth birthday}

{\small\tableofcontents}

\section{Introduction}
\label{sec intro}

We study models of random loops that are closely related to quantum lattice systems. There are many reasons to consider them. First, the probabilistic setting is mathematically elegant, it is fairly simple to depict, and the presence of spatial correlations makes it very interesting. Second, quantum spin systems play a major r\^ole in our understanding of the electronic properties of condensed matter. And third, the relations between these models are fascinating. They hold in spite of their mathematical differences, and several properties happen to be visible in one representation but not in the other.

The models of random loops described in this article find their origin in the work of T\'oth on the quantum Heisenberg ferromagnet \cite{Toth1}, and in the work of Aizenman and Nachtergaele on the Heisenberg antiferromagnet \cite{AN}. The present extension allows to describe the spin $\frac12$ quantum XY model and certain SU(2)-invariant spin 1 spin models. A major advantage of these representations is that correlations between quantum spins are given in terms of properties of the loops. 

Quantum Hamiltonians are also related to the generators of evolution of classical interacting particles. This is an active research area and some methods have been borrowed from quantum systems, such as the Bethe ansatz. We refer to \cite{GS, SS, GKRV, JK} for recent discussions. A Lorentz gas model has been recently introduced in \cite{Lef} that seems closely related to the representation of the Heisenberg ferromagnet. Further connections should be possible and the probabilistic representations of quantum lattice systems at equilibrium could help to bridge the gap.

Many important results have been obtained for quantum spin systems that are relevant for models of random loops. One such result is the Mermin-Wagner theorem about the absence of continuous symmetry breaking in dimensions 1 and 2 \cite{MW,FJ,MS,Pfi,FP1,FP2,KT,ISV,Nac}. We provide a simple proof; it uses operator theory and it is restricted to those parameters for which there exists a correspondence with quantum models. An interesting challenge is to find a probabilistic proof that holds more generally.

Another important result in statistical mechanics is the proof of existence of a phase transition with continuous symmetry breaking and long-range order in dimensions 3 and higher. The first positive result is due to Fr\"ohlich, Simon, and Spencer, for the classical Heisenberg model \cite{FSS}. The extension to quantum models was achieved by Dyson, Lieb, and Simon \cite{DLS}. They used the method of reflection positivity and infrared bounds, that has been subsequently discussed and extended in \cite{NP,FILS1,FILS2,KLS1,KLS2, ALSSY, AFFS}. It is worth emphasizing that continuous symmetry breaking has, so far, been proved only when reflection positivity is available, which excludes the Heisenberg ferromagnet.

The main goal of this article is to show that this method can also be applied to the models of random loops. As a result, we prove the occurrence of macroscopic loops when the spin parameter $S$ is small and the dimension of the lattice is large enough. In the case $S=\frac12$, we recover the results of \cite{DLS,KLS1} for the XY and Heisenberg models. The present approach becomes useful, from the perspective of quantum spin systems, in the case $S=1$ since it allows to prove the existence of quadrupolar long-range order at low temperature. In addition, the representation of random loops suggests that correlations of the form $\langle S_{x}^{3} S_{y}^{3} \rangle$ decay exponentially fast with respect to $\|x-y\|$. The results of this article are compatible with the presence of a spin nematic phase, that was studied in \cite{BO,TZX,TLMP,FKK}.

The model of random loops is introduced in Section \ref{sec rep}. We discuss the family of quantum systems in Section \ref{sec quantum} where we state and prove the connections to random loops. Correlations for SU(2)-invariant systems are particularly interesting, see Theorem \ref{thm integer spin}. The Mermin-Wagner theorem is discussed in Section \ref{sec MW}. We prove it using a simple inequality that, in essence, estimates the difference of expectations with respect to two Gibbs states in term of their relative entropy. The occurrence of macroscopic loops is described in Sections \ref{sec macroscopic loops} and \ref{sec sp rp}. We discuss explicit models with spin $\frac12$ and 1 in Section \ref{sec models}. Theorems \ref{thm spin 1} and \ref{thm spin 1 af} contain the new results for quantum systems obtained in this article. We conclude with a heuristic discussion about the joint distribution of the lengths of macroscopic loops, and of the nature of pure Gibbs states of the quantum models.

\medskip
{\it Acknowledgments:} It is a pleasure to thank Michael Aizenman, J\"urg Fr\"ohlich, Martin Hairer, Bruno Nachtergaele, Charles-\'Edouard Pfister, Robert Seiringer, and Tom Spencer, for valuable discussions. I am grateful to Marek Biskup and Roman Koteck\'y for their invitation to give lectures on related material in Prague in September 2011, and to Christian Hainzl and Stefan Teufel for their invitation to T\"ubingen in July 2012. I am also indebted to Maria Esteban and Mathieu Lewin for organizing the valuable program {\it Variational and spectral methods in quantum mechanics} at the Institut Henri Poincar\'e in 2013. Hong-Hao Tu kindly pointed out relevant references for the spin 1 model. The referee made several useful observations.

\section{Probabilistic models}
\label{sec rep}

We start by introducing the model of random loops in Section \ref{sec loops}, then the model of space-time spin configurations in Section \ref{sec stsc}. The former is more elegant and it contains all the relevant events for the description of quantum correlations. The latter contains more information and it will be used in Section \ref{sec macroscopic loops} that discusses the occurrence of macroscopic loops in rectangular boxes. The model of random loops is a marginal of the second model.

\subsection{Model of random loops}
\label{sec loops}

Let $(\Lambda,\caE)$ denote a finite graph, with $\Lambda$ the set of vertices, and $\caE$ the set of edges. An edge is always between distinct vertices, and two vertices are connected by at most one edge. The most relevant example is the box $\Lambda = \{1,\dots,L\}^{d}$ in $\bbZ^{d}$, with $\caE$ the set of nearest-neighbors, $\caE = \{ \{x,y\} \subset \Lambda : \|x-y\|_{1} = 1 \}$, but we consider more general graphs. Let $\beta>0$ be a parameter. We consider a Poisson point process on $\caE \times [0,\beta]$ with two distinct events, the ``crosses'' and the ``double bars''. Let $\caW_{k}$ denote the collection of subsets of $\caE \times [0,\beta]$ with cardinality $k$. The set $\Omega$ of the realizations of our Poisson point process is then
\be
\Omega = \bigcup_{k_{1}\geq0} \caW_{k_{1}} \times \bigcup_{k_{2}\geq0} \caW_{k_{2}}.
\ee
Let $u \in [0,1]$ be another parameter. We let $\rho_{u}$ denote the probability measure on $\Omega$ that corresponds to Poisson point processes of intensity $u$ for the crosses and $1-u$ for the double bars.

To a given realization $\omega \in \Omega$ of the Poisson point process corresponds a set of loops, denoted $\caL(\omega)$. The notion of loops is best understood by looking at pictures, see Fig.\ \ref{fig loops}. It is a necessary pain to go through mathematically precise definitions, though. So a {\it loop} of length $\ell$ is a closed trajectory $\gamma: [0,\beta\ell]_{\rm per} \to \Lambda \times [0,\beta]_{\rm per}$ such that
\begin{itemize}
\item $\gamma(s) \neq \gamma(s')$ if $s, s'$ are distinct points of continuity of $\gamma$.
\item $\gamma(s)$ is piecewise differentiable, and $\gamma'(s) = \pm1$ at points of differentiability.
\item If $s$ is a point of non-differentiability, then $\{ \gamma(s-), \gamma(s+)\} \in \caE \times [0,\beta]$.
\end{itemize}
We identify loops that have the same support but different parametrizations, i.e., the loops described by $\gamma(s+t)$ and $\gamma(-s)$ are considered to be identical to $\gamma(s)$.

\begin{centering}
\bfig
\begin{picture}(0,0)%
\includegraphics{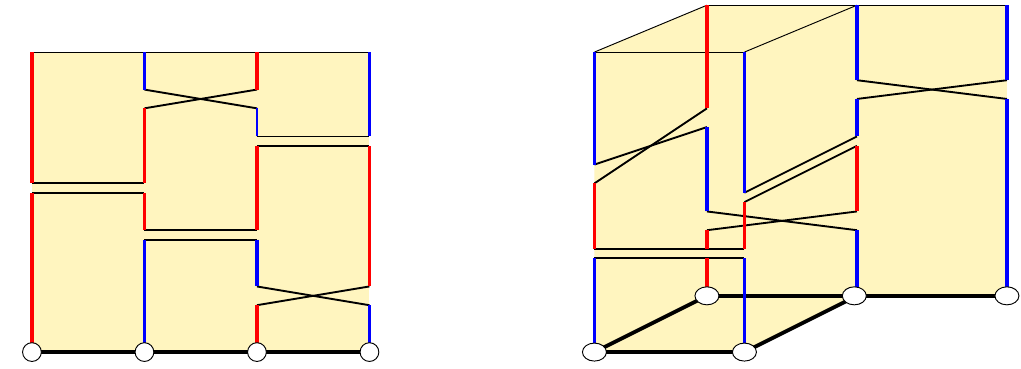}%
\end{picture}%
\setlength{\unitlength}{2368sp}%
\begingroup\makeatletter\ifx\SetFigFont\undefined%
\gdef\SetFigFont#1#2#3#4#5{%
  \reset@font\fontsize{#1}{#2pt}%
  \fontfamily{#3}\fontseries{#4}\fontshape{#5}%
  \selectfont}%
\fi\endgroup%
\begin{picture}(8159,2937)(1246,-4090)
\put(5731,-4036){\makebox(0,0)[lb]{\smash{{\SetFigFont{8}{9.6}{\rmdefault}{\mddefault}{\updefault}{\color[rgb]{0,0,0}$0$}%
}}}}
\put(7876,-3961){\makebox(0,0)[lb]{\smash{{\SetFigFont{8}{9.6}{\rmdefault}{\mddefault}{\updefault}{\color[rgb]{0,0,0}$\Lambda$}%
}}}}
\put(4351,-4036){\makebox(0,0)[lb]{\smash{{\SetFigFont{8}{9.6}{\rmdefault}{\mddefault}{\updefault}{\color[rgb]{0,0,0}$\Lambda$}%
}}}}
\put(1276,-1636){\makebox(0,0)[lb]{\smash{{\SetFigFont{8}{9.6}{\rmdefault}{\mddefault}{\updefault}{\color[rgb]{0,0,0}$\beta$}%
}}}}
\put(5776,-1636){\makebox(0,0)[lb]{\smash{{\SetFigFont{8}{9.6}{\rmdefault}{\mddefault}{\updefault}{\color[rgb]{0,0,0}$\beta$}%
}}}}
\put(1246,-4036){\makebox(0,0)[lb]{\smash{{\SetFigFont{8}{9.6}{\rmdefault}{\mddefault}{\updefault}{\color[rgb]{0,0,0}$0$}%
}}}}
\end{picture}%
\caption{Graphs and realizations of Poisson point processes, and their loops. In both cases, the number of loops is $|\caL(\omega)| = 2$.}
\label{fig loops}
\efig
\end{centering}

Let $\omega \in \Omega$, and $(x,t) \in \Lambda \times [0,\beta]$, but not in the support of $\omega$. The loop $\gamma$ that contains $(x,t)$ can be defined by starting with $\gamma(s) = (x,t+s)$ in a neighborhood of $s=0$, and by moving ``up'' until a cross or a double bar is met. If it is a cross, go across and continue in the same vertical direction. If it is a bar, go across and continue in the opposite vertical direction. The trajectory will eventually return to $(x,t)$ and the loop is complete. Let $\caL(\omega)$ be the set of all loops, starting from any point of $\Lambda \times [0,\beta]_{\rm per}$ (not in the support of $\omega$), modulo reparametrization. The number of loops is necessarily finite (it is always less than $|\Lambda| + |\omega|$).

Let $\theta>0$ be a parameter. We define the {\it partition function} as
\be
Y^{(u)}_{\theta}(\beta,\Lambda) = \int_{\Omega} \theta^{|\caL(\omega)|} \dd\rho_{u}(\omega).
\ee
The relevant probability measure for the model of random loops is then given by
\be
\label{La mais-je}
\frac1{Y^{(u)}_{\theta}(\beta,\Lambda)} \theta^{|\caL(\omega)|} \dd\rho_{u}(\omega).
\ee

Correlations will be given by three events, i.e., subsets of $\Omega$:
\begin{itemize}
\item $E_{x,y,t}^{+}$ is the set of all $\omega \in \Omega$ such that $(x,0)$ and $(y,t)$ belong to the same loop, and with identical vertical direction at these points:
\be
\label{pour E1}
\frac{\dd}{\dd s} \gamma(s) \Big|_{\gamma(s) = (x,0)} = \frac{\dd}{\dd s} \gamma(s) \Big|_{\gamma(s) = (y,t)}.
\ee
(This definition does not depend on the actual parametrization of $\gamma$.)
\item $E_{x,y,t}^{-}$ is the set of all $\omega \in \Omega$ such that $(x,0)$ and $(y,t)$ belong to the same loop, and with opposite vertical directions at these points:
\be
\label{pour E2}
\frac{\dd}{\dd s} \gamma(s) \Big|_{\gamma(s) = (x,0)} = -\frac{\dd}{\dd s} \gamma(s) \Big|_{\gamma(s) = (y,t)}.
\ee
\item $E_{x,y,t} = E_{x,y,t}^{+} \cup E_{x,y,t}^{-}$ is the set of all realizations $\omega \in \Omega$ such that $(x,0)$ and $(y,t)$ belong to the same loop.
\end{itemize}
These events are illustrated in Fig.\ \ref{fig correl}.
We also define the length $L_{x,t}$ of the loop that contains $(x,t) \in \Lambda \times [0,\beta]$ as the sum of all its vertical lines.

\bfig
\includegraphics[width=120mm]{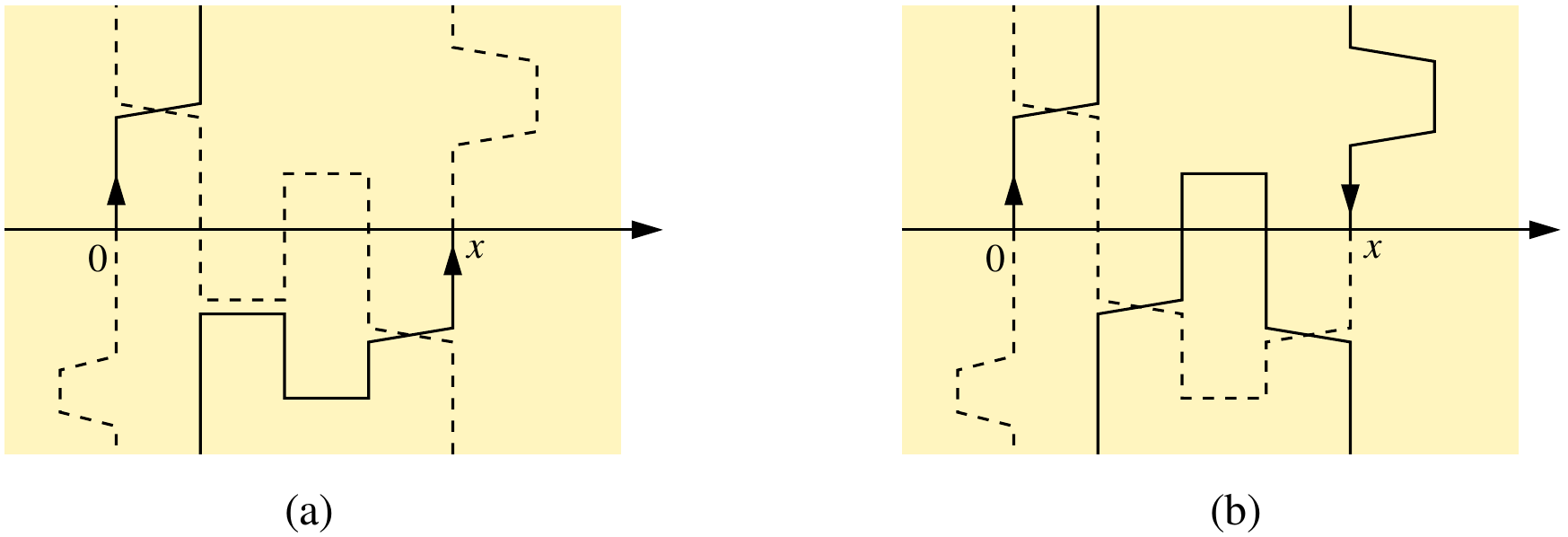}
\caption{Illustration for (a) the event $E_{0,x,0}^{+}$; (b) the event $E_{0,x,0}^{-}$.}
\label{fig correl}
\efig

The case $u=1$ and $\theta=1$ is the ``random interchange model'', or ``random stirring'' \cite{Har}, and loops are closely related to permutation cycles. The joint distribution of cycle lengths was obtained by Schramm for the complete graph \cite{Sch} following a conjecture of Aldous; see also \cite{Ber}. Infinite loops have been proved to occur on trees \cite{Ang, Ham1, Ham2}. Exact formul\ae{} for the probability of cyclic permutations have been obtained in \cite{AK,BK}. There is no correspondence between random loops with $\theta=1$ and quantum systems, and, rather curiously, none of the results obtained in the present article (decay of correlations in $d=2$ and occurrence of macroscopic loops in $d\geq3$) apply to the seemingly simpler situation $\theta=1$.

It is not hard to prove that, when $\beta$ is small, $\bbP(E_{x,y,t})$ decays exponentially fast with respect to the distance between $x$ and $y$. See e.g.\ Theorem 6.1 of \cite{GUW}.

\subsection{Space-time spin configurations}
\label{sec stsc}

Representations with space-time spin configurations have proved useful in many instances, see e.g.\ \cite{Gin, Ken, BKU, DFF, Gri, Iof, Gri,CI} and references therein. Given $S \in \frac12 \bbN$, a {\it space-time spin configuration} is a function
\be
\sigma : \Lambda \times [0,\beta]_{\rm per} \longrightarrow \{-S, -S+1, \dots, S\}.
\ee
such that $\sigma_{x,t}$ is piecewise constant in $t$, for any $x$. See Fig.\ \ref{fig spinconfig} for an illustration. Let $\Sigma$ denote the set of such functions with finitely-many discontinuities. We further require that these functions be c\`adl\`ag so they can be equipped with the Skorokhod topology. Non-experts should pay no attention to these technical details, they just help define a space of well-behaved functions on $\Sigma$.

\begin{centering}
\bfig
\includegraphics[width=90mm]{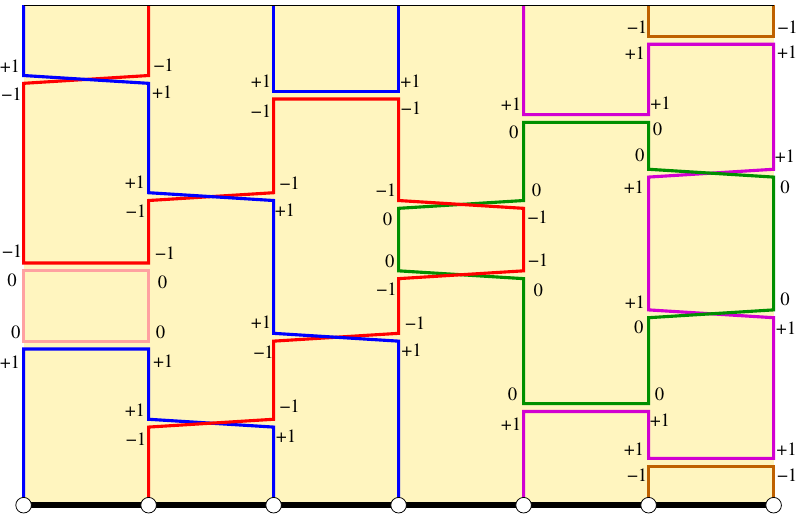}
\caption{(Color online) Illustration for a realization of the process $\rho_{\iota}$ and a compatible space-time spin configuration.}
\label{fig spinconfig}
\efig
\end{centering}

If $\omega$ is a realization of the Poisson process $\rho_{u}$ described above, we say that $\sigma$ is {\em compatible} with $\omega$ if $\sigma_{xt}$ is constant on each loop of $\caL(\omega)$. Let $\Sigma(\omega)$ denote the set of all such space-time spin configurations.
Notice that
\be
|\Sigma(\omega)| = (2S+1)^{|\caL(\omega)|}.
\ee
For $\theta=2S+1$ we have the relations
\be
Y^{(u)}_{\theta}(\beta,\Lambda) = \int \dd\rho_{u}(\omega) \sum_{\sigma \in \Sigma(\omega)} 1
\ee
and
\be
\label{ca sert plus tard}
\tfrac13 S(S+1) \bbP(E_{x,y,t}) = \frac1{Y^{(u)}_{\theta}(\beta,\Lambda)} \int\dd\rho_{u}(\omega) \sum_{\sigma \in \Sigma(\omega)} \sigma_{x,0} \sigma_{y,t}.
\ee
The latter relation follows from the following identity, which holds for all $S \in \frac12 \bbN$,
\be
\label{somme des carres}
\frac1{2S+1} \sum_{a=-S}^{S} a^{2} = \tfrac13 S(S+1).
\ee

We introduce now a more general setting for measures of space-time spin configurations. This will help us to make the connection between quantum spin systems and random loops. More importantly, we shall need the flexibility of this setting in Section \ref{sec macroscopic loops} in order to prove the existence of macroscopic loops.

It is enough for our purpose to consider only discontinuities that involve nearest-neighbor vertices. We introduce a Poisson point process on $\caE \times [0,\beta]$ whose objects are ``specifications'', that specify restrictions on the local configurations. Namely, let $\{x,y\} \in \caE$ and $t \in [0,\beta]$. An object is a set $A$ of allowed configurations at $x$ and $y$, immediately before and after $t$. The configuration at this ``transition'' is conveniently described as follows
\[
\transit{\sigma_{x,t+}}{\sigma_{y,t+}}{\sigma_{x,t-}}{\sigma_{y,t-}}
\]
For this notation to make sense we need an order on $\Lambda$; in the configuration above, we assume that $x<y$. A set $A$ is a subset of $\{-S,\dots,S\}^{4}$ and it appears with intensity $\iota(A)$. We are now ready for precise definitions.

Let $\iota : \caP \bigl( \{-S,\dots,S\}^{4} \bigr) \to \bbR_{+}$ denote the corresponding intensities ($\caP$ denotes the power set) and $\rho_{\iota}$ the Poisson point process. Given a realization $\xi$ of $\rho_{\iota}$, we let $\Sigma(\xi) \subset \Sigma$ denote the set of compatible space-time spin configurations, that is, $\sigma \in \Sigma(\xi)$ if
\begin{itemize}
\item $\displaystyle \transit{\sigma_{x,t+}}{\sigma_{y,t+}}{\sigma_{x,t-}}{\sigma_{y,t-}} \in A$ whenever $\xi$ contains the set $A$ at $(x,y,t)$.
\item $\sigma_{x,t}$ is constant in $t$ otherwise.
\end{itemize}
The measure is then given by $\rho_{\iota}$ and the counting measure on compatible configurations. It is important to remark that different intensities $\iota$ and $\iota'$ can give the same measure. They can be characterized by their ``atomic composition'' $\alpha_{\iota} : \{-S, \dots, S\}^{4} \to \bbR_{+}$:
\be
\alpha_{\iota} \Bigl( \transit{a}{b}{c}{d} \Bigr) = \sum_{\displaystyle A \ni \transit{a}{b}{c}{d}} \iota(A).
\ee

\begin{lemma}
\label{lem equiv intensities}
We have
\[
\int\dd\rho_{\iota}(\xi) \sum_{\sigma \in \Sigma(\xi)} F(\sigma) = \int\dd\rho_{\iota'}(\xi) \sum_{\sigma \in \Sigma(\xi)} F(\sigma)
\]
for all $F \in \caC(\Sigma)$, if and only if $\alpha_{\iota} = \alpha_{\iota'}$.
\end{lemma}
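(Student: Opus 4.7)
The plan is to recast both integrals so their dependence on $\iota$ flows entirely through $\alpha_\iota$, making the ``if'' direction immediate. The key observation is that in a compatible pair $(\xi,\sigma)$ with $\sigma \in \Sigma(\xi)$, each atom $(\{x,y\},t,A)$ of $\xi$ carries a unique actual transition
\[
\tau\;=\;(\sigma_{x,t+},\sigma_{y,t+},\sigma_{x,t-},\sigma_{y,t-})\;\in\;A,
\]
so $(\xi,\sigma)$ may be repackaged as a marked point configuration whose marks are transitions $\tau\in\{-S,\dots,S\}^{4}$ rather than subsets $A$.

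For the ``if'' direction, I would expand the Poisson integral against its intensity measure according to the number $n$ of atoms, interchange the sum over $\sigma\in\Sigma(\xi)$ with the sum over specifications $A_1,\dots,A_n$, and, for each fixed tuple $(\tau_1,\dots,\tau_n)$, collect all contributions with $A_j\ni\tau_j$. That collected factor is exactly $\prod_j\sum_{A\ni\tau_j}\iota(A)=\prod_j\alpha_\iota(\tau_j)$, so the integral takes the form
\[
\sum_{n\ge0}\frac{1}{n!}\sum_{e_1,\dots,e_n\in\caE}\int_{[0,\beta]^n}\!\dd t_1\cdots\dd t_n\sum_{\tau_1,\dots,\tau_n}\prod_j\alpha_\iota(\tau_j)\,\bbone\bigl[(\tau_j)_j\text{ consistent}\bigr]\,F\bigl(\sigma_{(\tau_j)_j}\bigr),
\]
up to an overall Poisson weight that is itself a function of $\alpha_\iota$ in the present convention. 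Crucially, the consistency indicator depends only on the marks $\tau_j$ (together with $\beta$-periodicity), not on the $A_j$'s, so it sits outside the $\iota$-dependent weights and does not spoil the reduction. Hence $\alpha_\iota=\alpha_{\iota'}$ immediately gives equality of the two integrals for every $F\in\caC(\Sigma)$.

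For the ``only if'' direction I would use a local-probe argument: fix a transition $\tau=(a,b,c,d)$ and an edge $\{x,y\}$, and let $F_\eps\in\caC(\Sigma)$ be a continuous approximation of the indicator of the event that $\sigma$ has exactly one discontinuity of type $\tau$ on this edge inside a short window $[t,t+\eps]$, combined with a fixed admissible background profile elsewhere. As $\eps\downarrow 0$, both integrals are linear in $\eps$ to leading order, with coefficient $\alpha_\iota(\tau)$, resp.\ $\alpha_{\iota'}(\tau)$, times a common background factor arising from the fixed profile outside the window. Since the integrals agree for every such $F_\eps$, the coefficients must match, yielding $\alpha_\iota(\tau)=\alpha_{\iota'}(\tau)$.

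The main obstacle is the global consistency constraint on $(\tau_1,\dots,\tau_n)$, which couples the marks at different atoms and prevents a term-by-term factorization. It is benign for the ``if'' direction because the constraint depends only on the $\tau_j$'s, not on the $A_j$'s, and therefore commutes with the reorganization $\iota \rightsquigarrow \alpha_\iota$; for the ``only if'' direction it is sidestepped by choosing the window so short that at most one atom falls inside to leading order, so consistency is trivially preserved by the background.
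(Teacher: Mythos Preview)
Your overall strategy---expand the Poisson integral and swap the sum over specifications $A_j$ with the sum over actual transitions $\tau_j\in A_j$---matches the spirit of the paper's one-line sketch (``discretize $[0,\beta]$ and compare the sums''). But there is a genuine gap at the point you flag only parenthetically: the claim that ``the overall Poisson weight is itself a function of $\alpha_\iota$'' is false. After your reorganisation the integral reads
\[
e^{-\beta|\caE|\,\|\iota\|}\sum_{n\ge0}\frac{1}{n!}\sum_{e_1,\dots,e_n}\int\!\dd t_1\cdots\dd t_n\sum_{\tau_1,\dots,\tau_n}\Bigl(\prod_j\alpha_\iota(\tau_j)\Bigr)\,\bbone[\text{consistent}]\,F(\sigma),
\]
with $\|\iota\|=\sum_A\iota(A)$. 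The atomic composition determines only $\sum_\tau\alpha_\iota(\tau)=\sum_A|A|\,\iota(A)$, which is \emph{not} $\|\iota\|$. Equivalently, in the discretised transfer-matrix picture one gets $T_e=(1-\tfrac{\beta}{N}\|\iota\|)\Id+\tfrac{\beta}{N}\sum_\tau\alpha_\iota(\tau)M_\tau$, and the scalar part depends on $\|\iota\|$.

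This is not a cosmetic issue; it breaks the ``if'' direction as literally stated. Take a single edge, $S=\tfrac12$, and the two non-trivial cross transitions $\tau_1=(+,-,-,+)$, $\tau_2=(-,+,+,-)$. Let $\iota$ put intensity $1$ on $A=\{\tau_1,\tau_2\}$ and $\iota'$ put intensity $1$ on each singleton $\{\tau_1\},\{\tau_2\}$; then $\alpha_\iota=\alpha_{\iota'}$ but $\|\iota\|=1$, $\|\iota'\|=2$. With $F\equiv1$ one computes
\[
\int\dd\rho_\iota|\Sigma(\xi)|=e^{-\beta}\bigl(2+2\cosh\beta\bigr),\qquad
\int\dd\rho_{\iota'}|\Sigma(\xi)|=e^{-2\beta}\bigl(2+2\cosh\beta\bigr),
\]
which differ. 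So your reorganisation is correct but stops one step short: the two sides agree \emph{up to the factor} $e^{-\beta|\caE|(\|\iota\|-\|\iota'\|)}$, and that factor need not be $1$ under the hypothesis $\alpha_\iota=\alpha_{\iota'}$ alone. The lemma (and the paper's sketch) should be read with this normalisation understood---either by working with the unnormalised Poisson expansion, or by adding the harmless hypothesis $\|\iota\|=\|\iota'\|$, which holds in all the applications that follow. Your ``only if'' argument is fine modulo the same caveat.
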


It is not hard to prove this lemma, e.g., by discretizing the interval $[0,\beta]$ and by comparing the sums on both sides. An immediate consequence is that Poisson point processes can be composed as follows.

\begin{lemma}
\label{lem add intensities}
Let $\iota, \iota'$ be intensities $\caP \bigl( \{-S,\dots,S\}^{4} \bigr) \to \bbR_{+}$. Then
\[
\int\dd\rho_{\iota}(\xi) \int\dd\rho_{\iota'}(\xi') \sum_{\sigma \in \Sigma(\xi \cup \xi')} F(\sigma) = \int\dd\rho_{\iota+\iota'}(\xi) \sum_{\sigma \in \Sigma(\xi)} F(\sigma)
\]
for any $F \in \caC(\Sigma)$.
\end{lemma}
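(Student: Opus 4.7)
The plan is to reduce the statement to the superposition property for independent Poisson point processes, combined with the characterization of Lemma \ref{lem equiv intensities}. The Poisson processes $\rho_{\iota}$ and $\rho_{\iota'}$ live on $\caE \times [0,\beta]$ with marks in $\caP(\{-S,\dots,S\}^{4})$; their ``points'' carry a location $(x,y,t)$ together with the specification $A$. The standard superposition theorem for Poisson point processes asserts that if $\xi \sim \rho_{\iota}$ and $\xi' \sim \rho_{\iota'}$ are independent, then $\xi \cup \xi'$ is distributed as $\rho_{\iota + \iota'}$, simply because intensity measures add under independent superposition.

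Concretely, I would first observe that $\xi$ and $\xi'$ almost surely have disjoint time-supports, since $[0,\beta]$ carries no atoms and each process has at most countably many marked points. Consequently $\xi \cup \xi'$ is almost surely a well-defined marked point configuration on $\caE \times [0,\beta]$ of the same type as a realization of $\rho_{\iota+\iota'}$, with each point still carrying its original specification $A$. Applying the superposition theorem to the bounded measurable functional
\be
g(\zeta) = \sum_{\sigma \in \Sigma(\zeta)} F(\sigma)
\ee
yields the claim, provided one checks that $\Sigma(\xi \cup \xi')$ is literally the compatibility set associated with the marked configuration $\xi \cup \xi'$ viewed as a realization of $\rho_{\iota+\iota'}$. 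This is built into the definition: $\sigma \in \Sigma(\xi \cup \xi')$ iff the transition at every point of $\xi \cup \xi'$ lies in the specified $A$, and $\sigma$ is constant elsewhere in $t$ --- exactly the compatibility condition for the superposed process.

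An alternative route, which I would mention for robustness, uses Lemma \ref{lem equiv intensities} directly. The left-hand side defines a linear functional on $\caC(\Sigma)$ arising from the ``doubled'' Poisson construction; by inspecting how a transition $\binom{a\; b}{c\; d}$ can be produced, its atomic composition is
\be
\alpha_{\text{LHS}}\Bigl( \transit{a}{b}{c}{d} \Bigr) = \sum_{A \ni \binom{a\; b}{c\; d}} \iota(A) \;+\; \sum_{A \ni \binom{a\; b}{c\; d}} \iota'(A) = \alpha_{\iota+\iota'}\Bigl( \transit{a}{b}{c}{d} \Bigr),
\ee
which matches the atomic composition of $\rho_{\iota+\iota'}$. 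Lemma \ref{lem equiv intensities} then concludes.

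The only genuine point to watch is the null-event that two points of $\xi$ and $\xi'$ share the same time coordinate on the same edge; at such a hypothetical collision the interpretation of $\xi \cup \xi'$ as a single marked configuration would be ambiguous. I expect this to be the main (and only) obstacle, and it is dispatched by noting that the product measure $\rho_{\iota} \times \rho_{\iota'}$ assigns probability zero to this event, so both sides of the identity are unaffected and the superposition argument applies without modification.
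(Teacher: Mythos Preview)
Your proposal is correct. Your ``alternative route'' via atomic compositions is precisely the paper's own one-line argument: the paper simply observes that $\alpha_{\iota} + \alpha_{\iota'} = \alpha_{\iota+\iota'}$ and invokes Lemma~\ref{lem equiv intensities}. Your primary route through the superposition theorem for independent Poisson point processes is an equally valid and slightly more direct alternative that avoids appealing to Lemma~\ref{lem equiv intensities} at all; it trades the discretization implicit in that lemma for a standard probabilistic fact, and your handling of the null collision event is the right cleanup.
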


Indeed, this follows from $\alpha_{\iota} + \alpha_{\iota'} = \alpha_{\iota+\iota'}$. Next, we relate the present setting with the random loops of the previous subsection.

\begin{lemma}
\label{lem on y est presque}
Assume that $\theta=2S+1$, and let the intensity $\iota$ be defined as follows:
\[
\iota \Bigl( \Bigl\{ \transit{a}{b}{b}{a} \Bigr\}_{a,b \in \{-S,\dots,S\}} \Bigr) = u, \qquad \iota \Bigl( \Bigl\{ \transit{b}{b}{a}{a} \Bigr\}_{a,b \in \{-S,\dots,S\}} \Bigr) = 1-u.
\]
We set $\iota(A)=0$ otherwise. Then
\[
 Y^{(u)}_{\theta}(\beta,\Lambda) = \int\dd\rho_{\iota}(\xi) \sum_{\sigma \in \Sigma(\xi)} 1
\]
and
\[
\tfrac13 S(S+1) \bbP(E_{x,y,t}) = \frac1{Y_{\theta}^{(u)}(\beta,\Lambda)} \int\dd\rho_{\iota}(\xi) \sum_{\sigma \in \Sigma(\xi)} \sigma_{x,0} \sigma_{y,t}.
\]
\end{lemma}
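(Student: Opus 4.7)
My plan is to reduce the space-time spin configuration setup to the loop model by first identifying the Poisson process $\rho_{\iota}$ with the cross/bar process $\rho_{u}$, and then showing that the compatibility constraint on $\sigma$ is exactly the constraint that $\sigma$ is constant along each loop of $\caL(\omega)$.

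First, I would observe that the intensity $\iota$ has only two nonzero values, on disjoint ``types'' of transitions, so $\rho_{\iota}$ decomposes as the superposition of two independent Poisson point processes on $\caE\times[0,\beta]$: one of rate $u$ labeling ``cross'' transitions (those forcing $\sigma_{x,t+}=\sigma_{y,t-}$, $\sigma_{y,t+}=\sigma_{x,t-}$) and one of rate $1-u$ labeling ``double bar'' transitions (those forcing $\sigma_{x,t\pm}=\sigma_{y,t\pm}$ separately for $\pm$). Thus, as a measure on configurations of marked points in $\caE\times[0,\beta]$, $\rho_{\iota}$ coincides with the cross-and-bar process $\rho_{u}$ of Section \ref{sec loops}, and I will identify a realization $\xi$ of $\rho_{\iota}$ with the corresponding $\omega\in\Omega$.

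The second and main step is to show that, under this identification, $\sigma\in\Sigma(\xi)$ if and only if $\sigma$ is constant on each loop of $\caL(\omega)$. Recall from Section \ref{sec loops} how a loop is traced: at a cross one goes across and keeps the same vertical direction, at a double bar one goes across and reverses direction. For a cross at $(\{x,y\},t)$, the condition $\sigma_{x,t+}=\sigma_{y,t-}$ is exactly the statement that the spin value carried by the upward-pointing piece of the loop just above the cross on site $x$ equals the spin value carried by the upward-pointing piece on site $y$ just below the cross; similarly for the partner equality. For a double bar, the loop reverses direction when crossing, so the matching constraint is that the spin is common to the two upward (respectively downward) segments, which is precisely $\sigma_{x,t-}=\sigma_{y,t-}$ and $\sigma_{x,t+}=\sigma_{y,t+}$. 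Checking that these local constraints concatenate into global constancy along each loop is essentially a geometric bookkeeping exercise, and is the step I expect to be the most tedious part of the argument. Once it is done, the loops of $\caL(\omega)$ are independent ``colorable'' objects with $2S+1$ admissible values each, so
\be
|\Sigma(\omega)| = (2S+1)^{|\caL(\omega)|} = \theta^{|\caL(\omega)|},
\ee
and integrating over $\omega$ gives the partition function identity.

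For the correlation identity, I would condition on $\omega$ and sum over $\sigma\in\Sigma(\omega)$ by independently choosing a spin value $a_{\gamma}\in\{-S,\dots,S\}$ on each loop $\gamma$. The product $\sigma_{x,0}\sigma_{y,t}$ equals $a_{\gamma_{1}}a_{\gamma_{2}}$, where $\gamma_{1},\gamma_{2}$ are the loops through $(x,0)$ and $(y,t)$. If $\gamma_{1}\neq\gamma_{2}$ (i.e.\ $\omega\notin E_{x,y,t}$), the sum factorizes and vanishes because $\sum_{a=-S}^{S}a=0$. If $\gamma_{1}=\gamma_{2}$, then $a_{\gamma_{1}}=a_{\gamma_{2}}$ and by the identity \eqref{somme des carres} the sum over that loop contributes $\tfrac13 S(S+1)(2S+1)$, while the other $|\caL(\omega)|-1$ loops contribute $(2S+1)^{|\caL(\omega)|-1}$. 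Collecting the factors yields
\be
\int\!\dd\rho_{\iota}(\xi)\sum_{\sigma\in\Sigma(\xi)}\sigma_{x,0}\sigma_{y,t} \;=\; \tfrac13 S(S+1)\int\!\dd\rho_{u}(\omega)\,\bbone_{E_{x,y,t}}(\omega)\,\theta^{|\caL(\omega)|},
\ee
and dividing by $Y_{\theta}^{(u)}(\beta,\Lambda)$ gives the stated formula for $\tfrac13 S(S+1)\bbP(E_{x,y,t})$.
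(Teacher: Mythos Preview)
Your proof is correct and follows essentially the same approach as the paper. The paper's own argument is a one-liner: it simply observes that the two specifications in $\iota$ are precisely the constraints imposed by crosses and double bars, so that $\rho_{\iota}$ coincides with $\rho_{u}$ and $\Sigma(\xi)=\Sigma(\omega)$, and then invokes the already-established identity \eqref{ca sert plus tard}. You unpack exactly the content behind that citation --- the bijection between compatible $\sigma$'s and loop colorings, the count $|\Sigma(\omega)|=(2S+1)^{|\caL(\omega)|}$, and the computation using $\sum_{a}a=0$ and \eqref{somme des carres} --- so your argument is a more explicit version of the same route rather than a different one.
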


This follows from Eq.\ \eqref{ca sert plus tard}, once we observe that the sets above describe precisely the specifications given by the crosses and the double bars.

\section{Quantum spin systems}
\label{sec quantum}

\subsection{Families of quantum spin systems}

Let $S \in \frac12 \bbN$. The Hilbert space that describes the states of the system is the tensor product
\be
\caH_{\Lambda} = \bigotimes_{x\in\Lambda} \caH_{x},
\ee
where each $\caH_{x}$ is a copy of $\bbC^{2S+1}$. Let $S^{1}, S^{2}, S^{3}$ denote the usual spin operators on $\bbC^{2S+1}$. That is, they are hermitian matrices that satisfy
\begin{align}
\label{Pauli commutations}
&[S^{1}, S^{2}] = \ii S^{3}, \quad [S^{2}, S^{3}] = \ii S^{1}, \quad [S^{3}, S^{1}] = \ii S^{2},\\
&(S^{1})^{2} + (S^{2})^{2} + (S^{3})^{2} = S(S+1) \Id.
\label{square spins}
\end{align}
Recall that each $S^{i}$ has spectrum $\{-S,-S+1,\dots,S\}$. We use the notation $\vec S = (S^{1},S^{2},S^{3})$ and, for $\vec a \in \bbR^{3}$,
\be
S^{\vec a} = \vec a \cdot \vec S = a_{1} S^{1} + a_{2} S^{2} + a_{3} S^{3}.
\ee
These operators are related to the rotations in $\bbR^{3}$ by
\be
\e{-\ii S^{\vec a}} S^{\vec b} \e{\ii S^{\vec a}} = S^{R_{\vec a} \vec b}
\ee
where $R_{\vec a} \vec b$ denotes the vector $\vec b$ rotated around $\vec a$ by the angle $\|\vec a\|$.

Let $S_{x}^{i} = S^{i} \otimes \Id_{\Lambda\setminus\{x\}}$. We use Dirac's notation since it it very convenient. In $\bbC^{2S+1}$, $|a\rangle$ denotes the eigenvector of $S^{3}$ with eigenvalue $a \in \{-S,\dots,S\}$. In $\caH_{x} \otimes \caH_{y}$, $|a,b\rangle$ denotes the eigenvector of both $S^{3}_{x}$ and $S_{y}^{3}$ with respective eigenvalues $a$ and $b$.

We consider the three operators $T_{xy}$, $P_{xy}$, $Q_{xy}$ on $\caH_{\{x,y\}}$ (and their extensions on $\caH_{\Lambda}$ by identifying $T_{xy}$ with $T_{xy} \otimes \Id_{\Lambda\setminus\{x,y\}}$, etc...):
\begin{itemize}
\item $T_{xy}$ is the transposition operator:
\be
T_{xy} |a,b\rangle = |b,a\rangle.
\ee
\item $P_{xy}$ is the operator
\be
P_{xy} = \sum_{a,b=-S}^{S} (-1)^{a-b} |a, -a \rangle \langle b, -b|.
\ee
Equivalently, the matrix coefficients of $P_{xy}$ are given by
\be
\langle a,b| P_{xy} |c,d\rangle = (-1)^{a-c} \delta_{a,-b} \delta_{c,-d}.
\ee
Notice that $\frac1{2S+1} P_{xy}$ is the projector onto the spin singlet.
\item $Q_{xy}$ is identical to $P_{xy}$ except for the signs:
\be
\langle a,b| Q_{xy} |c,d\rangle = \delta_{a,b} \delta_{c,d}.
\ee
\end{itemize}
The first two operators are invariant under all rotations in $\bbR^{3}$, the last operator is invariant under rotations around the second direction of spins, as stated in the following lemma.

\begin{lemma}
\label{lem sym}
For all $\vec a \in \bbR^{3}$, we have
\begin{itemize}
\item[(a)] $\e{-\ii S_{x}^{\vec a} - \ii S_{y}^{\vec a}} T_{xy} \e{\ii S_{x}^{\vec a} + \ii S_{y}^{\vec a}} = T_{xy}$.
\item[(b)] $\e{-\ii S_{x}^{\vec a} - \ii S_{y}^{\vec a}} P_{xy} \e{\ii S_{x}^{\vec a} + \ii S_{y}^{\vec a}} = P_{xy}$.
\end{itemize}
And for $\vec a = a \vec e_{2}, a \in \bbR$; or $\vec a = a_{1} \vec e_{1} + a_{3} \vec e_{3}$ such that $a_{1}^{2} + a_{3}^{2} = \pi^{2}$,
\begin{itemize}
\item[(c)] $\e{-\ii S_{x}^{\vec a} - \ii S_{y}^{\vec a}} Q_{xy} \e{\ii S_{x}^{\vec a} + \ii S_{y}^{\vec a}} = Q_{xy}$.
\end{itemize}
\end{lemma}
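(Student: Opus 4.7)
The plan is to reduce each claim to a statement about the total spin generator $S_x^{\vec a}+S_y^{\vec a}$ acting on a distinguished vector of $\caH_x\otimes\caH_y$: $T_{xy}$ commutes with every symmetric one-body operator; $P_{xy}$ is, up to a constant, the singlet projector; and the rank-one operator $Q_{xy}$ picks up only a unimodular phase under $U\otimes U$ for the $\vec a$ listed, which cancels in the projector.

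\textbf{Parts (a) and (b).} For (a), since $T_{xy}$ swaps the two tensor factors, $T_{xy}(A\otimes\Id+\Id\otimes A)T_{xy}^{-1}=A\otimes\Id+\Id\otimes A$ for every operator $A$ on $\bbC^{2S+1}$; in particular $T_{xy}$ commutes with $S_x^{\vec a}+S_y^{\vec a}$ and hence with $\e{\pm\ii(S_x^{\vec a}+S_y^{\vec a})}$. For (b), I would identify $\tfrac{1}{2S+1}P_{xy}$ as the orthogonal projector onto the SU(2)-singlet
\[
|\phi\rangle=\frac{1}{\sqrt{2S+1}}\sum_{a=-S}^{S}(-1)^{S-a}|a,-a\rangle,
\]
by matching matrix elements (the parity identity $(-1)^{2S-a-b}=(-1)^{a-b}$ holds because $a,b$ are of the same half-integer type as $S$). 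Then it suffices to check that $|\phi\rangle$ is annihilated by $S_x^3+S_y^3$ (immediate from its form) and by $S_x^{+}+S_y^{+}$ (a one-line reindexing using $\sqrt{S(S+1)-a(a+1)}=\sqrt{S(S+1)-(-a)(-a-1)}$), and similarly by $S_x^{-}+S_y^{-}$. Hence $P_{xy}$ commutes with every $S_x^{\vec a}+S_y^{\vec a}$ and with its exponential.

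\textbf{Part (c).} I would fix the Condon--Shortley basis, i.e., the $S^3$-eigenbasis in which $S^1$ is real symmetric, $S^3$ is real diagonal, and $S^2$ is purely imaginary, and write $\overline{(\cdot)}$ for entrywise complex conjugation in this basis, so that $\overline{S^1}=S^1$, $\overline{S^2}=-S^2$, $\overline{S^3}=S^3$. Observe that $Q_{xy}=|\Phi\rangle\langle\Phi|$ with the unnormalized maximally entangled vector $|\Phi\rangle=\sum_a|a,a\rangle$, and that $(U\otimes\overline U)|\Phi\rangle=|\Phi\rangle$ for every unitary $U$ on $\bbC^{2S+1}$ (because $U\overline U^{T}=UU^{-1}=\Id$). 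Consequently, if $U=\lambda\overline U$ with $|\lambda|=1$ for $U=\e{-\ii S^{\vec a}}$, then $(U\otimes U)|\Phi\rangle=\lambda|\Phi\rangle$ and the phase drops out of $(U\otimes U)Q_{xy}(U\otimes U)^\dagger=Q_{xy}$. For $\vec a=a\vec e_2$ the exponent $-\ii a S^2$ is a real matrix, so $U$ is real and $\lambda=1$. For $\vec a=a_1\vec e_1+a_3\vec e_3$ with $a_1^2+a_3^2=\pi^2$, the generator $X=a_1 S^1+a_3 S^3=\pi(\hat a\cdot\vec S)$ is real symmetric with spectrum $\pi\{-S,-S+1,\ldots,S\}$, so $\e{-2\ii X}$ has every eigenvalue equal to $(-1)^{2m}=(-1)^{2S}$, giving $\e{-2\ii X}=(-1)^{2S}\Id$ and hence $U=(-1)^{2S}\overline U$. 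The one piece of bookkeeping is the sign $\lambda=(-1)^{2S}$ for half-integer $S$, which is harmless precisely because $Q_{xy}$ is rank one and the phase cancels.
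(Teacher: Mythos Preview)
Your proof is correct. Parts (a) and (b) follow the same line as the paper: (a) is identical, and for (b) the paper simply invokes that $\tfrac{1}{2S+1}P_{xy}$ is the projector onto the one-dimensional kernel of $(\vec S_x+\vec S_y)^2$, whereas you verify explicitly that the singlet vector is annihilated by the total raising/lowering and $S^3$ operators --- the same fact, just made concrete.

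Part (c), however, is a genuinely different argument. The paper relates $Q_{xy}$ to $P_{xy}$ via the unitary conjugation
\[
Q_{xy}=\e{-\ii\pi S_y^2}P_{xy}\e{\ii\pi S_y^2}=\e{\ii\pi S_y^2}P_{xy}\e{-\ii\pi S_y^2},
\]
and then reduces (c) to (b) together with the identity $\e{-\ii\pi(S_x^3+S_y^3)}\e{-\ii\pi S_y^2}\e{\ii\pi(S_x^3+S_y^3)}=\e{\ii\pi S_y^2}$ (rotation by $\pi$ about the third axis flips $S^2$). Your route instead exploits that $Q_{xy}=|\Phi\rangle\langle\Phi|$ is the rank-one projector onto the maximally entangled vector and uses the ``ricochet'' identity $(U\otimes\overline U)|\Phi\rangle=|\Phi\rangle$, reducing the question to whether $\e{-\ii S^{\vec a}}$ equals its complex conjugate up to a phase in the Condon--Shortley basis. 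This is a clean, self-contained argument that does not lean on (b) and makes the role of the special choices of $\vec a$ transparent: either the exponent is real ($\vec a\parallel\vec e_2$), or the generator is real with spectrum $\pi\{-S,\dots,S\}$ so that $\e{-2\ii X}=(-1)^{2S}\Id$. The paper's approach has the virtue of displaying the explicit unitary equivalence between $P_{xy}$ and $Q_{xy}$, which is used elsewhere; yours isolates exactly the algebraic content needed for this lemma.
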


\begin{proof}
It is not hard to check that, for any operator $A$ on $\bbC^{2S+1}$, we have
\be
\bigl[ A \otimes \Id + \Id \otimes A, T_{xy} \bigr] = 0.
\ee
Then $T_{xy}$ has SU($2S+1$) symmetry and (a) is a special case.

It is not straightforward to check (b) directly. But $\frac1{2S+1} P_{xy}$ is the projector onto the eigenspace of $(\vec S_{x} + \vec S_{y})^{2}$ for the eigenvalue 0. The eigenspace is known to have dimension 1, see e.g.\ \cite{Mes}, and the result follows. Finally, (c) follows from the relations
\be
Q_{xy} = \e{-\ii \pi S_{y}^{2}} P_{xy} \e{\ii \pi S_{y}^{2}} = \e{\ii \pi S_{y}^{2}} P_{xy} \e{-\ii \pi S_{y}^{2}},
\ee
and
\be
\e{-\ii \pi (S_{x}^{3} + S_{y}^{3})} \e{-\ii \pi S_{y}^{2}} \e{\ii \pi (S_{x}^{3} + S_{y}^{3})} = \e{\ii \pi S_{y}^{2}}.
\ee
\end{proof}

We consider two distinct families of Hamiltonians, indexed by the parameter $u \in [0,1]$:
\begin{align}
&H_{\Lambda}^{(u)} = -\sum_{\{x,y\} \in \caE} \Bigl( u T_{xy} + (1-u) Q_{xy} - 1 \Bigr), \label{def fam1} \\
&\tilde H_{\Lambda}^{(u)} = -\sum_{\{x,y\} \in \caE} \Bigl( u T_{xy} + (1-u) P_{xy} - 1 \Bigr). \label{def fam2}
\end{align}
The first family is convenient for the probabilistic representations, and it contains many relevant special cases for $S=\frac12$: The usual Heisenberg ferromagnet and antiferromagnet models, and the XY model. This is explained in Section \ref{sec spin12}. The second family is physically more relevant and the case $S=1$ is treated in details in Section \ref{sec spin1}.

Let $Z^{(u)}(\beta,\Lambda)$ and $\tilde Z^{(u)}(\beta,\Lambda)$ denote the corresponding partition functions:
\begin{align}
&Z^{(u)}(\beta,\Lambda) = \Tr_{\caH_{\Lambda}} \e{-\beta H_{\Lambda}^{(u)}}, \\
&\tilde Z^{(u)}(\beta,\Lambda) = \Tr_{\caH_{\Lambda}} \e{-\beta \tilde H_{\Lambda}^{(u)}}.
\end{align}
The expectation of the operator $A$ in the Gibbs state with Hamiltonian $H_{\Lambda}^{(u)}$ is
\be
\langle A \rangle = \frac1{Z^{(u)}(\beta,\Lambda)} \Tr A \e{-\beta H_{\Lambda}^{(u)}}.
\ee
We also consider the Schwinger functions; given $t \in [0,\beta]$ and two operators $A$ and $B$ on $\caH_{\Lambda}$, let
\be
\langle A; B \rangle(t) = \frac1{Z^{(u)}(\beta,\Lambda)} \Tr A \e{-(\beta-t) H_{\Lambda}^{(u)}} B \e{-t H_{\Lambda}^{(u)}}.
\ee
Notice that $\langle AB \rangle = \langle A; B \rangle(0)$.

\subsection{Random loop representations}

The representation of the Gibbs operator $\e{-\beta H}$ in terms of probabilistic objects, with $H$ a Schr\"odinger operator, goes back to Feynman's approach to the interacting Bose gas. Such representations of lattice systems has allowed many authors to prove the occurrence of phase transitions in anisotropic lattice systems \cite{Gin,Ken,BKU,DFF}. Conlon and Solovej used a random walk representation in order to obtain estimates on the free energy of the $S=\frac12$ Heisenberg ferromagnet \cite{CS}. Their result was improved by T\'oth using a loop representation \cite{Toth1}. A similar representation was introduced by Aizenman and Nachtergaele for the $S=\frac12$ Heisenberg antiferromagnet, and more generally for interactions of the form $P_{xy}$ \cite{AN}. It allows them to relate the quantum spin chain ($d=1$) to two-dimensional Potts and random cluster models. Nachtergaele has proposed extensions for higher spins in \cite{Nac1,Nac2}. With Bachmann, they recently used the representation for the classification of gapped ground states \cite{BN}.

In this section we show that the representations of T\'oth and Aizenman-Nachtergaele can be combined and extended to the families $H_{\Lambda}^{(u)}$ and $\tilde H_{\Lambda}^{(u)}$ defined in Eqs\ \eqref{def fam1} and \eqref{def fam2}. As it turns out, the representation holds for all $S\in\frac12\bbN$ in the case of the family $H_{\Lambda}^{(u)}$ but only for $S\in\bbN$ in the case of the family $\tilde H_{\Lambda}^{(u)}$.

The first identity between quantum system and loop model concerns the partition functions.

\begin{theorem}
For all $u\in[0,1]$, we have
\[
\int_{\Omega} (2S+1)^{|\caL(\omega)|} \dd\rho_{u}(\omega) = \begin{cases} Z^{(u)}(\beta,\Lambda) & \text{for all } S \in \frac12 \bbN, \\ \tilde Z^{(u)}(\beta,\Lambda) & \text{for all } S \in \bbN. \end{cases}
\]
\end{theorem}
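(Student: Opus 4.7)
The plan is to expand $\e{-\beta H_\Lambda^{(u)}}$ and $\e{-\beta \tilde H_\Lambda^{(u)}}$ in a time-ordered Dyson--Duhamel series, recognise the resulting sum as an integral against the Poisson measure $\rho_u$, and then evaluate the trace in the $S^3$-eigenbasis so that the loop structure of Section~\ref{sec loops} appears. Using $H_\Lambda^{(u)}=|\caE|\,\Id-\sum_{\{x,y\}\in\caE}(uT_{xy}+(1-u)Q_{xy})$ and expanding the exponential in the off-diagonal part produces a sum over marked point configurations on $\caE\times[0,\beta]$ in which at each point we insert either a ``cross'' $T_e$ of weight $u$ or a ``double bar'' $Q_e$ (resp.\ $P_e$ in the tilde family) of weight $1-u$. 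The Poisson factor $\e{-\beta|\caE|}$ supplies the correct normalisation and identifies this series with the integral
\[
Z^{(u)}(\beta,\Lambda)=\int_\Omega \Tr\bigl[V_{k(\omega)}(\omega)\cdots V_1(\omega)\bigr]\,\dd\rho_u(\omega),
\]
and analogously for $\tilde Z^{(u)}$.

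I then insert resolutions of the identity $\sum_a|a\rangle\langle a|$ between every pair of consecutive operators (and close the cycle at the trace). The matrix elements read off the compatibility conditions on the space-time spin configuration $\sigma\in\Sigma$: a cross $T_{xy}$ enforces $\sigma_{x,t+}=\sigma_{y,t-}$ and $\sigma_{y,t+}=\sigma_{x,t-}$ (spin transported through the cross); a $Q$-bar enforces $\sigma_{x,t\pm}=\sigma_{y,t\pm}$ (spin equal on each horizontal pair); and a $P$-bar enforces $\sigma_{x,t\pm}=-\sigma_{y,t\pm}$ together with an extra phase $(-1)^{\sigma_{x,t+}-\sigma_{x,t-}}$. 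In all three cases the topological pairing of legs at the event is exactly that prescribed by the loop decomposition of Section~\ref{sec loops}, so a compatible $\sigma$ is determined by a single label per loop (with a sign flip across each $P$-bar crossing). For the family $H_\Lambda^{(u)}$ the spin is constant on every loop; summing $2S+1$ labels per loop yields $\Tr[V_{k(\omega)}(\omega)\cdots V_1(\omega)]=(2S+1)^{|\caL(\omega)|}$ for \emph{every} $S\in\tfrac12\bbN$, which proves the first identity after integration against $\dd\rho_u$.

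For the tilde family the same derivation works provided (i) each loop admits a consistent spin assignment, and (ii) the total phase coming from the $P$-bars equals $+1$. For (i), going once around a loop $\ell$ the spin flips at every traversal of a $P$-bar, so consistency requires that the number $n_\ell$ of such traversals be even; bars touching $\ell$ on all four legs contribute $2$ to $n_\ell$ and bars with only the below-pair or only the above-pair on $\ell$ contribute $1$ each, and the latter count must therefore be even. For (ii), parametrise the spin on $\ell$ as $s_\ell\,\eps_\ell(x,t)$ with a loop label $s_\ell\in\{-S,\ldots,S\}$ and a local sign $\eps_\ell\in\{\pm1\}$ that records the flips. When $S\in\bbN$, the identity $(-1)^{s\eps}=(-1)^s$ (valid for every $\eps=\pm1$ and every integer $s$) makes each $P$-bar phase factor as $(-1)^{s_{\ell_a}}(-1)^{s_{\ell_b}}$, where $\ell_a$ and $\ell_b$ are the two loops meeting the bar (possibly equal). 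Regrouping by loops, the total phase becomes $\prod_\ell(-1)^{s_\ell n_\ell}=1$ since each $n_\ell$ is even, and the $2S+1$ labels per loop again give $(2S+1)^{|\caL(\omega)|}$. The main subtlety, and the source of the restriction $S\in\bbN$, is precisely this phase cancellation: for half-integer $S$ the identity $(-1)^{s\eps}=(-1)^s$ fails, the per-bar phases no longer factor in terms of loop labels, and non-trivial signs survive.
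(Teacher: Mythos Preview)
Your proof is correct and follows essentially the same route as the paper. The only cosmetic difference is that you use the Dyson/Duhamel time-ordered expansion where the paper uses Trotter discretisation; both produce the Poisson point process $\rho_u$ with the operators $T_{xy}$ and $Q_{xy}$ (resp.\ $P_{xy}$) inserted at the marks, and the trace in the $S^3$-basis then yields the loop count. Your treatment of the $P$-bar phases is the paper's argument in slightly different words: the paper factors $(-1)^{\sigma_{x,t-}-\sigma_{x,t+}}$ into one $(-1)^{\sigma}$ per loop leg and notes that each loop collects an even number of such factors (because every bar traversal reverses the vertical direction, and a closed loop has evenly many reversals), so the product is $1$ for integer $S$. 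One small point of presentation: you phrase ``consistency requires $n_\ell$ even'' as if it were a constraint to be checked, but in fact $n_\ell$ is \emph{automatically} even by the topological observation just mentioned; it would be cleaner to state this directly rather than derive the parity of single-touch bars from it.
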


\begin{proof}
Using Trotter's product formula,
\be
\label{ca se developpe}
\begin{split}
\Tr \e{-\beta H_{\Lambda}^{(u)}} &= \lim_{N\to\infty} \Tr \Bigl( \prod_{\{x,y\} \in \caE} \e{\frac\beta N (u T_{xy} + (1-u) Q_{xy} - 1)} \Bigr)^{N} \\
&= \lim_{N\to\infty} \Tr \Bigl( \prod_{\{x,y\}\in\caE} \Bigl[ 1 - \tfrac\beta N + \tfrac\beta N \bigl( u T_{xy} + (1-u) Q_{xy} \bigr) \Bigr] \Bigr)^{N} \\
&= \lim_{N\to\infty} \sum_{\sigma^{(1)}, \dots, \sigma^{(N)}} \prod_{i=1}^{N} \langle \sigma^{(i)}| \prod_{\{x,y\} \in \caE} \bigl[ 1 - \tfrac\beta N + \tfrac\beta N \bigl( u T_{xy} + (1-u) Q_{xy} \bigr) \bigr] |\sigma^{(i+1)}\rangle.
\end{split}
\ee
The sum is over $\sigma^{(i)} \in \{-S,\dots,S\}^{\Lambda}$ and we set $\sigma^{(N+1)} \equiv \sigma^{(1)}$. The transposition operator $T_{xy}$ yields the specification
\[
\Bigl\{ \transit{a}{b}{b}{a} \Bigr\}_{a,b \in \{-S,\dots,S\}}
\]
and the operator $Q_{xy}$ yields
\[
\Bigl\{ \transit{b}{b}{a}{a} \Bigr\}_{a,b \in \{-S,\dots,S\}}
\]
In the limit $N\to\infty$ we obtain the expression of Lemma \ref{lem on y est presque}. This proves the claim for $Z^{(u)}(\beta,\Lambda)$.

The proof for $\tilde Z^{(u)}(\beta,\Lambda)$ is similar, except for two differences:
\begin{itemize}
\item In order for the product of matrix elements to differ from 0, the spin in the loop must change sign when the loop changes its vertical direction (that is, at double bars).
\item The matrix elements of double bars are
\[
(-1)^{\sigma_{x_{j},t_{j}-} - \sigma_{x_{j},t_{j}+}}
\]
and they can be factorized with respect to the loops: $(-1)^{\sigma_{x_{j},t_{j}-}}$ for the loop coming from below, and $(-1)^{\sigma_{x_{j},t_{j}+}}$ for the loop coming from above. If the spin of the loop is even, the factors are all equal to 1. If the spin is odd, the factors are all equal to $-1$ and their product is 1 because there is an even number of them.
\end{itemize}
\end{proof}

The situation with half-integer spins is very different. Some loops receive negative weights, such as a loop with two neigboring sites and two transitions, one cross and one double bar (see Fig.\ \ref{fig bad loop}). The representation therefore involves signed measures (still real) and we lose the probabilistic nature. It is not clear whether such representations can be useful.

\begin{centering}
\bfig
\includegraphics{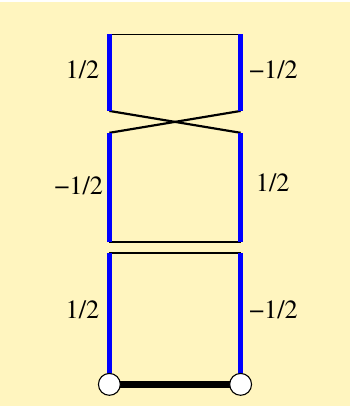}%
\caption{A ``bad loop'' on two vertices for the representation of the family $\tilde H_{\Lambda}^{(u)}$ with half-integer spin. The factor due to the double bar is $(-1)^{\frac12 - (-\frac12)} = -1$.}
\label{fig bad loop}
\efig
\end{centering}

We turn now to correlation functions. It is remarkable that the spin-spin correlations of the quantum models can be expressed in terms of properties of loops. Recall the events $E_{x,y,t}, E_{x,y,t}^{\pm}$ introduced above (with the help of Eqs \eqref{pour E1} and \eqref{pour E2}). We state first the results for the family $H_{\Lambda}^{(u)}$; the results for $\tilde H_{\Lambda}^{(u)}$ are postponed until Theorem \ref{thm integer spin}.

\begin{theorem}
\label{thm corr}
Consider the Hamiltonian $H_{\Lambda}^{(u)}$ with $S \in \frac12 \bbN$ and $u \in [0,1]$. Correlations in the spin directions 1 and 3 are given by
\[
\langle S_{x}^{1} ; S_{y}^{1} \rangle(t) = \langle S_{x}^{3} ; S_{y}^{3} \rangle(t) = \tfrac13 S(S+1) \bbP(E_{x,y,t}).
\]
Correlations in the spin direction 2 are given by
\[
\langle S_{x}^{2} ; S_{y}^{2} \rangle(t) = \tfrac13 S(S+1) \bigl[ \bbP(E_{x,y,t}^{+}) - \bbP(E_{x,y,t}^{-}) \bigr].
\]
\end{theorem}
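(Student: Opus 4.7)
My plan is to run the Trotter expansion used in the proof of the previous theorem with operator insertions at $(x,0)$ and $(y,t)$, and then identify the resulting expressions case by case. For $\langle S^{3}_{x};S^{3}_{y}\rangle(t)$, since $S^{3}$ is diagonal in the basis of \eqref{ca se developpe}, the two insertions simply produce a factor $\sigma_{x,0}\sigma_{y,t}$ inside the sum over compatible space-time spin configurations. The resulting expression matches the second identity of Lemma \ref{lem on y est presque}, which immediately gives $\tfrac13 S(S+1)\bbP(E_{x,y,t})$.

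For $\langle S^{1}_{x};S^{1}_{y}\rangle(t)$ I would use symmetry. By Lemma \ref{lem sym}(a) and (c) the global rotation $U=\prod_{x\in\Lambda}\e{-\ii(\pi/2) S^{2}_{x}}$ commutes with every summand $T_{xy}$ and $Q_{xy}$ of $H^{(u)}_{\Lambda}$, hence with $H^{(u)}_{\Lambda}$ itself. Since $U S^{3}_{x} U^{-1}=S^{1}_{x}$ (rotation by $\pi/2$ about $\vec e_{2}$ sends $\vec e_{3}$ to $\vec e_{1}$), cyclicity of the trace together with $U^{-1}\e{-sH^{(u)}_{\Lambda}}U=\e{-sH^{(u)}_{\Lambda}}$ yields $\langle S^{1}_{x};S^{1}_{y}\rangle(t)=\langle S^{3}_{x};S^{3}_{y}\rangle(t)$.

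The substantive case is $\langle S^{2}_{x};S^{2}_{y}\rangle(t)$: no symmetry of $H^{(u)}_{\Lambda}$ maps $S^{2}$ to $S^{3}$, so I would compute directly from the Trotter expansion. The insertions of $S^{2}_{x}$ and $S^{2}_{y}$ produce off-diagonal matrix elements $\langle\sigma_{x,0+}|S^{2}|\sigma_{x,0-}\rangle$ and $\langle\sigma_{y,t+}|S^{2}|\sigma_{y,t-}\rangle$; everywhere else the spin must still be constant on each loop of $\caL(\omega)$. Consequently, a loop avoiding both $(x,0)$ and $(y,t)$ contributes a factor $2S+1$ after summing its spin, a loop containing exactly one of the two points contributes $\Tr S^{2}=0$, and only realizations in $E_{x,y,t}$ survive.

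On $\omega\in E_{x,y,t}$ the two insertions split the distinguished loop into two arcs along which the spin is constant, with values $b$ and $c$ say. The identification of the arcs is what separates $E^{+}$ from $E^{-}$: starting just above $(x,0)$ and following the loop, one reaches $(y,t-)$ before $(y,t+)$ on $E^{+}$, and $(y,t+)$ before $(y,t-)$ on $E^{-}$. On $E^{+}$ the two matrix elements become $\langle c|S^{2}|b\rangle\langle b|S^{2}|c\rangle=|\langle b|S^{2}|c\rangle|^{2}$, whereas on $E^{-}$ they both equal $\langle b|S^{2}|c\rangle$ and multiply to $\langle b|S^{2}|c\rangle^{2}$. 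Since $S^{2}=\tfrac1{2\ii}(S^{+}-S^{-})$ is purely imaginary in the $S^{3}$-eigenbasis, one has $\langle b|S^{2}|c\rangle^{2}=-|\langle b|S^{2}|c\rangle|^{2}$. Summing over $b,c$ then yields $+\Tr(S^{2})^{2}=\tfrac13 S(S+1)(2S+1)$ on $E^{+}$ and $-\Tr(S^{2})^{2}$ on $E^{-}$; after multiplying by the $(2S+1)^{|\caL(\omega)|-1}$ contribution of the remaining loops and dividing by $Z^{(u)}(\beta,\Lambda)$ this gives $\tfrac13 S(S+1)\bigl[\bbP(E^{+}_{x,y,t})-\bbP(E^{-}_{x,y,t})\bigr]$. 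The principal obstacle is precisely this final arc-ordering step: one must match the vertical-direction conventions of \eqref{pour E1}--\eqref{pour E2} with the order in which the two sides of $(y,t)$ are encountered when following the loop from $(x,0+)$, since it is this order that is responsible for the sign discrepancy between $E^{+}$ and $E^{-}$.
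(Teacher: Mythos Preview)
Your proposal is correct and follows essentially the same route as the paper: Trotter expansion with operator insertions, symmetry for the $S^{1}$--$S^{3}$ identification, and a direct loop computation for $S^{2}$ in which only realizations in $E_{x,y,t}$ survive. The one cosmetic difference is in how the sign on $E^{+}$ versus $E^{-}$ is extracted: the paper writes $S^{1}=\tfrac12(S^{+}+S^{-})$ and $S^{2}=\tfrac1{2\ii}(S^{+}-S^{-})$ and compares the two insertions term by term (one $S^{+}$ and one $S^{-}$ on $E^{+}$, two of the same kind on $E^{-}$), whereas you use directly that the matrix elements of $S^{2}$ are purely imaginary so that $\langle b|S^{2}|c\rangle^{2}=-|\langle b|S^{2}|c\rangle|^{2}$; these are equivalent observations, and your version has the small advantage of making explicit both why loops containing a single insertion vanish ($\Tr S^{2}=0$) and why the surviving sum equals $\Tr(S^{2})^{2}=\tfrac13 S(S+1)(2S+1)$.
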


\begin{proof}
Equality of correlations in directions 1 and 3 is clear by symmetry. We have
\be
\begin{split}
\langle S_{x}^{3} ; S_{y}^{3} \rangle(t) &= \frac1{Z^{(u)}(\beta,\Lambda)} \int\dd\rho_{u}(\omega) \sum_{\sigma\in\Sigma(\omega)} \sigma_{x,0} \sigma_{y,t} \\
&= \frac1{Z^{(u)}(\beta,\Lambda)} \int_{E_{x,y,t}} \dd\rho_{u}(\omega) (2S+1)^{|\caL(\omega)|} \Bigl( \frac1{2S+1} \sum_{a=-S}^{S} a^{2} \Bigr).
\end{split}
\ee
The result follows from the identity Eq.\ \eqref{somme des carres}.

For the correlations in the direction 2, we write a similar expansion as in Eq.\ \eqref{ca se developpe} but with additional factors $\langle \sigma_{x,0-} | S_{x}^{2} | \sigma_{x,0+} \rangle$ and $\langle \sigma_{y,t-} | S_{y}^{2} | \sigma_{y,t+} \rangle$. These factors force $(x,0)$ and $(y,t)$ to be in the same loop.  Now recall that $S^{2}= \frac1{2\ii} (S^{+}-S^{-})$ while $S^{1} = \frac12 (S^{+} + S^{-})$. If $\omega \in E_{x,y,t}^{+}$, there is one factor with $S^{+}$ and one factor with $S^{-}$, resulting in $-\ii^{2}$ times the same contribution as for $S^{1}$. If $\omega \in E_{x,y,t}^{-}$, on the other hand, both factors involve $S^{+}$ or both involve $S^{-}$, and the contribution is $\ii^{2}$ times that of $S^{1}$.
\end{proof}

Macroscopic loops are related to two physical properties of the system, namely spontaneous magnetization and magnetic susceptibility. This is stated in the following theorem.

\begin{theorem}\hfill
\label{thm a maintenant un label}
\begin{itemize}
\item[(a)] Macroscopic loops and magnetic susceptibility:
\[
\bbE\Bigl( \sum_{x\in\Lambda} L_{(x,0)} \Bigr) = \frac3{\beta S(S+1)} \frac{\partial^{2}}{\partial h^{2}} \log \Tr \e{-\beta H_{\Lambda}^{(u)} + \beta h \sum_{x\in\Lambda} S_{x}^{3}} \bigg|_{h=0}.
\]
\item[(b)] Macroscopic loops and the expectation of the square of the magnetization: There exists a constant $K$ (it depends on $S$ but not on $\beta,u,\Lambda,\caE$) such that
\[
\begin{split}
\frac{3\beta}{S(S+1)} \sum_{x,y\in\Lambda} \langle S_{x}^{3} S_{y}^{3} \rangle &- K \beta \sqrt{(1-u) |\caE|} \sqrt{\bbE \Bigl( \sum_{x\in\Lambda} L_{(x,0)} \Bigr)} \\
&\leq \bbE \Bigl( \sum_{x\in\Lambda} L_{(x,0)} \Bigr) \leq \frac{3\beta}{S(S+1)} \sum_{x,y\in\Lambda} \langle S_{x}^{3} S_{y}^{3} \rangle.
\end{split}
\]
\end{itemize}
\end{theorem}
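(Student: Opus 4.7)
Introduce the Duhamel two-point function $(A, B)_\beta := \frac{1}{\beta}\int_0^\beta \langle A; B\rangle(t)\, dt$, which serves as the bridge between Gibbs expectations and the loop expectation via Theorem \ref{thm corr}. Consider the perturbed partition function $Z(h) = \Tr \e{-\beta H^{(u)}_\Lambda + \beta h M^3}$ with $M^3 = \sum_{x\in\Lambda} S^3_x$. Duhamel's expansion of $\partial_h^2 \log Z$ at $h=0$ yields $\beta^2(M^3, M^3)_\beta - \beta^2\langle M^3\rangle^2$. The term $\langle M^3\rangle$ vanishes because the Gibbs state is invariant under the rotation by $\pi$ around $\vec e_2$ (Lemma \ref{lem sym}(a) for $T_{xy}$ and (c) for $Q_{xy}$), which reverses $S^3$. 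Applying Theorem \ref{thm corr}, namely $\langle S^3_x; S^3_y\rangle(t) = \frac13 S(S+1)\bbP(E_{x,y,t})$, and the Fubini identity $L_{(x,0)} = \sum_y \int_0^\beta \bbone_{E_{x,y,t}}\, dt$ produces
\[
\frac{\partial^2}{\partial h^2}\log Z(h)\bigg|_{h=0} = \beta^2(M^3, M^3)_\beta = \frac{\beta S(S+1)}{3}\, \bbE\Bigl(\sum_{x\in\Lambda} L_{(x,0)}\Bigr),
\]
which rearranges to the identity in (a) and simultaneously yields $\bbE(\sum_x L_{(x,0)}) = \frac{3\beta}{S(S+1)}(M^3, M^3)_\beta$.

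\textbf{Upper bound in (b).} The elementary Duhamel bound $(A, A)_\beta \leq \langle A^2\rangle$, valid for self-adjoint $A$ and immediate from the spectral representation together with $\tanh(x)/x \leq 1$, closes this side with $A = M^3$.

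\textbf{Lower bound in (b).} The strategy is to control $\langle (M^3)^2\rangle - (M^3, M^3)_\beta$ by the double commutator $\langle[M^3, [H^{(u)}_\Lambda, M^3]]\rangle$. Diagonalizing $H^{(u)}_\Lambda |\psi_n\rangle = E_n|\psi_n\rangle$ and introducing the positive weights $\mu_{nm} = Z^{-1}|\langle\psi_n|M^3|\psi_m\rangle|^2(\e{-\beta E_n} + \e{-\beta E_m})$ and $x_{nm} = \beta(E_m - E_n)/2$, a direct computation gives
\[
\langle (M^3)^2\rangle - (M^3, M^3)_\beta = \tfrac12 \sum_{n,m}\mu_{nm}\bigl(1 - \tfrac{\tanh x_{nm}}{x_{nm}}\bigr),
\]
\[
(M^3, M^3)_\beta = \tfrac12\sum_{n,m}\mu_{nm}\tfrac{\tanh x_{nm}}{x_{nm}}, \qquad \langle[M^3, [H^{(u)}_\Lambda, M^3]]\rangle = \tfrac{2}{\beta}\sum_{n,m}\mu_{nm}\, x_{nm}\tanh x_{nm}.
\]
Cauchy--Schwarz with respect to the weight $\mu_{nm}\tanh(x_{nm})/x_{nm}$, combined with the pointwise estimate $(1 - \tanh x/x)^2 \leq \tanh^2 x$ for $x \geq 0$ (equivalent to $\tanh(x)(1 + 1/x) \geq 1$, elementary), yields
\[
\langle (M^3)^2\rangle - (M^3, M^3)_\beta \leq \tfrac12 \sqrt{\beta\,\langle[M^3, [H^{(u)}_\Lambda, M^3]]\rangle\,(M^3, M^3)_\beta}.
\]

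\textbf{Finishing.} Since $T_{xy}$ commutes with $S^3_x + S^3_y$, only $Q_{xy}$ contributes: $\langle[M^3, [H^{(u)}_\Lambda, M^3]]\rangle = (1-u)\sum_{\{x,y\}\in\caE}\langle[S^3_x + S^3_y, [S^3_x + S^3_y, Q_{xy}]]\rangle$, and each summand is a two-site operator whose norm depends only on $S$; hence the total is bounded by $C_S(1-u)|\caE|$. Substituting this estimate and the identification $(M^3, M^3)_\beta = \frac{S(S+1)}{3\beta}\bbE(\sum_x L_{(x,0)})$ into the Cauchy--Schwarz inequality produces the desired lower bound with an $S$-dependent constant $K$. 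The main subtlety is the choice of weights in Cauchy--Schwarz: the naive weighting would yield $\sqrt{\langle (M^3)^2\rangle}$ on the right, which cannot be translated into $\sqrt{\bbE(\sum_x L_{(x,0)})}$; weighting by the Duhamel measure together with the pointwise estimate $(1 - \tanh x/x)^2 \leq \tanh^2 x$ is precisely what makes $\sqrt{(M^3, M^3)_\beta}$ appear in the final bound.
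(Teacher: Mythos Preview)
Your proof is correct and follows essentially the same route as the paper: Duhamel expansion plus symmetry for part (a), the convexity/spectral bound $(A,A)_\beta \leq \langle A^2\rangle$ for the upper bound in (b), and the Falk--Bruch inequality combined with the observation that $T_{xy}$ commutes with $S_x^3+S_y^3$ (so the double commutator is $O((1-u)|\caE|)$) for the lower bound. The only difference is cosmetic: the paper quotes Falk--Bruch as a known inequality, whereas you re-derive it via the spectral decomposition and the pointwise estimate $(1-\tanh x/x)^2 \leq \tanh^2 x$, which is in fact the standard proof of that inequality.
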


The claim (a) implies that, for regular domains such as cubes, the infinite volume free energy is non analytic when the average length of the loops diverges. In the claim (b), the difference between the upper and lower bounds is smaller than the other terms when the loops have infinite average length.

\begin{proof}
Let us introduce the Duhamel two-point function:
\be
\label{def Duhamel}
(A,B)_{\rm Duh} = \frac1{Z^{(u)}(\beta,\Lambda)} \int_{0}^{\beta} \dd s \, \Tr A^{*} \e{-s H_{\Lambda}^{(u)}} B \e{-(\beta-s) H_{\Lambda}^{(u)}}.
\ee
Using Duhamel formula, we have
\be
\Tr \e{-\beta H_{\Lambda}^{(u)} + \beta h \sum_{x\in\Lambda} S_{x}^{3}} = Z^{(u)}(\beta,\Lambda) \Bigl[ 1 + \tfrac12 \beta h^{2} \sum_{x,y\in\Lambda} (S_{x}^{3}, S_{y}^{3})_{\rm Duh} + O(h^{4}) \Bigr].
\ee
Notice that the odd powers of $h$ do not contribute because of symmetry (rotation around $S^{2}$ of angle $\pi$). We have
\be
\begin{split}
\bbE\Bigl( \sum_{x\in\Lambda} L_{(x,0)} \Bigr) &= \sum_{x,y\in\Lambda} \int_{0}^{\beta} \bbP(E_{x,y,t}) \dd t \\
&= \frac3{S(S+1)} \sum_{x,y\in\Lambda} \int_{0}^{\beta} \langle S_{x}^{3} ; S_{y}^{3} \rangle(t) \dd t \\
&= \frac{3}{S(S+1)} \sum_{x,y\in\Lambda} (S_{x}^{3} , S_{y}^{3})_{\rm Duh} \\
&= \frac{3}{\beta S(S+1)} \frac{\partial^{2}}{\partial h^{2}} \log \Tr \e{-\beta H_{\Lambda}^{(u)} + \beta h \sum_{x\in\Lambda} S_{x}^{3}} \Big|_{h=0}.
\end{split}
\ee
This proves (a).

Let us recall the following inequalities that relate expectations with respect to Gibbs states and the Duhamel two-point function:
\be
\label{FB}
\begin{split}
\tfrac1\beta (A,A)_{\rm Duh} &\leq \tfrac12 \langle A^{*} A + A A^{*} \rangle \\
&\leq \tfrac12 \sqrt{(A,A)_{\rm Duh}} \sqrt{\langle [A^{*}, [H_{\Lambda}^{(u)},A]]\rangle} + \tfrac1\beta (A,A)_{\rm Duh}.
\end{split}
\ee
The first inequality follows from the convexity of the function $F(s) = \Tr A^{*} \e{-sH_{\Lambda}^{(u)}} A \e{-(\beta-s) H_{\Lambda}^{(u)}}$. The second inequality is more involved and is a consequence of the Falk-Bruch inequality, that was proposed independently in \cite{FB} and \cite{DLS}. Notice that the expectation of the double commutator is always nonnegative because it is equal to $([H_{\Lambda}^{(u)}, A],[H_{\Lambda}^{(u)}, A])_{\rm Duh}$. Using the inequality above with $A = \sum_{x} S_{x}^{3}$, we immediately get the upper bound in (b). We need to deal with the double commutator for the lower bound. It can be calculated, and we will actually need it in Section \ref{sec macroscopic loops}. Here, it is enough to notice that
\be
\bigr[ S_{x}^{3} + S_{y}^{3}, [T_{xy}, S_{x}^{3} + S_{y}^{3}] \bigl] = 0,
\ee
so that
\be
\Bigl\| \sum_{x,y \in \Lambda} [S_{x}^{3}, [H_{\Lambda}^{(u)}, S_{y}^{3}]] \Bigr\| \leq \text{const} (1-u) |\caE|.
\ee
Inserting in the second inequality in \eqref{FB}, we have
\be
\sum_{x,y\in\Lambda} \langle S_{x}^{3} S_{y}^{3} \rangle \leq {\rm const} \sqrt{\bbE\Bigl( \sum_{x\in\Lambda} L_{(x,0)} \Bigr)} \sqrt{(1-u) |\caE|} + \frac{S(S+1)}{3\beta} \bbE\Bigl( \sum_{x\in\Lambda} L_{(x,0)} \Bigr) .
\ee
The lower bound of (b) follows.
\end{proof}

Let us turn to the family of rotation invariant Hamiltonians $\tilde H^{(u)}_{\Lambda}$.

\begin{theorem}
\label{thm integer spin}
For the family $\tilde H_{\Lambda}^{(u)}$ with $S \in \bbN$ and $u\in[0,1]$, we have the following relations between spin and loop correlations:
\begin{itemize}
\item[(a)] $\langle S_{x}^{i} ; S_{y}^{i} \rangle(t) = \frac13 S(S+1) \bigl[ \bbP(E_{x,y,t}^{+}) - \bbP(E_{x,y,t}^{-}) \bigr]$.
\item[(b)] $\langle (S_{x}^{i})^{2} ; (S_{y}^{i})^{2} \rangle(t) - \langle (S_{x}^{i})^{2} \rangle \langle (S_{y}^{i})^{2} \rangle = \frac1{45} S(S+1)(2S-1)(2S+3) \bbP(E_{x,y,t})$.
\end{itemize}
\end{theorem}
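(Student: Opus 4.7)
The strategy is to adapt the Trotter expansion of \eqref{ca se developpe} to $\tilde H_\Lambda^{(u)}$, with $Q_{xy}$ replaced by $P_{xy}$. Since $\tilde H_\Lambda^{(u)}$ is fully SU(2)-invariant by Lemma~\ref{lem sym}(b), it suffices to prove both identities for $i=3$; the other components then follow by rotating the observables inside the trace. The matrix elements of $P_{xy}$ in the $S^3$-basis carry the constraint $\sigma_{x,t\pm}+\sigma_{y,t\pm}=0$ together with the weight $(-1)^{\sigma_{x,t-}-\sigma_{x,t+}}$, so the $N\to\infty$ limit yields the Poisson process $\rho_u$ of crosses and double bars, together with a sum over space-time spin configurations satisfying the $P$-constraint at every bar, each weighted by the product of these bar signs. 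Inserting $S_x^3$ or $(S_x^3)^2$ at $(x,0)$ and $(y,t)$ adds the factor $\sigma_{x,0}\sigma_{y,t}$ or $\sigma_{x,0}^2\sigma_{y,t}^2$ to the integrand.

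The key observation is that the $P$-constraint forces the spin to be constant in absolute value along each loop and to flip sign at every bar the loop crosses; a compatible configuration is therefore uniquely determined by a single value $a_\ell\in\{-S,\dots,S\}$ on each loop $\ell$. For integer $S$ the total bar sign collapses to $\prod_\ell ((-1)^{a_\ell})^{k_\ell}$, where $k_\ell$ is the (necessarily even) number of bars traversed by $\ell$, and therefore equals $+1$. Hence, exactly for $S\in\bbN$, we recover the probabilistic loop measure with weight $(2S+1)^{|\caL(\omega)|}$, and
\[
\langle (S_x^3)^k;(S_y^3)^k\rangle(t)=\frac{1}{\tilde Z^{(u)}(\beta,\Lambda)}\int\dd\rho_u(\omega)\,(2S+1)^{|\caL(\omega)|}\,\bbE_\omega\bigl[\sigma_{x,0}^k\sigma_{y,t}^k\bigr],
\]
with $\bbE_\omega$ the uniform average over $a_\ell\in\{-S,\dots,S\}$ on each loop of $\omega$.

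For (a), the $k=1$ integrand vanishes unless $(x,0)$ and $(y,t)$ lie in a common loop; on $E_{x,y,t}^+$ the number of bars separating them along that loop is even and $\sigma_{x,0}\sigma_{y,t}=a_\ell^2$, whereas on $E_{x,y,t}^-$ it is odd and the product becomes $-a_\ell^2$. Averaging using \eqref{somme des carres} yields the stated identity. For (b) the sign flips are irrelevant since the spins appear squared: in the same-loop case the integrand contributes $\tfrac{1}{2S+1}\sum_a a^4$, while in the different-loops case the integrand factorises into $\langle a^2\rangle^2=\tfrac{1}{9}S^2(S+1)^2$, which is also the value of $\langle(S_x^3)^2\rangle\langle(S_y^3)^2\rangle$. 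Subtracting the disconnected part leaves $\bbP(E_{x,y,t})\bigl(\langle a^4\rangle-\langle a^2\rangle^2\bigr)$, and using the standard identity $\sum_{a=-S}^S a^4=\tfrac{1}{15}S(S+1)(2S+1)(3S^2+3S-1)$ together with $(2S-1)(2S+3)=4S^2+4S-3$ gives
\[
\langle a^4\rangle-\langle a^2\rangle^2=\frac{S(S+1)(2S-1)(2S+3)}{45},
\]
which is the prefactor claimed in the theorem.

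The main conceptual obstacle is the sign accounting at double bars: one must verify that when the spin is forced to flip at every bar visited by a loop, the product of matrix-element signs telescopes loop by loop to $((-1)^a)^k$, which equals $+1$ exactly when $2S$ is even (for half-integer $S$ this product can be negative, as illustrated by Figure~\ref{fig bad loop}). Once this reduction to a genuinely probabilistic representation is in hand, everything else is a direct computation of the second and fourth moments of the uniform spin variable on $\{-S,\dots,S\}$.
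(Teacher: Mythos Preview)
Your proof is correct and follows essentially the same approach as the paper. The paper's argument is terser because the sign accounting at double bars---your observation that the spin flips sign together with the vertical direction, and that the product of bar signs reduces to $\prod_\ell((-1)^{a_\ell})^{k_\ell}=1$ for integer $S$---was already established in the proof of the partition-function identity (Theorem~3.1); the paper then simply invokes ``spin values are identical if the vertical direction in the loop is identical, they are opposite if the direction in the loop is opposite'' and proceeds to the same moment computations you carry out for parts (a) and (b).
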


The claim (a) suggests that correlations such as $\langle S_{x}^{i} S_{y}^{i} \rangle$ decay fast for $u \in (0,1)$. This will be a striking feature of the model with spin $S=1$ which is discussed in Section \ref{sec spin1}. Notice that if $u=1$, $E_{x,y,t}^{-}$ has probability zero. And if $u=0$ and the graph is bipartite, it is not too hard to check that $E_{x,y,t}^{+}$, resp.\ $E_{x,y,t}^{-}$, has probability zero if $x$ and $y$ belong to different sublattices, resp.\ identical sublattices.

\begin{proof}
The model is rotation invariant, so it is enough to prove it for $i=3$.  The correlation function of (a) is non-zero only if a loop connects $(x,0)$ and $(y,t)$. Spin values are identical if the vertical direction in the loop is identical, they are opposite if the direction in the loop is opposite. Hence the difference of probabilities of $E_{x,y,t}^{+}$ and $E_{x,y,t}^{-}$.

For (b), let us first observe that
\be
\langle (S_{x}^{i})^{2} \rangle = \tfrac13 \big\langle (S_{x}^{1})^{2} + (S_{x}^{2})^{2} + (S_{x}^{3})^{2} \big\rangle = \tfrac13 S(S+1).
\ee
Next,
\be
\begin{split}
\langle (S_{x}^{i})^{2} ; (S_{y}^{i})^{2} \rangle(t) &= \frac1{Z^{(u)}(\beta,\Lambda)} \int\dd\rho_{u}(\omega) \sum_{\sigma \in \Sigma(\omega)} \sigma_{x,0}^{2} \sigma_{y,t}^{2} \\
&= \frac1{Z^{(u)}(\beta,\Lambda)} \int_{E_{x,y,t}^{\rm c}} \dd\rho_{u}(\omega) (2S+1)^{|\caL(\omega)|} \biggl( \frac1{2S+1} \sum_{a=-S}^{S} a^{2} \biggr)^{2} \\
&\qquad + \frac1{Z^{(u)}(\beta,\Lambda)} \int_{E_{x,y,t}} \dd\rho_{u}(\omega) (2S+1)^{|\caL(\omega)|} \frac1{2S+1} \sum_{a=-S}^{S} a^{4} \\
&= \bigl( \tfrac13 S(S+1) \bigr)^{2} + \biggl[ \frac1{2S+1} \sum_{a=-S}^{S} a^{4} - \bigl( \tfrac13 S(S+1) \bigr)^{2} \biggr] \bbP(E_{x,y,t}).
\end{split}
\ee
We used the identity \eqref{somme des carres}. The claim follows from the identity
\be
\frac1{2S+1} \sum_{a=-S}^{S} a^{4} - \bigl( \tfrac13 S(S+1) \bigr)^{2} = \tfrac1{45} S(S+1)(2S-1)(2S+3),
\ee
which is valid for all integer and half-integer $S$.
\end{proof}

\section{Decay of correlations in 2D-like graphs}
\label{sec MW}

It is well-known that continuous symmetries cannot be broken in two spatial dimensions. This was first proved by Mermin and Wagner for the quantum Heisenberg model \cite{MW}. The original statement is about the absence of spontaneous magnetization in the lattice $\bbZ^{2}$. Many improvements have been made over the years, in particular the proof by Fr\"ohlich and Pfister that all KMS states are rotation invariant \cite{Pfi,FP1,FP2}. Here we are especially interested in the decay of the two-point correlation function. The first result for the Heisenberg model in $\bbZ^{2}$ is due to Fisher and Jasnow \cite{FJ}. Algebraic decay has been proved by McBryan and Spencer for the classical Heisenberg model, in a short and lucid article that introduces the idea of ``complex rotations'' \cite{MS}. There exists a beautiful extension to quantum systems by Koma and Tasaki \cite{KT}.

The following result is weaker than the one in \cite{KT}, which we did not know back then. The proof may still have its own interest as it is a bit simpler. It is somewhat inspired by \cite{Pfi,FP1,ISV,Nac}. A consequence is the absence of macroscopic loops in 2D-like graphs for $\theta = 2,3,4, \dots$. As the proof relies on the continuous symmetries that are present in the quantum setting, it does not seem possible to extend it to other values of $\theta$.

Let $d(x,y)$ denote the graph distance, i.e., the length of the minimal path that connects $x$ and $y$.

\begin{theorem}
\label{thm MW}
Assume that the graph $(\Lambda,\caE)$ is 2D-like, i.e., there exists a constant $C$ such that for all $x \in \Lambda$ and all integers $k$,
\[
\#\{ y \in \Lambda : d(x,y) = k \} \leq Ck.
\]
Then for all $u \in [0,1]$, $\beta \in [0,\infty)$, $S \in \frac12 \bbN$,
there exists a constant $K$ that depends on the graph only through $C$, independent of $x,y$, such that
\[
\langle S_{x}^{3} S_{y}^{3} \rangle = \tfrac13 S(S+1) \bbP(E_{x,y,0}) \leq \frac{K}{\sqrt{\log(d(x,y)+1)}}.
\]
\end{theorem}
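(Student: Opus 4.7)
The strategy I would follow is a quantum version of the McBryan--Spencer complex-rotation method, implemented through a Pinsker-type relative entropy inequality, in the spirit of \cite{Pfi,FP1,ISV,Nac}. By Lemma \ref{lem sym}(a) and (c), $H_{\Lambda}^{(u)}$ is invariant under the continuous one-parameter group of global rotations around the spin-2 axis. I introduce a site-dependent rotation $U_{\phi} = \prod_{x\in\Lambda} \e{\ii\phi_{x} S_{x}^{2}}$ for a function $\phi : \Lambda \to \bbR$ to be chosen later, and compare expectations with respect to the rotated state $\rho_{\phi} = U_{\phi}^{*} \rho\, U_{\phi}$, where $\rho = \e{-\beta H_{\Lambda}^{(u)}} / Z^{(u)}(\beta,\Lambda)$. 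Using $U_{\phi} S_{x}^{3} U_{\phi}^{*} = \cos\phi_{x}\, S_{x}^{3} - \sin\phi_{x}\, S_{x}^{1}$, the vanishing of the mixed correlations $\langle S_{x}^{3} S_{y}^{1} \rangle = \langle S_{x}^{1} S_{y}^{3} \rangle = 0$ (which follows from Lemma \ref{lem sym}(c) with $\vec a = \pi \vec e_{3}$, an unbroken $\pi$-rotation symmetry around axis 3 that flips the signs of $S^{1}$ and $S^{2}$), and the identity $\langle S_{x}^{1} S_{y}^{1} \rangle = \langle S_{x}^{3} S_{y}^{3} \rangle$ supplied by Theorem \ref{thm corr}, a short computation gives
\be
\langle S_{x}^{3} S_{y}^{3} \rangle_{\phi} = \cos(\phi_{x} - \phi_{y})\, \langle S_{x}^{3} S_{y}^{3} \rangle.
\ee
Choosing $\phi_{x} = 0$ and $\phi_{y} = \pi$ turns $\langle S_{x}^{3} S_{y}^{3} \rangle - \langle S_{x}^{3} S_{y}^{3} \rangle_{\phi}$ into $2\,\langle S_{x}^{3} S_{y}^{3} \rangle$, so it is enough to bound this difference.

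Because $\rho_{\phi}$ is unitarily equivalent to $\rho$, the partition function is unchanged and a short calculation gives $S(\rho_{\phi}\,\|\,\rho) = \beta\, \langle U_{\phi} H_{\Lambda}^{(u)} U_{\phi}^{*} - H_{\Lambda}^{(u)} \rangle$, while the quantum Pinsker inequality yields $|\langle A \rangle_{\phi} - \langle A \rangle| \leq \|A\|\sqrt{2\, S(\rho_{\phi}\,\|\,\rho)}$. The energy difference splits edgewise: for $h_{xy} := u T_{xy} + (1-u) Q_{xy}$ and $\delta = \phi_{x} - \phi_{y}$, the $(S_{x}^{2}+S_{y}^{2})$-invariance of $h_{xy}$ reduces $U_{\phi} h_{xy} U_{\phi}^{*}$ to $\e{\ii(\delta/2)(S_{x}^{2} - S_{y}^{2})} h_{xy} \e{-\ii(\delta/2)(S_{x}^{2} - S_{y}^{2})}$. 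A Taylor expansion in $\delta$ produces a first-order term proportional to $\langle [S_{x}^{2} - S_{y}^{2},\, h_{xy}] \rangle$, whose expectation vanishes because the $\pi$-rotation around axis 3 is a symmetry of $\rho$ that flips the sign of each $S^{2}$ while leaving $h_{xy}$ invariant. One therefore obtains a quadratic edge bound $|\langle U_{\phi} h_{xy} U_{\phi}^{*} - h_{xy} \rangle| \leq C_{S}\, \delta^{2}$, and summing over edges yields
\be
S(\rho_{\phi}\,\|\,\rho) \leq C_{S}\,\beta \sum_{\{x,y\} \in \caE} (\phi_{x} - \phi_{y})^{2},
\ee
with a constant $C_{S}$ depending only on $S$.

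Finally, I choose $\phi$ to minimize this Dirichlet sum subject to $\phi_{x} = 0$ and $\phi_{y} = \pi$. A convenient profile is $\phi_{z} = \pi\, g\bigl( d(x,z) \wedge d(x,y) \bigr) / g(d(x,y))$ with $g(k) = \sum_{j=1}^{k} 1/j$. The 2D-like hypothesis bounds the sphere of radius $k$ around $x$ by $Ck$ vertices, and (taking $k=1$) the maximal degree by $C$, so the number of edges between vertices at graph distance $\leq k$ from $x$ is $O(k)$. On such edges the gradient of $\phi$ is $O(1/(k\log d(x,y)))$, and summing squared gradients over $k \leq d(x,y)$ yields a total Dirichlet energy at most $C'/\log(d(x,y)+1)$. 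Substituting back, $2\,|\langle S_{x}^{3} S_{y}^{3} \rangle| \leq S^{2}\sqrt{2 C_{S} C' \beta / \log(d(x,y)+1)}$, which is the claim.

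The main obstacle I expect is the quadratic edgewise estimate: the naive Taylor bound gives only $O(|\delta|)$, and the sharpening to $O(\delta^{2})$ is crucial; it relies on the vanishing of the first-order term through the unbroken $\pi$-rotation symmetry around axis 3, together with a careful control of the Taylor remainder in terms of $\|S^{2}\|$ and $\|h_{xy}\|$ to produce a constant $C_{S}$ that is uniform in $\beta$, $u$, and $\Lambda$. The quantum Pinsker inequality is standard but likewise warrants careful invocation.
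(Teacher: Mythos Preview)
Your proposal is correct and follows the same overall strategy as the paper: introduce a site-dependent rotation $U_{\phi}=\prod_{z}\e{\ii\phi_{z}S_{z}^{2}}$, compare the original and rotated Gibbs states, use the discrete $\pi$-rotation around the 3-axis to kill the first-order (gradient) term, and bound the surviving Dirichlet sum $\sum_{\{z,z'\}}(\phi_{z}-\phi_{z'})^{2}$ using the 2D-like hypothesis with a logarithmic profile.

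The one genuine difference is the comparison inequality. The paper does \emph{not} use quantum Pinsker; instead it proves an ad hoc inequality (Lemma~\ref{lem inegalite}): for normalized $\e{-H},\e{-H'}$,
\[
\Tr A\e{-H}-\Tr A\e{-H'}\leq\tfrac1s\Tr(H'-H)\e{-H+sA}+s\|A\|^{2}\e{s\|A\|},
\]
and then optimizes over the free parameter $s$. In that version the first-order term is an expectation in the \emph{tilted} state $\e{-\beta H_{\Lambda}^{(u)}+sS_{x}^{3}S_{y}^{3}}$, which is still invariant under the $\pi$-rotation around $S^{3}$, so the same symmetry argument applies. Your route via $S(\rho_{\phi}\|\rho)=\beta\langle U_{\phi}H_{\Lambda}^{(u)}U_{\phi}^{*}-H_{\Lambda}^{(u)}\rangle$ and Pinsker is cleaner and more standard, and it places the first-order cancellation in the unperturbed Gibbs state; the paper's device with the parameter $s$ achieves the same $1/\sqrt{\log d(x,y)}$ rate without invoking Pinsker. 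Your choice of profile $g(k)=\sum_{j\leq k}1/j$ is equivalent to the paper's $\phi_{z}\propto 1-\log(d(x,z)+1)/\log(d(x,y)+1)$; both give Dirichlet energy $O(1/\log d(x,y))$. The uniform $C_{S}\delta^{2}$ edge bound you flag as the main obstacle is handled exactly as you suggest, via the integral Taylor remainder and $\|S^{2}\|,\|h_{xy}\|\leq C(S)$; the paper disposes of it in one line.
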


We prove Theorem \ref{thm MW} with the help of the following inequality, that allows to compare expectations with respect to two Gibbs states, and something that is almost the relative entropy.

\begin{lemma}
\label{lem inegalite}
Let $A, H, H'$ be hermitian matrices such that $\Tr \e{-H} = \Tr \e{-H'} = 1$. Then for any $s>0$, we have
\[
\Tr A \e{-H} - \Tr A \e{-H'} \leq \frac1s \Tr(H'-H) \e{-H + sA} + s \|A\|^{2} \e{s\|A\|}.
\]
\end{lemma}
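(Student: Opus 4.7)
The plan is to introduce the partition-like functions $Z(s)=\Tr e^{-H+sA}$ and $\tilde Z(s)=\Tr e^{-H'+sA}$; by hypothesis $Z(0)=\tilde Z(0)=1$, with $Z'(0)=\Tr A e^{-H}$ and $\tilde Z'(0)=\Tr A e^{-H'}$. My strategy is first to compare $s\bigl(\Tr A e^{-H}-\Tr A e^{-H'}\bigr)$ with $Z(s)-\tilde Z(s)$ using Taylor estimates, and then to bound $Z(s)-\tilde Z(s)$ by $\Tr(H'-H)e^{-H+sA}$ via convexity of the trace exponential $X\mapsto\Tr e^X$ on Hermitian matrices.

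For the Taylor step I would compute $Z''(t)=\int_0^1 \Tr A\, e^{u(-H+tA)} A\, e^{(1-u)(-H+tA)} \dd u$ and bound it in a diagonal basis: writing $-H+tA=\sum_i \kappa_i |i\rangle\langle i|$, this becomes $\int_0^1 \sum_{i,j}|A_{ij}|^2 e^{u\kappa_j+(1-u)\kappa_i} \dd u$, which is non-negative (so $Z$ is convex) and, by the scalar convexity $e^{u\kappa_j+(1-u)\kappa_i}\leq u e^{\kappa_j}+(1-u)e^{\kappa_i}$, bounded above by $\|A\|^2 Z(t)$. Since $|Z'(t)|\leq \|A\| Z(t)$, Gr\"onwall gives $Z(t)\leq e^{t\|A\|}$, hence $Z''(t)\leq \|A\|^2 e^{t\|A\|}$, and analogously for $\tilde Z$. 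Convexity of $Z$ yields $Z(s)\geq 1+s\Tr A e^{-H}$, while Taylor with integral remainder applied to $\tilde Z$ yields $\tilde Z(s)\leq 1+s\Tr A e^{-H'}+(e^{s\|A\|}-1-s\|A\|)$. Subtracting (the constants cancelling thanks to $\Tr e^{-H}=\Tr e^{-H'}=1$) produces
\[
s\bigl(\Tr A e^{-H}-\Tr A e^{-H'}\bigr)\leq Z(s)-\tilde Z(s)+(e^{s\|A\|}-1-s\|A\|).
\]

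For the second step, the map $X\mapsto \Tr e^X$ is convex on Hermitian matrices (its second G\^ateaux derivative in direction $Y$ equals $\int_0^1\Tr Y e^{uX} Y e^{(1-u)X}\dd u\geq 0$) with gradient $e^X$ in the Hilbert--Schmidt inner product. The supporting-hyperplane inequality at $X_1=-H+sA$ evaluated at $X_0=-H'+sA$ reads $\Tr e^{X_0}\geq \Tr e^{X_1}+\Tr(X_0-X_1)e^{X_1}$, which rearranges to
\[
Z(s)-\tilde Z(s)\leq \Tr(H'-H)\, e^{-H+sA}.
\]

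Combining the two displays, invoking the elementary inequality $e^x-1-x\leq \tfrac12 x^2 e^x$ for $x\geq 0$ so that $e^{s\|A\|}-1-s\|A\|\leq s^2\|A\|^2 e^{s\|A\|}$, and dividing by $s$ produces the claimed bound. The only step that requires a little operator-theoretic care is the pointwise bound $Z''(t)\leq \|A\|^2 Z(t)$, which is where non-commutativity of $A$ and $H$ enters and is handled by the diagonal-basis argument sketched above; the rest is standard Taylor/Gr\"onwall calculus and the classical convexity of $\Tr e^X$.
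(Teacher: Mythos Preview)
Your proof is correct and follows essentially the same route as the paper: both expand $\Tr e^{-H+sA}$ and $\Tr e^{-H'+sA}$ to second order, bound the Duhamel remainder by $\|A\|^{2}e^{s\|A\|}$, and then apply Klein's inequality (convexity of $X\mapsto\Tr e^{X}$) to control $Z(s)-\tilde Z(s)$. The only differences are cosmetic---you use integral remainder plus convexity where the paper uses the Lagrange form, and Gr\"onwall where the paper invokes the minimax principle to get $\Tr e^{-H+tA}\leq e^{t\|A\|}$---but the architecture is the same.
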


\begin{proof}
We start with the Taylor series with remainder; there exists $\eta(s) \in [0,s]$ such that
\be
\label{Taylor}
\Tr \e{-H + sA} = \Tr \e{-H} + s \Tr A \e{-H} + \tfrac12 s^{2} F(\eta(s)),
\ee
where $F(\eta)$ is the Duhamel two-point function
\be
\label{Duh}
F(\eta) = \int_{0}^{1} \Tr A \e{-t(H-\eta A)} A \e{-(1-t) (H-\eta A)} \dd t.
\ee
Let $F'$ be the same function but with $H'$ instead of $H$. We get
\be
\label{voila un label !}
\Tr A \e{-H} - \Tr A \e{-H'} = \frac1s \Tr \bigl( \e{-H+sA} - \e{-H'+sA} \bigr) - \frac s2 \bigl( F(\eta(s)) - F'(\eta'(s)) \bigr).
\ee
The remainder can be estimated using the convexity of the integrand in \eqref{Duh} (as function of $t$) and the minimax principle:
\be
0 \leq F(\eta) \leq \Tr A^{2} \e{-H+\eta A} \leq \|A\|^{2} \Tr \e{-H+\eta A} \leq \|A\|^{2} \e{\eta \|A\|}.
\ee
Applying Klein inequality to the middle term of \eqref{voila un label !}, we get the lemma.
\end{proof}

\begin{proof}[Proof of Theorem \ref{thm MW}]
We work in the quantum setting. Let $\phi_{z}$ be angles such that $\phi_{x}=\pi$ and $\phi_{y}=0$. We consider the unitary operator
\be
U = \prod_{z\in\Lambda} \e{\ii \phi_{z} S_{z}^{2}}.
\ee
Since $U^{*} S_{x}^{3} U = -S_{x}^{3}$ and $U^{*} S_{y}^{3} U = S_{y}^{3}$, we have
\be
\langle S_{x}^{3} S_{y}^{3} \rangle_{H_{\Lambda}^{(u)}} = -\langle S_{x}^{3} S_{y}^{3} \rangle_{U^{*} H_{\Lambda}^{(u)} U}.
\ee
Then
\be
\label{ca commence}
0 \leq \langle S_{x}^{3} S_{y}^{3} \rangle_{H_{\Lambda}^{(u)}} = \tfrac12 \Bigl[ \langle S_{x}^{3} S_{y}^{3} \rangle_{H_{\Lambda}^{(u)}} -\langle S_{x}^{3} S_{y}^{3} \rangle_{U^{*} H_{\Lambda}^{(u)} U} \Bigr].
\ee
We have
\be
\begin{split}
&\e{-\ii\phi_{z} S_{z}^{2} -\ii \phi_{z'} S_{z'}^{2}} \bigl( u T_{zz'} + (1-u) Q_{zz'} \bigr) \e{\ii\phi_{z} S_{z}^{2} +\ii \phi_{z'} S_{z'}^{2}} \\
&\quad= \e{-\ii (\phi_{z} - \phi_{z'}) S_{z}^{2}} \bigl( u T_{zz'} + (1-u) Q_{zz'} \bigr) \e{\ii (\phi_{z} - \phi_{z'}) S_{z}^{2}} \\
&\quad= u T_{zz'} + (1-u) Q_{zz'} + (\phi_{z}-\phi_{z'}) \bigl[ S_{z}^{2}, u T_{zz'} + (1-u) Q_{zz'} \bigr] + O\bigl( (\phi_{z}-\phi_{z'})^{2} \bigr).
\end{split}
\ee
The first equality follows from rotation invariance, see Lemma \ref{lem sym}.
Combining \eqref{ca commence} with Lemma \ref{lem inegalite} and the equation above, we get, for all $s \in (0,1]$,
\be
\begin{split}
\langle S_{x}^{3} S_{y}^{3} \rangle \leq &\frac1{s Z^{(u)}(\beta,\Lambda)} \sum_{\{z,z'\} \in \caE} (\phi_{z}-\phi_{z'}) \Tr \bigl[ S_{z'}^{2}, u T_{zz'} + (1-u) Q_{zz'} \bigr] \e{-\beta H_{\Lambda}^{(u)} + sS_{x}^{3} S_{y}^{3}} \\
&+ \frac{C_{1}}s \sum_{\{z,z'\} \in \caE} (\phi_{z} - \phi_{z'})^{2} + s C_{2}.
\end{split}
\ee
Here, $C_{1},C_{2}$ are constants that depend on $\beta$ and $S$, but they do not depend on the graph, nor on $s \in (0,1]$ and $x,y \in \Lambda$. The term with $C_{1}$ comes from a double commutator, and from higher order terms with multiple commutators.
The first term of the right side vanishes because of symmetry (rotation around $S^{3}$ by angle $\pi$). Had this symmetry not been available, a way out is to add another term with opposite angles, as in \cite{ISV,Nac}.

The following choice of $\phi_{z}$ is rather optimal:
\be
\phi_{z} = \begin{cases} \bigl( 1 - \frac{\log(d(x,z)+1)}{\log(d(x,y)+1)} \bigr) \pi & \text{if } d(x,z) \leq d(x,y), \\ 0 & \text{otherwise.} \end{cases}
\ee
It follows that $|\phi_{z}-\phi_{z'}| \leq \frac{C_{3}}{\log(d(x,y)+1)} \frac1{d(x,z)+1}$, so that
\be
\sum_{\{z,z'\}\in\caE} (\phi_{z}-\phi_{z'})^{2} \leq C_{3} \sum_{k=1}^{d(x,y)} Ck \frac1{(k \log(d(x,y)+1))^{2}} \leq \frac{C_{4}}{\log(d(x,y)+1}.
\ee
We eventually obtain, for any $s \in (0,1]$,
\be
0 \leq \langle S_{x}^{3} S_{y}^{3} \rangle \leq \frac{C_{5}}s \frac1{\log(d(x,y)+1} + s C_{2}.
\ee
We get the claim by choosing $s = 1/\sqrt{\log(d(x,y)+1)}$.
\end{proof}

\section{Occurrence of macroscopic loops in dimension $d\geq3$}
\label{sec macroscopic loops}

The occurrence of a phase transition at low temperature accompanied by spontaneous magnetization and symmetry breaking was first established by Fr\"ohlich, Simon, and Spencer for the classical Heisenberg model in dimension $d\geq3$ \cite{FSS}. They introduced the method of infrared bounds and reflection positivity. The difficult extension to quantum systems was done by Dyson, Lieb, and Simon for the Heisenberg model with antiferromagnetic interactions and large enough dimension (or spin) \cite{DLS}. The result was improved by Kennedy, Lieb, and Shastry to all $S\in\frac12\bbN$ and all $d\geq3$ \cite{KLS1}. Notice that these results cannot hold in $d\leq2$ as they would contradict Theorem \ref{thm MW} (Mermin-Wagner). (Symmetry breaking can take place in the ground state, i.e., in the limit of zero temperature \cite{NP,KLS2}.) All these results are proved using the method of reflection positivity and infrared bounds that was developed in \cite{FSS,DLS,FILS1,FILS2}. See \cite{ALSSY,AFFS} for recent advances. We recommend the Prague notes of T\'oth and Biskup for excellent introductions to the subject \cite{Toth2, Bis}, see also \cite{Uel}. The Vienna lectures of Fr\"ohlich offer an impressive glimpse of the background and of the context \cite{Fro}.

In this section we apply the method of infrared bounds and reflection positivity to the model of random loops. We introduce a model with external fields and cast its partition function in a reflection positive form. We do not use the methods of \cite{DLS, KLS1} directly, but these articles are lighting the way.

\subsection{Setting and results}

Let $\Lambda_{L}$ denote the cubic box in $\bbZ^{d}$ with side length $L$ and periodic boundary conditions, and let $(\Lambda_{L},\caE_{L})$ the graph where edges are nearest-neighbors. If a formal definition is needed, consider the quotient set $\Lambda_{L} = (\bbZ / L\bbZ)^{d}$. The Euclidean distance in $\bbZ^{d}$ has a natural extension in this set, and we define $\caE_{L}$ as the set of unordered pairs $\{x,y\} \subset \Lambda_{L}$ such that the distance between $x$ and $y$ is 1. In the sequel, we identify $\Lambda_{L}$ with the set
\be
\Lambda_{L} = \bigl\{ x \in \bbZ^{d} : -\tfrac L2 < x_{i} \leq \tfrac L2, i =1,\dots,d \bigr\}
\ee
Let $\eps(k)$ denote the ``dispersion relation'' of the discrete Laplacian,
\be
\eps(k) = 2 \sum_{i=1}^{d} (1-\cos k_{i}).
\ee

In order to state the main theorem, we need to introduce the following two integrals:
\be
\label{IJ}
\begin{split}
&I_{d} = \frac1{(2\pi)^{d}} \int_{[-\pi,\pi]^{d}} \sqrt{\frac{\eps(k+\pi)}{\eps(k)}} \Bigl( \frac1d \sum_{i=1}^{d} \cos k_{i} \Bigr)_{+} \dd k, \\
&J_{d} = \frac1{(2\pi)^{d}} \int_{[-\pi,\pi]^{d}} \sqrt{\frac{\eps(k+\pi)}{\eps(k)}} \dd k.
\end{split}
\ee
Here, $\eps(k+\pi) = 2\sum_{i=1}^{d} (1+\cos k_{i})$, and $(\cdot)_{+}$ denotes the positive part. One can check that, as $d\to\infty$, these integrals satisfy $I_{d} \to 0$ \cite{KLS2} and $J_{d}\to1$ \cite{DLS}.

\begin{theorem}
\label{thm irb}
Let $d\geq3$ and $u \in [0,\frac12]$. We have the two lower bounds
\[
\lim_{\beta\to\infty} \lim_{L\to\infty} \frac1{L^{d}} \sum_{x\in\Lambda} \bbP(E_{0,x,0}) \geq \begin{cases} 1 - \frac{2S+1}{\sqrt2} \, \sqrt{1-u} \; J_{d} \, \sqrt{\bbP(E_{0,e_{1},0})}; \\ \bbP(E_{0,e_{1},0}) - \frac{2S+1}{\sqrt2} \, \sqrt{1-u} \; I_{d} \, \sqrt{\bbP(E_{0,e_{1},0})}. \end{cases}
\]
\end{theorem}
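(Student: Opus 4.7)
The plan is to use Theorem \ref{thm corr} to reduce to a spin long-range-order statement for $H_\Lambda^{(u)}$, then apply the Dyson-Lieb-Simon / Kennedy-Lieb-Shastry machinery of reflection positivity, Gaussian domination, and the Falk-Bruch inequality \eqref{FB}. The restriction $u\in[0,\tfrac12]$ should emerge precisely as the range in which a sublattice-rotated form of $H_\Lambda^{(u)}$ is reflection positive.

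By Theorem \ref{thm corr} and translation invariance on the periodic cube,
\begin{equation*}
\frac{1}{L^d}\sum_{x\in\Lambda_L}\bbP(E_{0,x,0})=\frac{3}{S(S+1)L^{2d}}\bigl\langle M^2\bigr\rangle, \qquad M=\sum_{x\in\Lambda_L}S_x^3,
\end{equation*}
while the case $x=y$ of the same theorem yields $\langle (S_0^3)^2\rangle=\tfrac13 S(S+1)$. Setting $\hat S^3_k=\sum_x \e{-\ii k\cdot x}S_x^3$, Parseval gives
\begin{equation*}
\frac{1}{L^{2d}}\langle M^2\rangle=\frac{S(S+1)}{3}-\frac{1}{L^{2d}}\sum_{k\neq 0}\bigl\langle \hat S^3_k\hat S^3_{-k}\bigr\rangle,
\end{equation*}
so the bound is reduced to an upper estimate on $\frac{1}{L^{2d}}\sum_{k\neq 0}\langle \hat S^3_k\hat S^3_{-k}\rangle$.

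Next, I would set up reflection positivity. Neither $T_{xy}$ nor $Q_{xy}$ is bond-RP in its native form, but the sublattice unitary $R=\prod_{y\in B}\e{\ii\pi S_y^2}$ (with $B$ an even sublattice of the bipartite torus), combined with Lemma \ref{lem sym}(c), transforms $H_\Lambda^{(u)}$ into a bond Hamiltonian whose couplings are a non-negative combination of $-S_x^i S_y^i$-type interactions precisely when $u\in[0,\tfrac12]$ (for $S=\tfrac12$ one gets an anisotropic Heisenberg antiferromagnet; analogous structure appears for $S\geq 1$). This rotated Hamiltonian is bond-RP, and Gaussian domination produces the infrared bound
\begin{equation*}
\bigl(\hat S^3_k,\hat S^3_{-k}\bigr)_{\mathrm{Duh}}\;\le\;\frac{L^d\,b_0}{2\,\eps(k)},\qquad k\neq 0,
\end{equation*}
where $b_0$ is a nearest-neighbour bond energy which, via Theorem \ref{thm corr}, is proportional to $\bbP(E_{0,e_1,0})$. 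A direct calculation of the double commutator $\langle[\hat S^3_{-k},[H_\Lambda^{(u)},\hat S^3_k]]\rangle$ (using $[T_{xy},S_x^3+S_y^3]=0$, and $[Q_{xy},S_x^3]=\sum_{a,c}(a-c)|a,a\rangle\langle c,c|_{xy}$, so that the $T$-part contributes a Fourier factor proportional to $\sum_i(1-\cos k_i)$ while the $Q$-part contributes $(1-u)\sum_i(1+\cos k_i)$ times the bond operator $Y_{xy}=[S_x^3,[Q_{xy},S_x^3]]$) produces, for $u\leq\tfrac12$, a dominant piece of the form $(1-u)\,\eps(k+\pi)\,L^d\,\langle Y_{0,e_1}\rangle$, with $\langle Y_{0,e_1}\rangle$ again translating via Theorem \ref{thm corr} to a multiple of $\bbP(E_{0,e_1,0})$.

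Plugging IR and commutator into the Falk-Bruch inequality \eqref{FB} and sending $\beta\to\infty$ so the $(\cdot)_{\mathrm{Duh}}/\beta$ term drops, the product of IR and commutator under the square root yields $\eps(k+\pi)/\eps(k)$; summing over $k\neq 0$ with weight $1/L^{2d}$ and letting $L\to\infty$, the Riemann sum converges to $J_d$, producing the first bound. The prefactor $(2S+1)/\sqrt{2}$ arises from the explicit norm bound on $Y_{xy}$ relative to the $b_0$ normalisation. The sharper bound with $I_d$ is obtained by not discarding the momentum dependence of the bond operators in the commutator: their Fourier expansion carries a factor $\tfrac{1}{d}\sum_i\cos k_i$ whose positive part alone survives after Cauchy-Schwarz, and a sharper Parseval rearrangement that starts from $\langle S_0^3 S_{e_1}^3\rangle=\tfrac13 S(S+1)\bbP(E_{0,e_1,0})$ in place of $\langle(S_0^3)^2\rangle=\tfrac13 S(S+1)$ replaces the constant term. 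The principal obstacle is precisely establishing reflection positivity of the sublattice-rotated $H_\Lambda^{(u)}$ for general $S\in\tfrac12\bbN$ and identifying the correct bond energy $b_0$; this is what pins the range $u\in[0,\tfrac12]$, and everything downstream is routine DLS-KLS manipulation combined with the loop-vs-spin dictionary of Theorem \ref{thm corr}.
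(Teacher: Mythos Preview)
Your overall architecture---Fourier decomposition, an infrared bound on the Duhamel function via reflection positivity and Gaussian domination, the Falk--Bruch inequality \eqref{FB}, and the two sum rules (at $x=0$ and at $x=e_1$) to produce $J_d$ and $I_d$---matches the paper. The genuine gap is in the step you flag yourself as the ``principal obstacle'': establishing reflection positivity of $H_\Lambda^{(u)}$ for general $S$.

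Your proposed route, a sublattice rotation $R=\prod_{y\in B}\e{\ii\pi S_y^2}$, works only for $S=\tfrac12$. There $T_{xy}$ and $Q_{xy}$ are linear combinations of $S_x^iS_y^i$, and the rotated Hamiltonian becomes $2\sum\bigl(S_x^1S_y^1+(1-2u)S_x^2S_y^2+S_x^3S_y^3\bigr)$, which is bond-RP exactly for $u\le\tfrac12$. For $S\ge1$ this fails: the rotation sends $H_\Lambda^{(u)}$ to $\tilde H_\Lambda^{(u)}$, whose interaction $-uT_{xy}-(1-u)P_{xy}$ is \emph{not} reflection positive when $u>0$ (for $S=1$ it lies in the region $J_1>0$, $J_2>0$, outside the known RP quadrant $J_1\le0$, $J_2\ge0$; cf.\ Section~\ref{sec spin1}). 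The phrase ``analogous structure appears for $S\ge1$'' is precisely where the argument breaks.

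The paper avoids the quantum RP question entirely by working in the probabilistic representation. The key device is to replace the spin labels $\{-S,\dots,S\}$ by the vertices $\vec a$ of a regular simplex $\caT_{2S+1}\subset\bbR^{2S}$ (Eq.~\eqref{simplex}), so that $1-\delta_{\sigma_{xt},\sigma_{yt}}=(\vec\sigma_{xt}-\vec\sigma_{yt})^2$. One then defines a partition function $Z(v)$ with external fields coupling to $\vec\sigma^{(1)}$ (Eq.~\eqref{fpart fields}) and splits the Poisson process as $\iota=\iota'+\iota''$, where $\iota''$ enforces equal neighbouring spins and, once integrated out, produces the Gaussian factor $\exp\{-\sum_{\{x,y\}}\int(\vec\sigma_{xt}+\tfrac12\vec v_x-\vec\sigma_{yt}-\tfrac12\vec v_y)^2\}$. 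The remaining process $\rho_{\iota'}$ is reflection-symmetric with nonnegative intensities precisely when $u\le\tfrac12$ (this is where the restriction enters), and Cauchy--Schwarz gives $Z(v)\le Z(0)$ (Propositions~\ref{prop rp}--\ref{prop max Z(v)}). The infrared bound that emerges is $\widetilde\kappa(k,0)\le(2S+1)/\eps(k)$ (Proposition~\ref{prop irb Duhamel}), with \emph{no} bond energy $b_0$; the factor $\bbP(E_{0,e_1,0})$ comes only from the double commutator via Lemma~\ref{lem bornes tau}, which identifies $\tau_0^{(u)}$ and $\tau_1^{(u)}$ with $\tfrac23 S(S+1)(2S+1)\,\bbP(E_{0,e_1,0}^{\mp})$. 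A final point you omit: the commutator contributes $u\alpha\eps(k)+(1-u)(1-\alpha)\eps(k+\pi)$ with $\alpha=\bbP(E^+)/\bbP(E)$, and a concavity argument in $\alpha$ is needed to reduce the resulting integrals $J_d^{u,\alpha}$, $I_d^{u,\alpha}$ to $\sqrt{1-u}\,J_d$ and $\sqrt{1-u}\,I_d$.
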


In the right sides, $\bbP(E_{0,e_{1},0})$ denotes the probability that two neighbors belong to the same loop after taking the limits $L,\beta\to\infty$. Notice that all these limits exist but we do not prove it here. What we prove is the above statement with ``$\liminf$'' instead.
The proof of Theorem \ref{thm irb} can be found at the end of Section \ref{sec irb}.

Combining with Theorem \ref{thm a maintenant un label} (b), we obtain an estimate for $\bbE\bigl( \frac{L_{(0,0)}}{\beta L^{d}} \bigr)$, so that a positive lower bound implies the occurrence of macroscopic loops at low enough temperatures.
The claim also holds for $d=2$ but with the order of the limits over $\beta$ and $L$ interchanged.
The lower bounds involve $\bbP(E_{0,e_{1},0})$, which, curiously enough, needs to be small in the first bound and large in the second bound. We therefore get a positive lower bound whenever
\be
 \sqrt{\bbP(E_{0,e_{1},0})} > \tfrac{2S+1}{\sqrt2} \, \sqrt{1-u} \; I_{d} \qquad \text{or} \qquad  \sqrt{\bbP(E_{0,e_{1},0})} < \tfrac{\sqrt2}{(2S+1) \sqrt{1-u} \, J_{d}}.
 \ee
 At least one of these two inequalities holds when $\frac{2S+1}{\sqrt2} \sqrt{1-u} \, I_{d} < \frac{\sqrt2}{(2S+1) \sqrt{1-u} \, J_{d}}$, which is equivalent to
 \be
 \label{sufficient condition}
2S+1 < \bigl( \tfrac12 (1-u) I_{d} J_{d} \bigr)^{-\frac12}.
 \ee
Values of $I_{d}$ and $J_{d}$ can be found numerically. They are listed on Table \ref{table} for $2 \leq d \leq 6$. Occurrence of macroscopic loops (at temperatures low enough) follows for any $d\geq3$ provided that $S$ is small enough. Macroscopic loops are also present in the ground state in $d=2$ when $S=\frac12$ and $u$ is close to $\frac12$. The right side of \eqref{sufficient condition} diverges as $d\to\infty$ so that the result holds for arbitrary $S$ provided the dimension is large enough.
In the quantum Heisenberg model, the results get better when $S$ becomes large \cite{DLS}. It is natural that the model of random loops behaves differently. As $\theta = 2S+1$ increases, the number of loops also increases, and their size decreases. It would be interesting to know whether or not macroscopic loops occur for any $\theta>0$ at low enough temperatures, when $d\geq3$ is fixed.

The consequences of Theorem \ref{thm irb} in the cases $S=\frac12$ and $S=1$ are discussed in more details in Section \ref{sec models}.

\begin{table}[htb]
\centering
\colorbox{light-gold}{
\begin{tabular}{c|cccccc}
$d$ & $I_{d}$ & $J_{d}$ & $(\frac12 I_{d} J_{d})^{-\frac12}$ & \text{$u=0$} & $(\frac14 I_{d} J_{d})^{-\frac12}$ & \text{$u=\frac12$} \\
\hline
2 & 0.646803 & 1.39320 & 1.48978 & none & 2.10687 & $S=\frac12$ \\
3 & 0.349882 & 1.15672 & 2.22301 & $S=\frac12$ & 3.14381 & $S \leq 1$ \\
4 & 0.253950 & 1.09441 & 2.68256 & $S=\frac12$ & 3.79372 & $S \leq 1$ \\
5 & 0.206878 & 1.06754 & 3.00931 & $S \leq 1$ & 4.25581 & $S \leq \frac32$ \\
6 & 0.177716 & 1.05274 & 3.26958 & $S \leq 1$ & 4.62389 & $S \leq \frac32$ \\
\end{tabular}
}
\medskip
\caption{Numerical values of the integrals $I_{d}$ and $J_{d}$ defined in \eqref{IJ} and of the constant in the right side of Eq.\ \eqref{sufficient condition} for $d=2,\dots,6$. Values of $S$ at $u=0, \frac12$ for which the existence of macroscopic loops is proved in Theorem \ref{thm irb} at low temperatures (for $d=2$, this holds in the ground state only).}
\label{table}
\end{table}

\subsection{Reflection positivity of the model of random loops}

Let us introduce a notation for the loop correlation between $(0,0)$ and $(x,t)$:
\be
\kappa(x,t) = \bbP(E_{0,x,t}).
\ee
The Fourier transform plays an vital r\^ole and we use the following conventions:
\be
\label{kappa FT}
\begin{split}
&\widehat\kappa(k,t) = \sum_{x\in\Lambda_{L}} \e{-\ii kx} \kappa(x,t), \\
&\widetilde\kappa(k,\tau) = \sum_{x\in\Lambda_{L}} \int_{0}^{\beta} \dd t \e{-\ii kx - \ii\tau t} \kappa(x,t),
\end{split}
\ee
where $k$ belongs to the set
\be
\Lambda^{*} = \bigl\{ k \in \tfrac{2\pi}L \bbZ^{d} : -\pi < k_{i} \leq \pi, i=1,\dots,d \bigr\},
\ee
and $\tau \in \frac{2\pi}\beta \bbZ^{d}$. It is useful to write down the inverse transforms:
\be
\begin{split}
\kappa(x,t) &= \frac1{L^{d}} \sum_{k\in\Lambda^{*}} \e{\ii kx} \widehat\kappa(k,t), \\
&= \frac1{\beta L^{d}} \sum_{k\in\Lambda^{*}} \sum_{\tau \in \frac{2\pi}\beta \bbZ} \e{\ii kx + \ii\tau t} \widetilde\kappa(k,\tau).
\end{split}
\ee
Let us note that the Fourier transform of $\kappa$ is related to the expectation of the lengths of macroscopic loops:
\be
\bbE \Bigl( \frac{L_{(0,0)}}{\beta L^{d}} \Bigr) = \frac1{\beta L^{d}} \sum_{x\in\Lambda_{L}} \int_{0}^{\beta} \kappa(x,t) \dd t = \frac1{\beta L^{d}} \widehat\kappa(0,0).
\ee

The setting needs to be modified somewhat in order to reach a reflection positive form. Recall that $\sigma = (\sigma_{xt})$ denotes a space-time spin configuration where the possible values are the eigenvalues of spin operators: $\sigma_{xt} \in \{-S, -S+1, \dots, S\}$. We will need a more symmetric index set for $S\neq\frac12$, and we therefore consider the regular simplex $\caT_{2S+1}$ that contains $2S+1$ elements. We embed $\caT_{2S+1}$ in $\bbR^{2S}$ in such a way that the points are given by vectors $\vec a$, $a \in \{-S,\dots,S\}$, that satisfy
\be
\label{simplex}
\begin{split}
&\| \vec a - \vec b \| = 1 \text{ if }a \neq b; \\
&\vec a \cdot \vec b = -\frac1{2(2S+1)} \text{ if } a \neq b; \\
&\| \vec a \|^{2} = \frac S{2S+1}; \\
&\sum_{b=-S}^{S} \vec a \cdot \vec b = 0  \text{ for any fixed } a \in \{-S,\dots,S\}.
\end{split}
\ee
These properties are straightforward if we consider the projection of the unit vectors of $\bbR^{2S+1}$ onto the hyperplane perpendicular to $(1,1,\dots,1)$. More precisely, we define
\be
\vec a = \tfrac1{\sqrt2} \bigl( \vec e_{a} - \tfrac1{2S+1} (1,1,\dots,1) \bigr),
\ee
where $\vec e_{a}$ is the unit vector in $\bbR^{2S+1}$ that corresponds to $a$. Since all vectors $\vec a$ belong to a common $2S$-dimensional subspace, we can view them as elements of $\bbR^{2S}$.

We denote $\vec a^{(i)}$ the $i$th component of the vector $\vec a$.
Let $v$ be a real ``field'', i.e., a function $\Lambda_{L} \to \bbR$. The key object is the following partition function $Z(v)$:
\be
\label{fpart fields}
Z(v) = \int\dd\rho_{\iota}(\xi) \sum_{\sigma\in\Sigma(\xi)} \exp \biggl\{ -\sum_{\{x,y\} \in \caE} \int_{0}^{\beta} \dd t \Bigl[ (\vec\sigma_{xt}^{(1)} - \vec\sigma_{yt}^{(1)}) (v_{x}-v_{y}) + \tfrac14 (v_{x}-v_{y})^{2} \Bigr] \biggr\}.
\ee
Notice that $Z(0) = Z^{(u)}(\beta,\Lambda)$.

Let $i=1,\dots,d$ and $\ell = \frac12, \frac32, \dots, L-\frac12$. We let $R_{i,\ell}$ denote the reflection $\Lambda\to\Lambda$ across the edges $\{x,y\} \in \caE$ with $x_{i} = \ell-\frac12, y_{i} = \ell+\frac12$. That is,
\be
\label{def refl}
R_{i,\ell}(x_{1}, \dots, x_{i}, \dots, x_{d}) = (x_{1}, \dots, -(x_{i}-\ell), \dots, x_{d}),
\ee
where $-(x_{i}-\ell)$ is taken modulo $L$.
Let $\Lambda^{(1)}$ be the set of sites ``to the left'' of the reflexive plane, and $\Lambda^{(2)}$ the set of sites ``to its right''. Namely,
\be
\begin{split}
&\Lambda^{(1)} = \{ x\in\Lambda : x_{i} = \ell - \tfrac12 L, \dots, \ell-\tfrac12 \}, \\
&\Lambda^{(2)} = \{ x\in\Lambda : x_{i} = \ell + \tfrac12, \dots, \ell+\tfrac12 L \}.
\end{split}
\ee
We write $v = (v^{(1)},v^{(2)})$ where $v^{(i)} = v \big|_{\Lambda^{(i)}}$ is the restriction of $v$ on $\Lambda^{(i)}$, and $Rv^{(1)}$ is the field on $\Lambda^{(2)}$ such that
\be
(R v^{(1)})_{x} = v^{(1)}_{Rx},
\ee
for any $x \in \Lambda^{(2)}$.
Same for $v^{(2)}$. We can now formulate the property of reflection positivity.

\begin{proposition}
\label{prop rp}
Assume $u \in [0,\frac12]$ and $\theta = 2,3,4,\dots$
For any reflection $R$, we have
\[
Z(v^{(1)},v^{(2)})^{2} \leq Z(v^{(1)},Rv^{(1)}) \, Z(Rv^{(2)},v^{(2)}).
\]
\end{proposition}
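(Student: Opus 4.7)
The plan is to follow the standard Cauchy--Schwarz route for reflection positivity: express $Z(v^{(1)}, v^{(2)})$ as an integral $\int F(\zeta, v^{(1)})\, F(R\zeta, v^{(2)})\, \dd\mu(\zeta)$ over some auxiliary data $\zeta$ with a positive measure $\mu$ that is symmetric under $R$, and then invoke Cauchy--Schwarz in $\zeta$. The auxiliary data will consist of the Poisson process restricted to the crossing edges $\{x, Rx\}$, the space-time spin values at the crossing-event times (which link the two sides through the compatibility constraints), and a finite collection of Hubbard--Stratonovich Gaussian variables used to decouple the bilinear $v_x v_{Rx}$ couplings.

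First I would use Lemma~\ref{lem add intensities} to factor the Poisson measure $\rho_\iota = \rho_\iota^{(1)} \otimes \rho_\iota^{(2)} \otimes \rho_\iota^{\rm cross}$ according to whether an edge lies in $\Lambda^{(1)}$, in $\Lambda^{(2)}$, or straddles the plane. The Boltzmann weight in $Z(v)$ is a product over edges; interior edges produce one-sided factors, whereas for a crossing edge $\{x, Rx\}$ the weight
\[
\exp\Bigl\{ -\int_0^\beta (\vec\sigma^{(1)}_{xt} - \vec\sigma^{(1)}_{Rx,t})(v_x - v_{Rx}) \, \dd t - \tfrac{\beta}{4}(v_x - v_{Rx})^2 \Bigr\}
\]
contains three side-1/side-2 couplings: the pure field bilinear $\tfrac\beta2 v_x v_{Rx}$ and the spin--field cross terms $v_{Rx}\int_0^\beta \vec\sigma^{(1)}_{xt}\,\dd t$ and $v_x \int_0^\beta \vec\sigma^{(1)}_{Rx,t}\,\dd t$. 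Each is a reflection-symmetric bilinear in data from the two sides and can be decoupled by a Hubbard--Stratonovich identity $\e{AB} = \int \dd\nu(\lambda)\, g(\lambda,A)\, g(\lambda,B)$ with $\nu$ a centered Gaussian, producing one real auxiliary variable per crossing edge.

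Next I would decompose the constraints carried by the crossing Poisson events. The double-bar specification $\bigl\{ \transit{b}{b}{a}{a} \bigr\}$ admits the decomposition $\sum_{a,b}\mathbf{1}\{\sigma_{x,t+}=b,\sigma_{x,t-}=a\} \cdot \mathbf{1}\{\sigma_{Rx,t+}=b,\sigma_{Rx,t-}=a\}$; the two indicators have identical functional form, so this is manifestly of the reflection-symmetric shape $\sum_k f_k(\mathrm{side\ 1}) f_k(\mathrm{side\ 2\ reflected})$. The cross specification $\bigl\{ \transit{a}{b}{b}{a} \bigr\}$ instead factors as $\sum_{a,b}\mathbf{1}\{\sigma_{x,t+}=a,\sigma_{x,t-}=b\} \cdot \mathbf{1}\{\sigma_{Rx,t+}=b,\sigma_{Rx,t-}=a\}$, pairing each side-1 indicator with a time-flipped side-2 indicator. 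This is exactly where $u \leq \tfrac12$ enters: I would rewrite the crossing intensities $u$ (cross) and $1-u$ (double bar) as $(1-2u)$ (double bar) plus $u$ (cross $+$ double bar), which is a valid non-negative decomposition because $1-2u \geq 0$. The ``cross $+$ double bar'' compound specification is invariant under $a\leftrightarrow b$, and symmetrizing the rank-one expansion over this involution produces factors on each side in identical form. The integer hypothesis $\theta = 2S+1\in\{2,3,\dots\}$ ensures that the counting measure on $\Sigma(\xi)$ -- which is also absorbed into $\mu$ -- remains positive through this re-mixing.

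Assembling the three ingredients one gets $Z(v^{(1)}, v^{(2)}) = \int F(\zeta, v^{(1)})\, F(R\zeta, v^{(2)})\, \dd\mu(\zeta)$ with $\mu$ positive and reflection symmetric, and Cauchy--Schwarz in $\zeta$ gives
\[
Z(v^{(1)},v^{(2)})^2 \leq \Bigl( \int F(\zeta, v^{(1)})^2 \, \dd\mu \Bigr) \Bigl( \int F(R\zeta, v^{(2)})^2 \, \dd\mu \Bigr) = Z(v^{(1)}, Rv^{(1)})\, Z(Rv^{(2)}, v^{(2)}),
\]
the last equality using the reflection symmetry of $\mu$. The main obstacle is the third step: verifying that the cross specifications can be absorbed into reflection-symmetric mixtures precisely when $u\leq\tfrac12$ with $\theta\in\bbN\setminus\{1\}$ and tracking the combinatorics of this re-mixing through both the Poisson intensities and the compatibility conditions of $\Sigma(\xi)$ is the technical heart of the argument.
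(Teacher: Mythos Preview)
Your overall architecture (factor the Poisson process by edge location, decouple the crossing-edge weight via Hubbard--Stratonovich, reorganize the crossing Poisson intensities using $u\le\tfrac12$, then apply Cauchy--Schwarz) is the right skeleton, and your treatment of the crossing Poisson specifications --- replacing intensity $u$ on crosses and $1-u$ on double bars by $1-2u$ on double bars plus $u$ on the symmetrized ``cross $+$ double bar'' set --- is exactly what the paper does. But the Hubbard--Stratonovich step, as you describe it, does not go through.

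Write $s_1=\vec\sigma^{(1)}_{xt}$, $s_2=\vec\sigma^{(1)}_{Rx,t}$, $w_1=v_x$, $w_2=v_{Rx}$. The crossing-edge exponent contributes the cross terms $s_1w_2+s_2w_1+\tfrac12 w_1w_2$. These are \emph{not} three separate reflection-symmetric bilinears: the two spin--field terms are swapped by $R$, not fixed, so you must treat them together. As a bilinear form in $(s_1,w_1)$ against $(s_2,w_2)$ the matrix is $\bigl(\begin{smallmatrix}0&1\\1&1/2\end{smallmatrix}\bigr)$, which is indefinite; equivalently $s_1w_2+s_2w_1+\tfrac12 w_1w_2=2\bigl(s_1+\tfrac12 w_1\bigr)\bigl(s_2+\tfrac12 w_2\bigr)-2s_1s_2$, and the $-2s_1s_2$ piece has the wrong sign for reflection positivity. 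There is also no identity of the form $e^{AB}=\int g(\lambda,A)g(\lambda,B)\,\dd\nu(\lambda)$ with a real Gaussian $\nu$ and a single function $g$; the Gaussian HS produces $e^{c(A+B)^2}$ or $e^{-c(A-B)^2}$, not $e^{AB}$. So step 2 cannot yield the factorization $\int F(\zeta,v^{(1)})\,F(R\zeta,v^{(2)})\,\dd\mu(\zeta)$ you need.

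The paper repairs exactly this defect. It first completes the square on \emph{all} edges, turning the exponent into $-(\vec\sigma_{xt}+\tfrac12\vec v_x-\vec\sigma_{yt}-\tfrac12\vec v_y)^2+(\vec\sigma_{xt}-\vec\sigma_{yt})^2$, working with the full simplex vectors $\vec\sigma\in\bbR^{2S}$ rather than a single component. The clean negative square then decouples across the reflection plane by a single complex Gaussian per crossing edge and time slice, giving an $F\overline F$ form. The unwanted correction $+(\vec\sigma_{xt}-\vec\sigma_{yt})^2$ is removed by a \emph{global} split of the Poisson intensity $\iota=\iota'+\iota''$: integrating out the diagonal process $\iota''$ produces exactly $\exp\{-\sum_{\{x,y\}}\int(1-\delta_{\sigma_{xt},\sigma_{yt}})\,\dd t\}$, and the simplex identity $\|\vec a-\vec b\|^2=1-\delta_{a,b}$ makes this cancel the correction. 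Your $u\le\tfrac12$ reorganization is the $\iota'$ half of this split; what is missing from your plan is the $\iota''$ half and the square-completion that motivates it.
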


\begin{proof}
Let $\vec v_{x} = v_{x} \vec e_{1}$, where $\vec e_{1}$ is the first unit vector in $\bbR^{2S}$. We rearrange the partition function \eqref{fpart fields} as follows.
\bm
Z(v) = \int\dd\rho_{\iota}(\xi) \sum_{\sigma\in\Sigma(\xi)} \exp \biggl\{ -\sum_{\{x,y\} \in \caE} \int_{0}^{\beta} \dd t \bigl( \vec\sigma_{xt} + \tfrac12 \vec v_{x} - \vec\sigma_{yt} - \tfrac12 \vec v_{y} \bigr)^{2} \\
+ \sum_{\{x,y\} \in \caE} \int_{0}^{\beta} \dd t (\vec\sigma_{xt} - \vec\sigma_{yt})^{2} \biggr\}.
\end{multline}
In order to reach a reflection positive form we split the Poisson process. Let the intensity $\iota'$ be defined by
\be
\begin{split}
&\iota' \Bigl( \Bigr\{ \transit{b}{a}{a}{b}, \transit{a}{b}{b}{a}, \transit{b}{b}{a}{a}, \transit{a}{a}{b}{b} \Bigr\} \Bigr) = u \qquad \text{if } a \neq b, \\
&\iota' \Bigl( \Bigr\{ \transit{b}{b}{a}{a} \Bigr\} \Bigr) = 1-2u \qquad \text{if } a \neq b,
\end{split}
\ee
and $\iota'(A)=0$ for all other subsets $A$ of $\{-S,\dots,S\}^{4}$. Let $\iota''$ be defined by
\be
\iota'' \Bigl( \Bigl\{ \transit{a}{a}{a}{a} \Bigr\}_{a \in \{-S,\dots,S\}} \Bigr) = 1
\ee
and $\iota''(A)=0$ on all other subsets $A$. Using Lemma \ref{lem add intensities} one verifies that
\bm
Z(v) = \int\dd\rho_{\iota'}(\xi') \int\dd\rho_{\iota''}(\xi'') \sum_{\sigma \in \Sigma(\xi' \cup \xi'')} \exp \biggl\{ -\sum_{\{x,y\} \in \caE} \int_{0}^{\beta} \dd t \bigl( \vec\sigma_{xt} + \tfrac12 \vec v_{x} - \vec\sigma_{yt} - \tfrac12 \vec v_{y} \bigr)^{2} \\
+ \sum_{\{x,y\} \in \caE} \int_{0}^{\beta} \dd t (\vec\sigma_{xt} - \vec\sigma_{yt})^{2} \biggr\}.
\end{multline}
Next, observe that given a realization $\xi'$ of $\rho_{\iota'}$, we have
\be
\int\dd\rho_{\iota''}(\xi'') \sum_{\sigma \in \Sigma(\xi' \cup \xi'')} F(\sigma) = \sum_{\sigma \in \Sigma(\xi')} F(\sigma) \int\dd\rho_{\iota''}(\xi'') \prod_{(x,y,t) \in \xi''} \delta_{\sigma_{xt}, \sigma_{yt}}.
\ee
This holds because space-time spin configurations are constant at the transitions of $\xi''$. Consequently, given $\xi'$ and $\sigma \in \Sigma(\xi')$,
\be
\begin{split}
\int\dd\rho_{\iota''}(\xi'') \prod_{(x,y,t) \in \xi''} \delta_{\sigma_{xt},\sigma_{yt}} &= \lim_{N\to\infty} \prod_{\{x,y\}\in\caE} \prod_{t=1}^{N} \Bigl( 1 - \frac\beta N + \frac\beta N \delta_{\sigma_{xt}, \sigma_{yt}} \Bigr) \\
&= \lim_{N\to\infty} \prod_{\{x,y\}\in\caE} \prod_{t=1}^{N} \Bigl( 1 - \frac\beta N (1 - \delta_{\sigma_{xt}, \sigma_{yt}}) \Bigr) \\
&= \exp\biggl\{ -\sum_{\{x,y\}\in\caE} \int_{0}^{\beta} \dd t \, (1 - \delta_{\sigma_{xt}, \sigma_{yt}}) \biggr\}.
\end{split}
\ee
We now use $1 - \delta_{\sigma_{xt}, \sigma_{yt}} = (\vec\sigma_{xt} - \vec\sigma_{yt})^{2}$ and we obtain
\be
Z(v) = \int\dd\rho_{\iota'}(\xi') \sum_{\sigma\in\Sigma(\xi')} \exp \biggl\{ -\sum_{\{x,y\} \in \caE} \int_{0}^{\beta} \dd t \bigl( \vec\sigma_{xt} + \tfrac12 \vec v_{x} - \vec\sigma_{yt} - \tfrac12 \vec v_{y} \bigr)^{2} \biggr\}.
\ee
The measure $\rho_{\iota'}$ is nonnegative if $u \in [0,\frac12]$, and it is reflection-symmetric.  See Fig.\ \ref{fig rp space} for an illustration.

\begin{centering}
\bfig
\includegraphics[width=5cm]{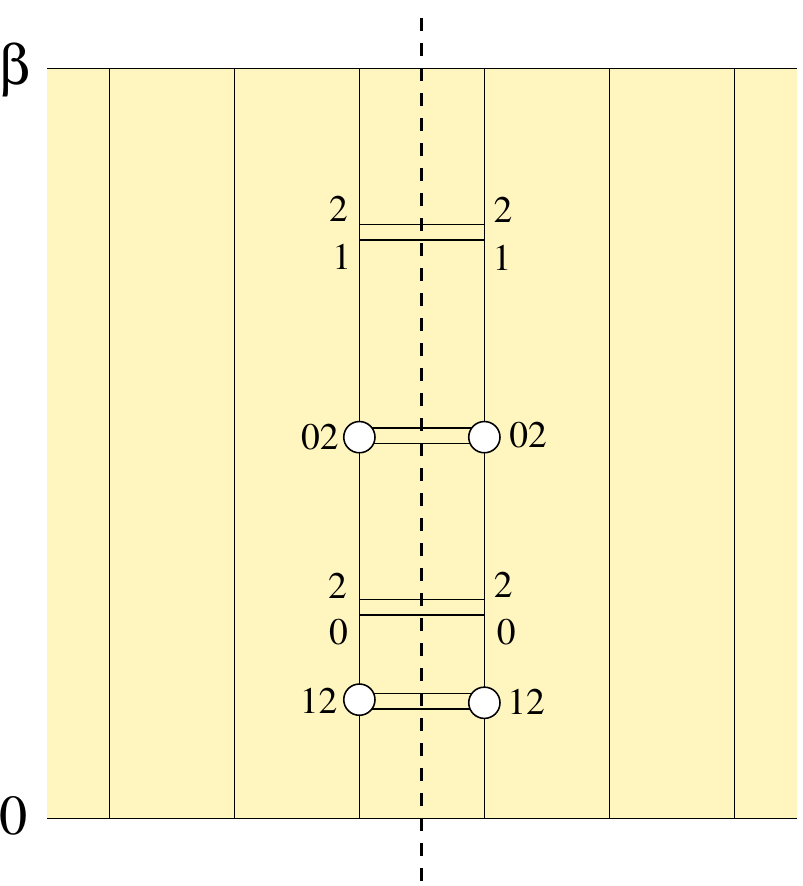}
\caption{Reflection in the space direction. We chose $S=2$ in this illustration. The white circles mean that the spin must flip from one prescribed value to the other, independently on both sides. It follows that the Poisson point process $\rho_{\iota'}$ is reflection symmetric since the constraints are identical in both halves.}
\label{fig rp space}
\efig
\end{centering}

Let us partition $\caE = \caE^{(1)} \cup \caE^{(2)} \cup \bar\caE$, where $\caE^{(1)}$ and $\caE^{(2)}$ are the sets of edges of $\Lambda^{(1)}$ and $\Lambda^{(2)}$, respectively, and $\bar\caE$ is the set of edges with endpoints in both $\Lambda^{(1)}$ and $\Lambda^{(2)}$. We can write
\be
\begin{split}
Z(v) = &\int_{\bar\caE \times [0,\beta]} \dd\rho_{\iota'}(\bar\xi) \\
&\biggl[ \int_{\caE^{(1)} \times [0,\beta]} \dd\rho_{\iota'}(\xi^{(1)}) \sum_{\sigma \in \Sigma_{\Lambda^{(1)}}(\xi^{(1)},\bar\xi)} \exp\Bigl\{ -\sum_{\{x,y\} \in \caE^{(1)}} \int_{0}^{\beta} \dd t \bigl( \vec\sigma_{xt} + \tfrac12 \vec v_{x} - \vec\sigma_{yt} - \tfrac12 \vec v_{y} \bigr)^{2} \Bigr\} \biggr] \\
&\biggl[ 1 \leftrightarrow 2 \biggr] \exp\Bigl\{ -\sum_{\{x,y\} \in \bar\caE} \int_{0}^{\beta} \dd t \bigl( \vec\sigma_{xt} + \tfrac12 \vec v_{x} - \vec\sigma_{yt} - \tfrac12 \vec v_{y} \bigr)^{2} \Bigr\}.
\end{split}
\ee
Here, $\Sigma_{\Lambda^{(1)}}(\xi^{(1)},\bar\xi)$ is the set of space-time configurations on $\Lambda^{(1)} \times [0,\beta]$ that are compatible with the specifications of $\xi^{(1)}$ and $\bar\xi$.

The couplings between both sides can be handled with the help of extra fields, as in \cite{FSS}. Namely, we have
\be
\begin{split}
\exp\biggl\{ &-\sum_{\{x,y\} \in \bar\caE} \int_{0}^{\beta} \dd t \bigl( \vec\sigma_{xt} + \tfrac12 \vec v_{x} - \vec\sigma_{yt} - \tfrac12 \vec v_{y} \bigr)^{2} \biggr\} \\
&= \lim_{N\to\infty} \prod_{\{x,y\} \in \bar\caE} \prod_{j=1}^{N} \exp\bigl\{ -\tfrac\beta N \bigl( \vec\sigma_{x,\frac{j\beta}N} + \tfrac12 \vec v_{x} - \vec\sigma_{y,\frac{j\beta}N} - \tfrac12 \vec v_{y} \bigr)^{2} \bigr\} \\
&=\lim_{N\to\infty} \prod_{\{x,y\} \in \bar\caE} \prod_{j=1}^{N} \int_{\bbR^{2S}} \dd\vec\alpha_{x,y,\frac{j\beta}N} \Bigl( \frac N{4\pi\beta} \Bigr)^{S} \exp\bigl\{ -\tfrac N{4\beta} \vec\alpha_{x,y,\frac{j\beta}N}^{2} \bigr\} \\
&\hspace{30mm} \exp\bigl\{ \ii \vec\alpha_{x,y,\frac{j\beta}N} \cdot (\vec\sigma_{x,\frac{j\beta}N} + \tfrac12 \vec v_{x} - \vec\sigma_{y,\frac{j\beta}N} - \tfrac12 v_{y}) \bigr\}.
\end{split}
\ee
Let us introduce
\be
\begin{split}
&F_{N}(v^{(1)}, \bar\xi, \{\vec\alpha_{x,y,t}\}) = \int_{\caE^{(1)} \times [0,\beta]} \dd\rho_{\iota'}(\xi^{(1)}) \sum_{\sigma \in \Sigma_{\Lambda^{(1)}}(\xi^{(1)},\bar\xi)} \\
&\exp\biggl\{ -\sum_{\{x,y\} \in \caE^{(1)}} \int_{0}^{\beta} \dd t \bigl( \vec\sigma_{xt} + \tfrac12 \vec v_{x} - \vec\sigma_{yt} - \tfrac12 \vec v_{y} \bigr)^{2} \biggr\} \exp\biggl\{ -\ii \sum_{\{x,y\} \in \bar\caE} \sum_{j=1}^{N} \vec\alpha_{x,y,\frac{j\beta}N} \cdot (\vec\sigma_{x,\frac{j\beta}N} + \tfrac12 \vec v_{x}) \biggr\}.
\end{split}
\ee
With this definition, the partition function $Z(v)$ takes the form
\bm
Z(v) = \lim_{N\to\infty} \int_{\bar\caE \times [0,\beta]} \dd\rho_{\iota'}(\bar\xi) \biggl( \prod_{\{x,y\} \in \bar\caE} \prod_{j=1}^{N} \int_{\bbR^{2S}} \dd\vec\alpha_{x,y,\frac{j\beta}N} \Bigl( \frac N{4\pi\beta} \Bigr)^{S} \e{-\frac N{4\beta} \vec\alpha_{x,y,\frac{j\beta}N}^{2}} \biggr) \\
F_{N}(v^{(1)}, \bar\xi, \{\vec\alpha_{x,y,t}\}) \overline{F_{N}}(v^{(2)}, \bar\xi, \{\vec\alpha_{x,y,t}\}).
\end{multline}
Using the Cauchy-Schwarz inequality, and retracing our steps backwards, we obtain the claim of the proposition.
\end{proof}

\begin{proposition}
\label{prop max Z(v)}
Assume that $u \in [0,\frac12]$ and $\theta = 2,3,\dots$ Then $Z(v)$ is maximized by $v \equiv 0$.
\end{proposition}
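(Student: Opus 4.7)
The plan is to derive the result by Gaussian domination, i.e., by iterating the reflection-positivity inequality of Proposition \ref{prop rp}. The overall strategy follows \cite{FSS, FILS1}.

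First I would observe that the partition function $Z(v)$ in \eqref{fpart fields} depends on $v$ only through the nearest-neighbor differences $(v_x - v_y)$, because both the linear term $(\vec\sigma_{xt}^{(1)} - \vec\sigma_{yt}^{(1)})(v_x - v_y)$ and the quadratic term $\tfrac{1}{4}(v_x - v_y)^2$ in the exponent vanish when $v$ is constant. Consequently, for any constant field $v \equiv c$ one has $Z(v) = Z(0)$. The problem is thus reduced to showing that $Z(v)$ is bounded above by $Z$ evaluated on \emph{some} constant field.

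Next I would chain Proposition \ref{prop rp}. Taking square roots, the proposition reads
\[
Z(v) \leq \bigl[ Z(v^{(1)}, Rv^{(1)}) \bigr]^{1/2} \bigl[ Z(Rv^{(2)}, v^{(2)}) \bigr]^{1/2},
\]
where both arguments on the right are new fields, symmetric across $R$. Applying the same inequality to each of these two factors with a different reflection plane $R'$ produces four terms with exponent $\tfrac14$, and so on. Iterating over all $L$ parallel planes $R_{i,\ell}$ ($\ell = \tfrac{1}{2}, \dots, L-\tfrac{1}{2}$) in each coordinate direction $i \in \{1,\dots,d\}$, and using that compositions of two consecutive reflections generate translations (so that a field symmetric under all reflections in direction $i$ is forced to be constant in direction $i$ on the torus), one eventually obtains the chessboard-type inequality
\[
Z(v) \leq \prod_{x \in \Lambda_L} Z\bigl( \text{constant field equal to } v_x \bigr)^{1/|\Lambda_L|}.
\]
Combining with the first observation, every factor on the right equals $Z(0)$, and since the exponents sum to $1$ we conclude $Z(v) \leq Z(0)$.

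The main technical point is the bookkeeping during the iteration: each round of reflection positivity doubles the number of terms and halves their exponents, and one must verify that after reflecting through all planes in all $d$ coordinate directions the surviving fields have indeed been collapsed to constants on $\Lambda_L$. This step is entirely standard (see, e.g., \cite{FILS1}) and involves no ideas beyond the reflection-positivity property already established in Proposition \ref{prop rp}; in particular the restrictions $u \in [0, \tfrac{1}{2}]$ and $\theta \in \{2, 3, \dots\}$ needed by that proposition carry over without change.
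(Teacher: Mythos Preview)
Your proof is correct and uses the same essential ingredient as the paper---iterated reflection positivity from Proposition~\ref{prop rp}---but the organization differs slightly. The paper first establishes the \emph{existence} of a maximizer (via a compactness argument: $Z(v)$ depends only on differences, so one may fix $v_{0}=0$, and then $Z(v^{(n)})\to 0$ when $\sup_{x}|v_{x}^{(n)}|\to\infty$), and then applies reflections to a given maximizer to produce new maximizers with progressively more constant components, until a constant maximizer is reached. You instead bypass the existence step entirely by running the chessboard iteration directly on an arbitrary $v$, arriving at $Z(v) \le \prod_{x} Z(\text{const}=v_{x})^{1/|\Lambda_{L}|} = Z(0)$. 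Your route is marginally cleaner in that no compactness is needed; the paper's route makes the structure of the set of maximizers more explicit. Both are entirely standard variants found in \cite{FSS,FILS1,DLS,Bis}.
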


\begin{proof}
It is not hard to show that maximizers exist: $Z(v)$ is continuous and positive, we can fix one of the field values to be 0, and $Z(v^{(n)})$ tends to 0 along any sequence satisfying $\sup_{x} |v_{x}^{(n)}| \to \infty$ as $n\to\infty$, with $v_{0}^{(n)}=0$. The maximum can then be taken on a compact set.

The argument of the proof is then standard, one uses reflections to construct a sequence of maximizers where more and more fields are constant, until they are all identical. See Fig.\ \ref{fig rp maximize} for a quick illustration, and the references \cite{FSS,DLS,Bis} or the notes \cite{Toth2,Uel} for more details.
\end{proof}

\bfig
\includegraphics[width=130mm]{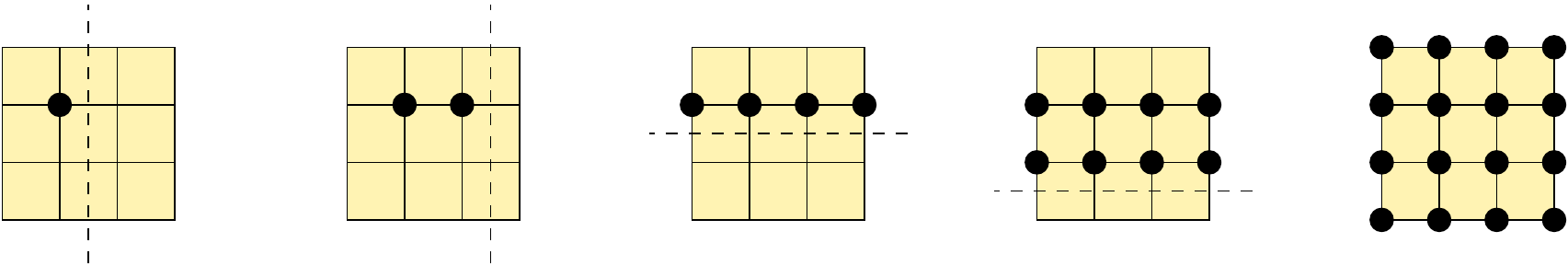}
\caption{Starting with a maximizer, reflections yield further maximizers where more and more values are identical.}
\label{fig rp maximize}
\efig

\subsection{Infrared bound for the correlation function}
\label{sec irb}

The name ``infrared bound'' refers to estimates of the Fourier transform of $\bbP(E_{0,x,0})$ around $k=0$, that is, for low ``frequencies''. We first use Corollary \ref{prop max Z(v)} in order to derive a bound on the Fourier transform of the Duhamel function. Then we use Falk-Bruch inequality to extend it to the ordinary correlation function. These steps follow \cite{DLS}.

Recall the definition of $\widetilde\kappa$ in Eq.\ \eqref{kappa FT}.

\begin{proposition}
\label{prop irb Duhamel}
Assume that $u \in [0,\frac12]$ and $\theta = 2,3,\dots$ For $k \in \Lambda^{*} \setminus \{0\}$, we have
\[
\widetilde\kappa(k,0) \leq \frac{2S+1}{\eps(k)}.
\]
\end{proposition}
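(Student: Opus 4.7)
The plan follows the classical Gaussian-domination strategy of \cite{FSS,DLS}, now applied to the random loop measure via Proposition \ref{prop max Z(v)}. I would set $v_x=\alpha\phi_x$ for a trial function $\phi:\Lambda_L\to\bbR$ and $\alpha\in\bbR$, and factor
\[
Z(\alpha\phi)=Z(0)\,\e{-\tfrac14\alpha^2\beta(\phi,-\Delta\phi)}\,\bigl\langle \e{-\alpha A(\phi)}\bigr\rangle,
\]
where $\langle\cdot\rangle$ denotes expectation in the random loop measure at $v=0$,
\[
A(\phi)=\sum_{\{x,y\}\in\caE_L}\int_0^\beta(\vec\sigma_{xt}^{(1)}-\vec\sigma_{yt}^{(1)})(\phi_x-\phi_y)\,\dd t,
\]
and $(\phi,-\Delta\phi)=\sum_{\{x,y\}\in\caE_L}(\phi_x-\phi_y)^2$ is the Dirichlet form. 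Averaging the two estimates $Z(\pm\alpha\phi)\le Z(0)$ yields $\langle\cosh(\alpha A(\phi))\rangle\le\e{\alpha^2\beta(\phi,-\Delta\phi)/4}$, and expanding to order $\alpha^2$ produces
\[
\langle A(\phi)^2\rangle\le\tfrac12\beta(\phi,-\Delta\phi).
\]

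Next I would re-express the left-hand side in terms of $\kappa$. Summation by parts gives $A(\phi)=\sum_xf_x(-\Delta\phi)_x$ with $f_x=\int_0^\beta\vec\sigma_{xt}^{(1)}\,\dd t$, so
\[
\langle A(\phi)^2\rangle=\sum_{x,y}(-\Delta\phi)_x(-\Delta\phi)_y\,\langle f_xf_y\rangle.
\]
The crucial identity is
\[
\bigl\langle\vec\sigma_{xt}^{(1)}\vec\sigma_{ys}^{(1)}\bigr\rangle=\tfrac{1}{2(2S+1)}\,\kappa(x-y,t-s).
\]
To see this, observe that the loop measure is invariant under the simultaneous action of any permutation $\pi\in S_{2S+1}$ on the spin labels of every loop; this action on $\bbR^{2S}$ is the standard irreducible representation of $S_{2S+1}$, so by Schur's lemma
\[
\langle\vec\sigma^{(i)}_{xt}\vec\sigma^{(j)}_{ys}\rangle=\tfrac{1}{2S}\,\delta_{ij}\,\langle\vec\sigma_{xt}\cdot\vec\sigma_{ys}\rangle.
\]
Conditioning on the event $E_{x,y,t-s}$, and using $\|\vec a\|^2=\tfrac{S}{2S+1}$ on the event together with $\sum_a\vec a=0$ off it (both from \eqref{simplex}), one gets $\langle\vec\sigma_{xt}\cdot\vec\sigma_{ys}\rangle=\tfrac{S}{2S+1}\kappa(x-y,t-s)$. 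Integrating in $t,s\in[0,\beta]$ then yields $\langle f_xf_y\rangle=\tfrac{\beta}{2(2S+1)}\int_0^\beta\kappa(x-y,u)\,\dd u$.

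The final step is Fourier analysis. Since $\widehat{-\Delta\phi}(k)=\eps(k)\widehat\phi(k)$, and by \eqref{kappa FT} we have $\widetilde\kappa(k,0)=\sum_x\e{-\ii kx}\int_0^\beta\kappa(x,u)\,\dd u$, Parseval turns the bound on $\langle A(\phi)^2\rangle$ into
\[
\tfrac{1}{L^d}\sum_{k\in\Lambda^*}\eps(k)^2|\widehat\phi(k)|^2\,\widetilde\kappa(k,0)\le(2S+1)\tfrac{1}{L^d}\sum_{k\in\Lambda^*}\eps(k)|\widehat\phi(k)|^2.
\]
Specializing to $\phi_x=\cos(k\cdot x)$ for a fixed $k\in\Lambda^*\setminus\{0\}$ concentrates $\widehat\phi$ on $\{\pm k\}$, and one reads off $\widetilde\kappa(k,0)\le(2S+1)/\eps(k)$.

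The main obstacle is the tensor identity for $\langle\vec\sigma^{(1)}_{xt}\vec\sigma^{(1)}_{ys}\rangle$: it requires unpacking the simplex embedding \eqref{simplex} together with the $S_{2S+1}$ invariance of the loop measure. Everything else---the vanishing of the first-order term in $\alpha$ (which follows from $\sum_a\vec a=0$), the Gaussian-domination expansion, and the Fourier diagonalization---is routine.
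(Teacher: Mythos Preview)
Your proposal is correct and follows essentially the same approach as the paper: both derive the bound from Proposition~\ref{prop max Z(v)} by expanding $Z(\eta v)$ to second order with the trial field $v_x=\cos(k\cdot x)$, using the same key identity $\langle\vec\sigma_{xt}^{(1)}\vec\sigma_{ys}^{(1)}\rangle=\tfrac{1}{2(2S+1)}\kappa(x-y,t-s)$, and then reading off the Fourier-space inequality. The only differences are presentational---you keep a general $\phi$ before specializing, use the $\cosh$ trick to eliminate odd terms, and justify the tensor identity via Schur's lemma where the paper computes it directly from the simplex relations~\eqref{simplex}.
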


From now on, we let $(v,v')$ denote the inner product in $\ell^{2}(\Lambda)$, that is,
\be
(v,v') = \sum_{x\in\Lambda} \bar v_{x} v_{x}'.
\ee

\begin{proof}
We have
\be
\begin{split}
\kappa(x,t) &= \frac1{Z^{(u)}(\beta,\Lambda)} \int\dd\rho_{u}(\omega) \sum_{\sigma\in\Sigma(\omega)} 1_{E_{0,x,t}}(\omega) \\
&= \frac1{Z^{(u)}(\beta,\Lambda)} \int\dd\rho_{u}(\omega) \sum_{\sigma\in\Sigma(\omega)} \frac{2S+1}S \vec\sigma_{00} \cdot \vec\sigma_{xt} \\
&=\frac{2S+1}S \bbE( \vec\sigma_{0,0} \cdot \vec\sigma_{x,t}).
\end{split}
\ee
It follows that
\be
\frac S{2S+1} \widetilde\kappa(k,0) = \sum_{x\in\Lambda} \int_{0}^{\beta} \dd t \e{\ii kx} \bbE( \vec\sigma_{00} \cdot \vec\sigma_{xt} ).
\ee
Let $\Delta$ denote the discrete Laplacian, such that
\be
(\Delta v)_{x} = \sum_{y: \{x,y\} \in \caE} (v_{y} - v_{x}).
\ee
We have
\be
Z(v) = \int\dd\rho_{\iota}(\xi) \sum_{\sigma\in\Sigma(\xi)} \exp \biggl\{ \int_{0}^{\beta} \dd t (\vec\sigma_{\cdot,t}^{(1)}, \Delta v) + \frac\beta4 (v, \Delta v) \biggr\}.
\ee
We choose $v = \cos(kx)$. Expanding around $v=0$ to second order, using $-\Delta v = \eps(k) v$, we get
\be
\begin{split}
Z(\eta v) &= Z(0) + \int\dd\rho_{\iota}(\xi) \sum_{\sigma\in\Sigma(\xi)} \biggl[ \tfrac12 \eta^{2} \int_{0}^{\beta} \dd t \int_{0}^{\beta} \dd t' \eps(k)^{2} (\vec\sigma^{(1)}_{\cdot,t}, v) (\vec\sigma^{(1)}_{\cdot,t'}, v) - \tfrac14 \eta^{2} \beta \eps(k) (v,v) \biggr] + O(\eta^{4})\\
&= Z(0) \biggl[ 1 + \tfrac12 \eta^{2} \beta \eps(k)^{2} \int_{0}^{\beta} \dd t \; \bbE \Bigl( (\vec\sigma^{(1)}_{\cdot,0}, v) (\vec\sigma^{(1)}_{\cdot,t}, v) \Bigr) - \tfrac14 \eta^{2} \beta \eps(k) \biggr] + O(\eta^{4}).
\end{split}
\ee
We calculate the expectation:
\be
\begin{split}
\bbE \Bigl( (\vec\sigma^{(1)}_{\cdot,0}, v) (\vec\sigma^{(1)}_{\cdot,t}, v) \Bigr) &= \sum_{x,y\in\Lambda} \cos kx \cos ky \; \bbE \Bigl( \vec\sigma_{x0}^{(1)} \vec\sigma_{yt}^{(1)} \Bigr) \\
&= \sum_{x,z\in\Lambda} \cos kx \cos k(x-z) \; \bbE \Bigl( \vec\sigma_{00}^{(1)} \vec\sigma_{zt}^{(1)} \Bigr).
\end{split}
\ee
The last line is obtained by replacing $y$ by $-y$, by using translation invariance, and with the substitution $z = x+y$. Observe now that
\be
\bbE( \vec\sigma_{00}^{(1)} \vec\sigma_{zt}^{(1)} ) = \frac1{2S} \bbE ( \vec\sigma_{0,0} \cdot \vec\sigma_{zt} ) = \frac1{2 (2S+1)} \kappa(z,t).
\ee
We now get the Fourier transform of correlations:
\be
\begin{split}
\sum_{z\in\Lambda} \cos k(x-z) \bbE\Bigl( \vec\sigma_{00}^{(1)} \vec\sigma_{zt}^{(1)} \Bigr) &= \frac1{2 (2S+1)} \Re \sum_{z\in\Lambda} \e{\ii kx} \e{-\ii kz} \kappa(z,t) \\
&= \frac1{2 (2S+1)} \Re \e{\ii kx} \widehat\kappa(k,t) \\
&= \frac1{2 (2S+1)} \cos kx \; \widehat\kappa(k,t).
\end{split}
\ee
The last identity holds because $\widehat\kappa(k,t)$ is real due to lattice symmetries. Then
\be
Z(\eta v) = Z(0) (v,v) \biggl[ 1 + \tfrac1{4(2S+1)} \eta^{2} \beta \eps(k)^{2} \int_{0}^{\beta} \dd t \; \widehat\kappa(k,t) - \tfrac14 \eta^{2} \beta \eps(k) \biggr] + O(\eta^{4}).
\ee
The bracket is negative for small $\eta$, so that
\be
\frac1{2S+1} \eps(k) \int_{0}^{\beta} \dd t \; \widehat\kappa(k,t) \leq 1.
\ee
The proposition follows.
\end{proof}

The next step is to transfer this bound to the Fourier transform $\widehat\kappa(k,0)$, see Eq.\ \eqref{kappa FT}. We need to compute the double commutator of the Falk-Bruch inequality.
Let us introduce the operators $Q_{xy}^{33}$ and $T_{xy}^{33}$ in $\caH_{x,y}$ by
\be
\label{def commutator operators}
\begin{split}
&\langle a,b| Q_{xy}^{33} |c,d\rangle = (a-c)^{2} \delta_{ab} \delta_{cd}, \\
&\langle a,b| T_{xy}^{33} |c,d\rangle = (a-c)^{2} \delta_{ad} \delta_{bc}.
\end{split}
\ee
The peculiar notation is motivated by the fact that these operators are given by double commutators with $S_{x}^{3}$.

\begin{lemma}
\label{lem Q33 T33}
\[
\begin{split}
&Q_{xy}^{33} = -[S_{x}^{3}, [Q_{xy},S_{x}^{3}]] = -[S_{y}^{3}, [Q_{xy},S_{x}^{3}]], \\
&T_{xy}^{33} = -[S_{x}^{3}, [T_{xy},S_{x}^{3}]] = [S_{y}^{3}, [T_{xy},S_{x}^{3}]] = (S_{x}^{3} - S_{y}^{3})^{2} T_{xy}.
\end{split}
\]
\end{lemma}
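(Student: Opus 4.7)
The plan is to verify all four identities by direct computation of matrix elements in the basis $\{|a,b\rangle\}_{a,b=-S}^{S}$ of $\caH_{x,y}$. This is purely mechanical once one notes that $S_{x}^{3}$ and $S_{y}^{3}$ are diagonal in this basis with eigenvalues $a$ and $b$ respectively, so that for any operator $M$,
\[
\langle a,b| [S_{x}^{3},M] |c,d\rangle = (a-c) \langle a,b|M|c,d\rangle, \qquad \langle a,b| [S_{y}^{3},M] |c,d\rangle = (b-d) \langle a,b|M|c,d\rangle.
\]

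For the $Q_{xy}$ identities, I would first compute
\[
\langle a,b|[Q_{xy},S_{x}^{3}]|c,d\rangle = (c-a)\,\delta_{ab}\delta_{cd},
\]
then apply the first formula above with $S_{x}^{3}$ to get $-(a-c)^{2}\delta_{ab}\delta_{cd}$, which by \eqref{def commutator operators} is exactly $-Q_{xy}^{33}$. For the second $Q$ identity, applying the $S_{y}^{3}$ formula instead gives $(b-d)(c-a)\delta_{ab}\delta_{cd}$, and the Kronecker deltas force $b=a$ and $d=c$, so $(b-d)=(a-c)$ and the expression collapses to $-(a-c)^{2}\delta_{ab}\delta_{cd}$ as well.

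For the $T_{xy}$ identities, the analogous first step gives
\[
\langle a,b|[T_{xy},S_{x}^{3}]|c,d\rangle = (c-a)\,\delta_{ad}\delta_{bc}.
\]
Taking a further commutator with $S_{x}^{3}$ produces $-(a-c)^{2}\delta_{ad}\delta_{bc} = -T_{xy}^{33}$. Taking the outer commutator with $S_{y}^{3}$ instead produces $(b-d)(c-a)\delta_{ad}\delta_{bc}$; here the deltas force $d=a$ and $c=b$, so $(b-d)=(b-a) = -(c-a)$ and the product becomes $+(a-c)^{2}\delta_{ad}\delta_{bc}$, i.e.\ $+T_{xy}^{33}$, explaining the sign flip relative to the $Q$-case. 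Finally, for the last equality it suffices to note that
\[
\langle a,b| (S_{x}^{3}-S_{y}^{3})^{2} T_{xy} |c,d\rangle = (a-b)^{2}\,\delta_{ad}\delta_{bc},
\]
and the factor $\delta_{bc}$ allows us to replace $(a-b)^{2}$ by $(a-c)^{2}$, recovering $T_{xy}^{33}$.

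There is no real obstacle here beyond bookkeeping of signs and indices; the only subtle point is that under the constraint $\delta_{ad}\delta_{bc}$ the ``momentum transfer'' $(b-d)$ equals $-(c-a)$ rather than $+(c-a)$, which is precisely what produces the asymmetry between the $Q$ and $T$ formulas (and explains the sign pattern in the statement). I would present the four computations in parallel to make this contrast transparent.
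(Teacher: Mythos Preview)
Your approach is exactly the one the paper takes (``immediate by looking at the action of these operators on basis elements''), and your computations are correct. One small slip: in the $[S_{y}^{3},[T_{xy},S_{x}^{3}]]$ step, under the constraints $d=a$, $c=b$ you have $(b-d)=b-a=(c-a)$, not $-(c-a)$; the product $(b-d)(c-a)=(b-a)^{2}=(a-c)^{2}$ is still positive and your conclusion stands, but the intermediate sign in your explanation should be fixed.
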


The proof of the lemma is immediate by looking at the action of these operators on basis elements. Next we consider
\be
\widehat S_{k}^{3} = \sum_{x\in\Lambda} \e{-\ii kx} S_{x}^{3},
\ee
and we introduce
\be
\begin{split}
\tau_{1}^{(u)} &= \langle T_{0,e_{i}}^{33} \rangle, \\
\tau_{0}^{(u)} &= \langle Q_{0,e_{i}}^{33} \rangle,
\end{split}
\ee
where the expectations are taken in the Gibbs state with Hamiltonian $H_{\Lambda}^{(u)}$. These numbers turn out to be related to the probabilities of $E_{0,e_{1},0}^{\pm}$, see Lemma \ref{lem bornes tau} below.

\begin{lemma}
\label{lem double commutateur}
 \[
\langle [\widehat S_{-k}^{3}, [H_{\Lambda}^{(u)}, \widehat S_{k}^{3}]] \rangle = |\Lambda| \bigl( u \eps(k) \tau_{1}^{(u)} + (1-u) \eps(k+\pi) \tau_{0}^{(u)} \bigr).
\]
\end{lemma}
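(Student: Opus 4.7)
The plan is a direct computation organized around two conservation laws: $T_{xy}$ commutes with $S_x^3+S_y^3$, while $Q_{xy}$ commutes with $S_x^3-S_y^3$ (this second fact is easily checked from the matrix elements $\langle a,b|Q_{xy}|c,d\rangle=\delta_{ab}\delta_{cd}$, which give $[Q_{xy},S_x^3]=[Q_{xy},S_y^3]$ acting on basis vectors). These observations let one collapse the inner commutator into a scalar times a single elementary commutator.

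First I would expand
\[
[H_{\Lambda}^{(u)},\widehat S_k^3]=-\sum_{\{x,y\}\in\caE}\bigl(u[T_{xy},\widehat S_k^3]+(1-u)[Q_{xy},\widehat S_k^3]\bigr),
\]
noting that only the two terms $z\in\{x,y\}$ in $\widehat S_k^3=\sum_z e^{-ikz}S_z^3$ contribute to each edge's commutator. The conservation laws then give
\[
[T_{xy},\widehat S_k^3]=(e^{-ikx}-e^{-iky})[T_{xy},S_x^3],\qquad [Q_{xy},\widehat S_k^3]=(e^{-ikx}+e^{-iky})[Q_{xy},S_x^3].
\]
Taking one further commutator with $\widehat S_{-k}^3$ produces the same kind of sum, and Lemma \ref{lem Q33 T33} identifies the surviving double commutators: $[S_x^3,[T_{xy},S_x^3]]=-T_{xy}^{33}$, $[S_y^3,[T_{xy},S_x^3]]=T_{xy}^{33}$, and similarly $[S_x^3,[Q_{xy},S_x^3]]=-Q_{xy}^{33}=[S_y^3,[Q_{xy},S_x^3]]$ (the last by $[Q_{xy},S_x^3]=[Q_{xy},S_y^3]$).

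Multiplying the exponential prefactors and using $|e^{-ikx}\pm e^{-iky}|^2=2(1\pm\cos k\cdot(x-y))$ yields
\[
[\widehat S_{-k}^3,[T_{xy},\widehat S_k^3]]=-2(1-\cos k\cdot(x-y))\,T_{xy}^{33},
\]
\[
[\widehat S_{-k}^3,[Q_{xy},\widehat S_k^3]]=-2(1+\cos k\cdot(x-y))\,Q_{xy}^{33}.
\]
Inserting these and flipping the overall sign from $H_\Lambda^{(u)}$ gives positive coefficients $2u(1-\cos k\cdot(x-y))$ and $2(1-u)(1+\cos k\cdot(x-y))$ on $T_{xy}^{33}$ and $Q_{xy}^{33}$ respectively.

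Finally I would take expectations. By translation invariance on the torus and cubic symmetry, $\langle T_{xy}^{33}\rangle=\tau_1^{(u)}$ and $\langle Q_{xy}^{33}\rangle=\tau_0^{(u)}$ for every nearest-neighbor edge. Summing the cosine weights over the $d|\Lambda|$ edges produces $\sum_i 2(1-\cos k_i)|\Lambda|=|\Lambda|\eps(k)$ for the $T$ term and $\sum_i 2(1+\cos k_i)|\Lambda|=|\Lambda|\eps(k+\pi)$ for the $Q$ term, yielding the stated identity. The only real obstacle is bookkeeping of signs in the two nested commutators; the conservation-law collapse in the first step is what makes the calculation clean, since without it one would have to deal with four elementary commutators per edge rather than two.
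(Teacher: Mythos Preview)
Your proof is correct and follows essentially the same route as the paper's: both compute the $T$ and $Q$ contributions separately via Lemma~\ref{lem Q33 T33}, then sum over edges using translation invariance to produce $\eps(k)$ and $\eps(k+\pi)$. Your upfront use of the conservation laws $[T_{xy},S_x^3+S_y^3]=0$ and $[Q_{xy},S_x^3-S_y^3]=0$ to factor the inner commutator as a scalar times $[T_{xy},S_x^3]$ or $[Q_{xy},S_x^3]$ is a slightly cleaner packaging of what the paper does by writing out the ordered-pair sum directly, but the computation is the same.
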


\begin{proof}
We perform the computations for $H^{(0)}_{\Lambda}$ and $H^{(1)}_{\Lambda}$ separately, and we use linearity to get the result for $H_{\Lambda}^{(u)}$.
\be
\begin{split}
[H_{\Lambda}^{(1)}, \widehat S_{k}^{3}] &= \sum_{x\in\Lambda} \e{-\ii kx} [H_{\Lambda}^{(1)}, S_{x}^{3}] \\
&= -\sum_{x,y : \{x,y\} \in \caE} \e{-\ii kx} [T_{xy}, S_{x}^{3}] \\
&= -\sum_{x,y : \{x,y\} \in \caE} \e{-\ii kx} T_{xy} (S_{x}^{3} - S_{y}^{3}).
\end{split}
\ee
\be
\begin{split}
[\widehat S_{-k}^{3}, [H_{\Lambda}^{(1)}, \widehat S_{k}^{3}]] &= -\sum_{x,y : \{x,y\} \in \caE} \e{-\ii kx} [ \e{\ii kx} S_{x}^{3} + \e{\ii ky} S_{y}^{3}, T_{xy} (S_{x}^{3} - S_{y}^{3})] \\
&= -\sum_{x,y : \{x,y\} \in \caE} [ S_{x}^{3} + \e{-\ii k(x-y)} S_{y}^{3}, T_{xy} (S_{x}^{3} - S_{y}^{3})] \\
&= -\sum_{x,y : \{x,y\} \in \caE} \bigl( 1 - \e{-\ii k(x-y)} \bigr) (S_{x}^{3} - S_{y}^{3})^{2} T_{xy} \\
&= 2\sum_{\{x,y\} \in \caE} \bigl( 1 - \cos k(x-y) \bigr) (S_{x}^{3} - S_{y}^{3})^{2} T_{xy}.
\end{split}
\ee
Next, we perform the calculations for $H_{\Lambda}^{(0)}$. For the commutator, we get
\be
[H_{\Lambda}^{(0)}, \widehat S_{k}^{3}] = \sum_{x\in\Lambda} \e{-\ii kx} [H_{\Lambda}^{(0)}, S_{x}^{3}] = -\sum_{x,y : \{x,y\} \in \caE} \e{-\ii kx} [Q_{xy}, S_{x}^{3}].
\ee
And for the double commutator, we get
\be
[\widehat S_{-k}^{3}, [H_{\Lambda}^{(0)}, \widehat S_{k}^{3}]] = -\sum_{x,y : \{x,y\} \in \caE} [ S_{x}^{3} + \e{-\ii k(x-y)} S_{y}^{3}, Q_{xy} (S_{x}^{3} - S_{y}^{3})].
\ee
The result follows from Lemma \ref{lem Q33 T33}.
\end{proof}

\begin{proposition}
\label{prop irb}
\[
\widehat{\langle S_{0}^{3} S_{x}^{3} \rangle}(k) \leq \tfrac12 \sqrt{\tfrac{S(S+1)(2S+1)}3} \sqrt{u\tau_{1}^{(u)} + (1-u) \tau_{0}^{(u)} \frac{\eps(k+\pi)}{\eps(k)}} + \frac{S(S+1)(2S+1)}{3 \beta \eps(k)}.
\]
\end{proposition}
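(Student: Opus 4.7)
The plan is to apply the Falk-Bruch inequality \eqref{FB} with $A = \widehat S_{k}^{3}$, converting the Duhamel estimate of Proposition \ref{prop irb Duhamel} into a bound on the ordinary thermal two-point function, with the double-commutator piece supplied by Lemma \ref{lem double commutateur}. All three ingredients are already on the table, so the proof is essentially assembly.

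First I would rewrite the Fourier transform on the left-hand side as a quantum expectation. Since $(\widehat S_{k}^{3})^{*} = \widehat S_{-k}^{3}$, translation invariance of the Gibbs state gives
\[
\langle \widehat S_{-k}^{3}\, \widehat S_{k}^{3}\rangle = |\Lambda|\, \widehat{\langle S_{0}^{3} S_{x}^{3}\rangle}(k),
\]
and the lattice reflection symmetry $x \mapsto -x$ implies $\langle A^{*}A\rangle = \langle AA^{*}\rangle$, so the left-hand side of Falk-Bruch equals $|\Lambda|\, \widehat{\langle S_{0}^{3} S_{x}^{3}\rangle}(k)$.

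Next I would identify the Duhamel two-point function with the already-estimated $\widetilde\kappa(k,0)$. Unfolding definition \eqref{def Duhamel} and using translation invariance yields
\[
(\widehat S_{k}^{3}, \widehat S_{k}^{3})_{\rm Duh} = |\Lambda|\int_{0}^{\beta} \widehat{\langle S_{0}^{3}; S_{x}^{3}\rangle(t)}(k)\, \dd t,
\]
and Theorem \ref{thm corr}, which gives $\langle S_{0}^{3}; S_{x}^{3}\rangle(t) = \tfrac{1}{3}S(S+1)\kappa(x,t)$, together with $\int_{0}^{\beta}\widehat\kappa(k,t)\,\dd t = \widetilde\kappa(k,0)$, converts this to $\tfrac{|\Lambda| S(S+1)}{3}\widetilde\kappa(k,0)$. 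Proposition \ref{prop irb Duhamel} then yields $(\widehat S_{k}^{3}, \widehat S_{k}^{3})_{\rm Duh} \leq \tfrac{|\Lambda| S(S+1)(2S+1)}{3\eps(k)}$.

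Combining these with Falk-Bruch and Lemma \ref{lem double commutateur} (which supplies $\langle[\widehat S_{-k}^{3},[H_{\Lambda}^{(u)},\widehat S_{k}^{3}]]\rangle = |\Lambda|(u\eps(k)\tau_{1}^{(u)} + (1-u)\eps(k+\pi)\tau_{0}^{(u)})$) and dividing through by $|\Lambda|$, the product of the two square roots produces $\tfrac{1}{2}\sqrt{\tfrac{S(S+1)(2S+1)}{3}}\sqrt{u\tau_{1}^{(u)} + (1-u)\tau_{0}^{(u)}\eps(k+\pi)/\eps(k)}$ after one power of $\eps(k)$ cancels between the Duhamel bound and the double commutator, while the remainder $\tfrac{1}{\beta}(A,A)_{\rm Duh}$ yields the second term in the statement. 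There is no substantive obstacle; the care required is in the bookkeeping of constants and in invoking the lattice reflection symmetry to identify $\langle A^{*} A\rangle$ with $\tfrac{1}{2}\langle A^{*} A + A A^{*}\rangle$.
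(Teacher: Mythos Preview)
Your proof is correct and follows essentially the same route as the paper: apply the Falk--Bruch inequality \eqref{FB} with $A=\widehat S_{k}^{3}$, identify $(A,A)_{\rm Duh}=\tfrac13 S(S+1)|\Lambda|\,\widetilde\kappa(k,0)$ and bound it via Proposition~\ref{prop irb Duhamel}, insert the double commutator from Lemma~\ref{lem double commutateur}, and divide through by $|\Lambda|$. The only cosmetic difference is that the paper writes $\langle A^{*}A+AA^{*}\rangle=2|\Lambda|\,\widehat{\langle S_{0}^{3}S_{x}^{3}\rangle}(k)$ directly (using that $S_{x}^{3}$ and $S_{y}^{3}$ commute, so translation invariance alone suffices), whereas you invoke the lattice reflection $x\mapsto -x$; both are valid and lead to the same identification.
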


\begin{proof}
We use the inequality \eqref{FB} with
\be
\langle A^{*} A + A A^{*}\rangle = \langle \widehat S_{-k}^{3} \widehat S_{k}^{3} + \widehat S_{k}^{3} \widehat S_{-k}^{3} \rangle = 2|\Lambda| \widehat{\langle S_{0}^{3} S_{x}^{3} \rangle}(k).
\ee
We have
\be
{(\hat S_{k}^{3}, \hat S_{k}^{3})_{\rm Duh}} = \tfrac13 S(S+1) |\Lambda| \widetilde\kappa(k,0).
\ee
Using Lemma \ref{lem double commutateur} we get
\bm
2|\Lambda| \widehat{\langle S_{0}^{3} S_{x}^{3} \rangle}(k) \leq \sqrt{\tfrac13 S(S+1) |\Lambda| \widetilde\kappa(k,0)} \sqrt{|\Lambda| \bigl( u \eps(k) \tau_{1}^{(u)} + (1-u) \eps(k+\pi) \tau_{0}^{(u)} \bigr)} \\
+ \frac{2 S(S+1)}{3\beta} |\Lambda| \widetilde\kappa(k,0).
\end{multline}
The claim now follows from Proposition \ref{prop irb Duhamel}.
\end{proof}

The next lemma shows that $\tau_{0}^{(u)}$ and $\tau_{1}^{(u)}$ are related to $\bbP(E^{\pm}_{0,e_{1},0})$.

\begin{lemma}
\label{lem bornes tau}
For any $u \in [0,1]$, we have
\begin{itemize}
\item[(a)] $\tau_{0}^{(u)} = \frac23 S(S+1)(2S+1) \bbP(E_{0,e_{1},0}^{-})$.
\item[(b)] $\tau_{1}^{(u)} = \frac23 S(S+1)(2S+1) \bbP(E_{0,e_{1},0}^{+})$.
\end{itemize}
\end{lemma}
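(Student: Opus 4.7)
The plan is to expand both $\tau_i^{(u)}$ using the Trotter representation from the proof of Theorem \ref{thm corr}, with $Q^{33}_{0,e_1}$ or $T^{33}_{0,e_1}$ inserted at time $0$. From Lemma \ref{lem Q33 T33} and \eqref{def commutator operators}, $Q_{xy}^{33}$ (resp.\ $T_{xy}^{33}$) differs from $Q_{xy}$ (resp.\ $T_{xy}$) only by the matrix-element factor $(a-c)^2$, which in the random-loop language becomes a weight $(\sigma_{x,0^+}-\sigma_{x,0^-})^2$ attached to an extra double bar (resp.\ cross) placed deterministically at $(\{x,y\},0)$. Writing $\omega'$ for the loop configuration $\omega$ augmented by this fixed transition, the same expansion as in the proof of Theorem \ref{thm corr} gives
\[
\Tr Q^{33}_{xy}\e{-\beta H^{(u)}_{\Lambda}} = \int\dd\rho_u(\omega)\sum_{\sigma\in\Sigma(\omega')}\bigl(\sigma_{x,0^+}-\sigma_{x,0^-}\bigr)^2,
\]
and analogously for $T^{33}_{xy}$ with a cross inserted.

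The inner spin-sum is elementary. Since $\sigma \in \Sigma(\omega')$ is constant on each loop of $\omega'$, the summand vanishes whenever $(x,0^+)$ and $(x,0^-)$ lie on the same loop of $\omega'$. Otherwise their spin values range independently and uniformly over $\{-S,\dots,S\}$, so using $\sum_{a,c=-S}^{S}(a-c)^2 = \tfrac{2}{3}S(S+1)(2S+1)^2$ one obtains
\[
\sum_{\sigma\in\Sigma(\omega')}\bigl(\sigma_{x,0^+}-\sigma_{x,0^-}\bigr)^2 = \tfrac{2}{3} S(S+1)\,(2S+1)^{|\caL(\omega')|}\cdot \mathbf{1}\bigl[(x,0^+),(x,0^-)\text{ lie on distinct loops of }\omega'\bigr].
\]

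What remains is purely topological: how does a local transition at $(\{x,y\},0)$ reconnect the four strand-ends $(x,0^\pm)$, $(y,0^\pm)$? In $\omega$ these four endpoints are paired locally as $\{(x,0^-),(x,0^+)\}$ and $\{(y,0^-),(y,0^+)\}$, and are joined externally by two loop segments determined by the rest of $\omega$. A short case analysis---splitting according to whether $(x,0)$ and $(y,0)$ originally belong to the same loop of $\omega$ and, if so, whether the loop has the same or opposite vertical direction at those two points---shows that $(x,0^-)$ and $(x,0^+)$ end up on distinct loops of $\omega'$ if and only if $\omega \in E^-_{x,y,0}$ (when the inserted transition is a double bar) or $\omega \in E^+_{x,y,0}$ (when it is a cross). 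Moreover, in either ``distinct'' case the local reconnection splits a single loop into two, so $|\caL(\omega')|=|\caL(\omega)|+1$. This combinatorial step is the only non-routine part of the argument.

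Plugging these two facts back into the integral yields
\[
\tau_0^{(u)}=\tfrac{2}{3} S(S+1)(2S+1)\bbP(E^-_{0,e_1,0}), \qquad \tau_1^{(u)}=\tfrac{2}{3} S(S+1)(2S+1)\bbP(E^+_{0,e_1,0}),
\]
as claimed. Everything outside the topological case analysis is a direct rewriting of machinery already developed for Theorems \ref{thm corr} and \ref{thm integer spin}.
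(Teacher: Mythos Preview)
Your proof is correct and follows essentially the same route as the paper's own argument: both expand $\langle Q_{0e_1}^{33}\rangle$ (resp.\ $\langle T_{0e_1}^{33}\rangle$) via the Trotter representation to obtain an integral over $\omega$ with a deterministic extra double bar (resp.\ cross) inserted at $(\{0,e_1\},0)$, then perform the same topological case analysis on how the added transition reconnects the four strand-ends, and finally evaluate the spin sum using $\sum_{a,b}(a-b)^2=\tfrac23 S(S+1)(2S+1)^2$. The paper organizes the case analysis by first splitting on $E_{0,e_1,0}$ versus its complement and then on $E^{+}$ versus $E^{-}$, whereas you state the outcome directly; but the content is the same.
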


\begin{proof}
We start with (a). Given a realization $\omega$ of the Poisson point process $\rho_{u}$, we denote $\bar\omega$ the realization where we add a double bar on the edge $(0,e_{1})$ at time $t=0$. We have
\be
\label{expression tau0}
\begin{split}
\tau_{0}^{(u)} = \langle Q_{0e_{1}}^{33} \rangle &= \frac1{Z^{(u)}(\beta,\Lambda)} \int\dd\rho_{u}(\omega) \sum_{\sigma \in \Sigma(\bar\omega)} (\sigma_{0,0+} - \sigma_{0,0-})^{2} \\
&= \frac1{Z^{(u)}(\beta,\Lambda)} \biggl[ \int_{E_{0,e_{1},0}^{\rm c}} \dd\rho_{u}(\omega) \sum_{\sigma \in \Sigma(\bar\omega)} (\sigma_{0,0+} - \sigma_{0,0-})^{2} \\
&\hspace{19mm} + \int_{E_{0,e_{1},0}} \dd\rho_{u}(\omega) \sum_{\sigma \in \Sigma(\bar\omega)} (\sigma_{0,0+} - \sigma_{0,0-})^{2} \biggr].
\end{split}
\ee
The first integral is over realizations $\omega$ such that $(0,0)$ and $(e_{1},0)$ belong to different loops. The additional double bar merges these two loops, so that $\bar\omega \in E_{0,e_{1},0}$ and $\sigma_{0,0+} = \sigma_{0,0-}$ for all compatible configurations. The first integral is then zero.

The second integral is over $\omega \in E_{0,e_{1},0}$. In this case, $\bar\omega \in E_{0,e_{1},0}$ if $\omega \in E_{0,e_{1},0}^{+}$ and $\bar\omega \in E_{0,e_{1},0}^{\rm c}$ if $\omega \in E_{0,e_{1},0}^{-}$. We get 0 in the first case, and
\be
\begin{split}
\sum_{\sigma\in\Sigma(\bar\omega))} (\sigma_{0,0+} - \sigma_{0,0-})^{2} &= (2S+1)^{|\caL(\omega)|-1} \sum_{a,b=-S}^{S} (a-b)^{2} \\
&= (2S+1)^{|\caL(\omega)|} \tfrac23 S(S+1)(2S+1)
\end{split}
\ee
in the second case. We used the identity
\be
\frac1{2S+1} \sum_{a,b=-S}^{S} (a-b)^{2} = \tfrac23 S(S+1)(2S+1)
\ee
which holds for all $S \in \frac12 \bbN$.

The proof of (b) is very similar. We have
\be
\tau_{1}^{(u)} = \frac1{Z^{(u)}(\beta,\Lambda)} \int\dd\rho_{u}(\omega) 1_{E_{0,e_{1},0}}(\omega)  1_{E_{0,e_{1},0}^{\rm c}}(\tilde\omega) \sum_{\sigma\in\Sigma(\tilde\omega)} (\sigma_{0,0+} - \sigma_{0,0-})^{2}.
\ee
It differs from \eqref{expression tau0} in that $\tilde\omega$ contains an extra cross instead of a double bar. The product of the two indicators is equal to $1_{E_{0,e_{1},0}^{+}}(\omega)$ and we get the claim of the lemma.
\end{proof}

Let us introduce the function $\alpha(u)$ with values in $[0,1]$:
\be
\bbP(E^{+}_{0,e_{1},0}) = \alpha(u) \, \bbP(E_{0,e_{1},0}).
\ee
Using Lemma \ref{lem bornes tau}, the infrared bound of Proposition \ref{prop irb} can be written as
\bm
\label{borne infrarouge}
\widehat{\langle S_{0}^{3} S_{x}^{3} \rangle}(k) \leq \frac{S(S+1)(2S+1)}{3\sqrt2} \sqrt{\bbP(E_{0,e_{1},0})} \sqrt{u \alpha(u) + (1-u) (1-\alpha(u)) \frac{\eps(k+\pi)}{\eps(k)}} \\
+ \frac{S(S+1)(2S+1)}{3 \beta \eps(k)}.
\end{multline}

We now prove the claim about the occurrence of macroscopic loops.

\begin{proof}[Proof of Theorem \ref{thm irb}]
We have
\be
\langle (S_{0}^{3})^{2} \rangle = \frac1{Z^{(u)}(\beta,\Lambda)} \int\dd\rho_{u}(\omega) \sum_{\sigma\in\Sigma(\omega)} \sigma_{00}^{2} = \tfrac13 S(S+1).
\ee
Then
\be
\tfrac13 S(S+1) = \langle S_{0}^{3} S_{x}^{3} \rangle \Big|_{x=0} = \frac1{|\Lambda|} \widehat{\langle S_{0}^{3} S_{x}^{3} \rangle}(0) + \frac1{|\Lambda|} \sum_{k \in \Lambda^{*} \setminus\{0\}} \widehat{\langle S_{0}^{3} S_{x}^{3} \rangle}(k).
\ee
Using the infrared bound of Eq.\ \eqref{borne infrarouge}, we obtain
\begin{multline}
\label{la borne infrarouge}
\frac1{|\Lambda|} \sum_{x\in\Lambda} \langle S_{0}^{3} S_{x}^{3} \rangle \geq \tfrac13 S(S+1) \\
- \tfrac{S(S+1)(2S+1)}{3\sqrt2} \sqrt{\bbP(E_{0,e_{1},0})} \frac1{|\Lambda|} \sum_{k \in \Lambda^{*} \setminus\{0\}} \sqrt{u \alpha(u) + (1-u) (1-\alpha(u)) \frac{\eps(k+\pi)}{\eps(k)}} \\
- \tfrac{S(S+1)(2S+1)}{3} \frac1{\beta |\Lambda|} \sum_{k \in \Lambda^{*} \setminus\{0\}} \frac1{\eps(k)}.
\end{multline}
Taking $L\to\infty$ then $\beta\to\infty$, we obtain
\be
\lim_{\beta\to\infty} \lim_{L\to\infty} \frac1{|\Lambda|} \sum_{x\in\Lambda} \langle S_{0}^{3} S_{x}^{3} \rangle \geq \tfrac13 S(S+1) - \tfrac{S(S+1)(2S+1)}{3\sqrt2} \sqrt{\bbP(E_{0,e_{1},0})} \, J_{d}^{u,\alpha(u)}.
\ee
where
\be
J_{d}^{u,\alpha} = \frac1{(2\pi)^{d}} \int_{[-\pi,\pi]^{d}} \sqrt{u \alpha + (1-u) (1-\alpha) \frac{\eps(k+\pi)}{\eps(k)}} \dd k.
\ee
By differentiating twice, we can verify that $J_{d}^{u,\alpha}$ is concave with respect to $\alpha$, and that its derivative at $\alpha=0$ is equal to
\be
\frac{\partial}{\partial\alpha} J_{d}^{u,\alpha} \Big|_{\alpha=0} = \frac{(2u-1) d}{\sqrt{1-u}} \frac1{(2\pi)^{d}} \int_{[-\pi,\pi]^{d}} \frac{\dd k}{\sqrt{\eps(k) \eps(k+\pi)}},
\ee
which is negative for $u \in [0,\frac12]$. Then
\be
J_{d}^{u,\alpha(u)} \leq J_{d}^{u,0} = \sqrt{1-u} \, J_{d}
\ee
and the first lower bound of Theorem \ref{thm irb} follows immediately.

The second lower bound is obtained using the sum rule suggested in \cite{KLS1},
\be
\begin{split}
\frac1{|\Lambda|} \sum_{k \in \Lambda^{*}} \widehat{\langle S_{0}^{3} S_{x}^{3} \rangle}(k) \cos k_{1} &= \frac1{2|\Lambda|} \sum_{k \in \Lambda^{*}} \widehat{\langle S_{0}^{3} S_{x}^{3} \rangle}(k) \bigl( \e{\ii k_{1}} + \e{-\ii k_{1}} \bigr) \\
&= \tfrac12 \bigl( \langle S_{0}^{3} S_{e_{1}}^{3} \rangle + \langle S_{0}^{3} S_{-e_{1}}^{3} \rangle \bigr) \\
&= \tfrac13 S(S+1) \bbP(E_{0,e_{1},0}).
\end{split}
\ee
Isolating the term $k=0$ and using Eq.\ \eqref{borne infrarouge}, we get
\be
\begin{split}
&\frac1{|\Lambda|} \sum_{x\in\Lambda} \langle S_{0}^{3} S_{x}^{3} \rangle \geq \tfrac13 S(S+1) \bbP(E_{0,e_{1},0}) - \frac1{|\Lambda|} \sum_{k \in \Lambda^{*} \setminus \{0\}} \widehat{\langle S_{0}^{3} S_{x}^{3} \rangle}(k) \frac1d \sum_{i=1}^{d} \cos k_{i} \\
&\geq \tfrac13 S(S+1) \bbP(E_{0,e_{1},0}) \\
&- \tfrac{S(S+1)(2S+1)}{3\sqrt2} \sqrt{\bbP(E_{0,e_{1},0})} \frac1{|\Lambda|} \sum_{k \in \Lambda^{*} \setminus \{0\}} \sqrt{u \alpha(u) + (1-u) (1-\alpha(u)) \frac{\eps(k+\pi)}{\eps(k)}} \Bigl( \frac1d \sum_{i=1}^{d} \cos k_{i} \Bigr)_{+} \\
&- \tfrac{S(S+1)(2S+1)}{3} \frac1{\beta |\Lambda|} \sum_{k \in \Lambda^{*} \setminus\{0\}} \frac1{\eps(k)}.
\end{split}
\ee
Taking $L\to\infty$ then $\beta\to\infty$, we obtain
\be
\lim_{\beta\to\infty} \lim_{L\to\infty} \frac1{|\Lambda|} \sum_{x\in\Lambda} \langle S_{0}^{3} S_{x}^{3} \rangle \geq \tfrac13 S(S+1) \bbP(E_{0,e_{1},0}) - \tfrac{S(S+1)(2S+1)}{3\sqrt2} \sqrt{\bbP(E_{0,e_{1},0})} \, I_{d}^{u,\alpha(u)}.
\ee
where
\be
I_{d}^{u,\alpha} = \frac1{(2\pi)^{d}} \int_{[-\pi,\pi]^{d}} \sqrt{u \alpha + (1-u) (1-\alpha) \frac{\eps(k+\pi)}{\eps(k)}}\Bigl( \frac1d \sum_{i=1}^{d} \cos k_{i} \Bigr)_{+} \dd k.
\ee
The derivative of $I_{d}^{u,\alpha}$ with respect to $\alpha$ is negative when $u \in [0,\frac12]$ and therefore
\be
I_{d}^{u,\alpha(u)} \leq I_{d}^{u,0} = \sqrt{1-u} \, I_{d}.
\ee
The second lower bound of Theorem \ref{thm irb} follows.
\end{proof}

\section{Reflection positivity in space and time}
\label{sec sp rp}

This section describes an extension of the method of reflection positivity and infrared bounds to the space-time. As it turns out, the results are not as good as in the former section. It seems nonetheless useful to include this section since the idea is natural and it may possibly be improved in the future. Reflection positivity in space and time has been independently (and indeed, earlier) proposed by Bj\"ornberg for the quantum Ising model in order to prove interesting results about critical exponents in $d\geq3$ \cite{Bjo}. The method described here shares many similarities.

In order to state the main result is the following, we need the following integrals, that are similar to those of Eq.\ \eqref{IJ}:
\be
\begin{split}
&I_{d}' = \frac1{(2\pi)^{d}} \int_{[-\pi,\pi]^{d}} \sqrt{\frac{2d}{\eps(k)}} \Bigl( \frac1d \sum_{i=1}^{d} \cos k_{i} \Bigr)_{+} \dd k, \\
&J_{d}' = \frac1{(2\pi)^{d}} \int_{[-\pi,\pi]^{d}} \sqrt{\frac{2d}{\eps(k)}} \dd k.
\end{split}
\ee

\begin{theorem}
\label{thm gen irb}
Let $d\geq3$ and $u \in [0,\frac12]$. We have the two lower bounds
\[
\lim_{\beta\to\infty} \lim_{L\to\infty} \bbE \Bigl( \frac{L_{(0,0)}}{\beta L^{d}} \Bigr) \geq \begin{cases} 1 - (2S+1) \sqrt{1-u} \, J_{d}' \sqrt{\bbP(E_{0,e_{1},0})}; \\ \bbP(E_{0,e_{1},0}) - (2S+1) \sqrt{1-u } \, I_{d}' \sqrt{\bbP(E_{0,e_{1},0})}. \end{cases}
\]
\end{theorem}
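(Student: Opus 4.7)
The plan is to generalize the reflection positivity setup of Section \ref{sec macroscopic loops} by allowing reflections across time-hyperplanes $\{t = t_0\}$ in addition to the spatial reflections $R_{i,\ell}$ used in Proposition \ref{prop rp}. I would introduce a generalized partition function $Z(v)$ in which the external field $v : \Lambda_L \times [0,\beta]_{\rm per} \to \bbR$ depends on both space and time, and couples to the space-time spin configuration via the spatial gradient term $(\vec\sigma^{(1)}_{xt} - \vec\sigma^{(1)}_{yt})(v_{xt} - v_{yt})$ integrated over $t$, together with a quadratic self-energy in both the spatial and temporal gradients of $v$. The goal is to obtain, after completing the square, an exponent of the form $-\sum_{\{x,y\}} \int_0^\beta \dd t \bigl(\vec\sigma_{xt} + \tfrac12 \vec v_{xt} - \vec\sigma_{yt} - \tfrac12 \vec v_{yt}\bigr)^2 - \tfrac14 \sum_x \int_0^\beta \dd t\, (\partial_t v_{xt})^2$, so that the Gaussian decomposition trick at the end of the proof of Proposition \ref{prop rp} applies equally well across either a spatial or a temporal cut.

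Assuming RP holds in both directions, chessboard estimates along a maximal grid of space and time reflections would reduce the maximization of $Z(v)$ to constant fields (the analog of Proposition \ref{prop max Z(v)}), yielding an infrared bound of the form
\be
\widetilde\kappa(k,\tau) \leq \frac{2S+1}{\eps(k) + \gamma(\tau)} \qquad \text{for } (k,\tau) \neq (0,0),
\ee
where $\gamma(\tau)$ is the time-dispersion term induced by the $(\partial_t v)^2$ self-energy. Performing the $\tau$-sum in $\widehat\kappa(k,0) = \tfrac1\beta \sum_\tau \widetilde\kappa(k,\tau)$ via a contour argument produces, in the $\beta\to\infty$ limit, a bound of the shape $\tfrac{2S+1}{2}\sqrt{\tfrac{2d}{\eps(k)}}$ for small $\eps(k)$, which is precisely the form matching the integrands defining $I_d'$ and $J_d'$; this replaces the Falk--Bruch step of the proof of Proposition \ref{prop irb} and in particular removes the $\tfrac{1}{\beta\eps(k)}$ error term.

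Combining this infrared bound with Lemma \ref{lem bornes tau} to express the prefactor in terms of $\bbP(E_{0,e_1,0})$ and $\alpha(u)$, and then applying the two sum rules used in the proof of Theorem \ref{thm irb} (the pointwise identity $\langle (S_0^3)^2 \rangle = \tfrac13 S(S+1)$ for the first bound, and the $\cos k_1$-weighted sum rule for the second), yields the two lower bounds of the theorem. The $\sqrt{1-u}$ factor emerges from the same monotonicity and concavity-in-$\alpha$ argument as at the end of the proof of Theorem \ref{thm irb}, used to replace $\alpha(u)$ by the worst-case value $\alpha = 0$.

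The main obstacle is establishing reflection positivity across a temporal plane. Unlike a spatial reflection, a time-reflection must respect the topology of loops which wind around the temporal circle $[0,\beta]_{\rm per}$ and may cross the cut an arbitrary number of times; the weight $\theta^{|\caL(\omega)|}$ does not obviously factorize as a product of a weight on the ``lower'' half and a weight on the ``upper'' half. Identifying the correct cut-plane data (intensities on $\{t = t_0\}$ and prescribed spin values there) needed for the weight to factorize into mirror-symmetric halves appears to be the delicate point; once this factorization is in hand, the remaining analytical ingredients parallel those of the preceding section.
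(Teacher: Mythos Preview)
Your broad outline---time-dependent fields, space-time reflection positivity, an infrared bound for $\widetilde\kappa(k,\tau)$, then the two sum rules---matches the paper's strategy. But two concrete points do not work as written.

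First, your proposed $Z(v)$ cannot produce the bound $\widetilde\kappa(k,\tau)\le(2S+1)/(\eps(k)+\gamma(\tau))$. With coupling to $\sigma$ only through the \emph{spatial} gradient, the linear-in-$v$ term in the expansion of $Z(\eta v)$ with $v_{xt}=\cos(kx+\tau t)$ is $-\eta\,\eps(k)(\vec\sigma^{(1)},v)$; the frequency $\tau$ enters only through the quadratic self-energy. The condition $Z(\eta v)\le Z(0)$ then yields $\widetilde\kappa(k,\tau)\le\tfrac{2S+1}{2}\,(\eps(k)+\tau^2)/\eps(k)^2$, which grows in $\tau$ and cannot be summed. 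The paper instead carries two free constants: a temporal coupling $a\,\vec\sigma_{xt}^{(1)}\,\partial_t^2 v_{xt}$ and a penalty $-b(\partial_t v_{xt})^2$. The linear term becomes $-\eta(\eps(k)+a\tau^2)(\vec\sigma^{(1)},v)$, giving $\widetilde\kappa(k,\tau)\le(2S+1)\,(\eps(k)+4b\tau^2)/(\eps(k)+a\tau^2)^2$; optimising over $a$ then produces a bound of the desired shape $(2S+1)/(\eps(k)+\tau^2/c)$.

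Second, the factor $\sqrt{1-u}\,\sqrt{\bbP(E_{0,e_1,0})}$ does not arise via Lemma~\ref{lem bornes tau} and the $\alpha(u)$ monotonicity argument---those belong to the Falk--Bruch route that this section replaces. Here the factor enters through the constraint under which $v\equiv0$ is a maximiser. After spatial reflections reduce to space-constant fields, discretised time reflections give a jigsaw maximiser $v_t^*=(-1)^{Nt/\beta}c/N$, and a direct computation (the nonzero contribution comes only from double bars, hence the $1-u$) shows $c=0$ is optimal precisely when $b>2d\,a^2(1-u)\,\bbP(E_{0,e_1,0})$. This relation between $a$ and $b$, not the double commutator, is what feeds $(1-u)\,\bbP(E_{0,e_1,0})$ into the final bound.

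The obstacle you single out---factorising the loop weight $\theta^{|\caL(\omega)|}$ across a temporal cut---is not the delicate point. In the space-time spin picture the weight is carried by compatible configurations $\sigma$, and once one has reduced to space-constant $v$ the discretised time reflection requires no auxiliary Gaussian fields; it is actually simpler than the spatial case. The genuine work is the local-maximiser computation above.
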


One finds numerically that $I_{2}' = 0.489$, $I_{3}' = 0.278$, $J_{2}' = 1.286$, $J_{3}' = 1.115$, etc..., and these numbers are larger than $I_{d}/\sqrt2$ and $J_{d}/\sqrt2$. Theorem \ref{thm gen irb} is no improvement of Theorem \ref{thm irb}, disappointingly.

The rest of the section is devoted to proving this theorem. The method of proof does not rely on analytic inequalities such as Falk-Bruch, and it will be more attractive to some readers. Its difficulty is about the same, though.

We generalize the notion of partition function with external fields, by considering fields $v : \Lambda \times [0,\beta]_{\rm per} \to \bbR$. That is, $v$ also depends on the ``time'' parameter. We prove that the partition function has a local maximum at $v=0$, and we obtain a generalized infrared bound for the Fourier transform in space {\and} time of the correlation function.

Let $\caV_{c_{0}}$ be the set of fields $v : \Lambda \times [0,\beta]_{\rm per} \to \bbR$ where $v_{xt}$ is twice differentiable with respect to $t$, and $\bigl| \frac{\partial v_{x,t}}{\partial t} \bigr| \leq c_{0}$ for every $x,t$. We introduce the partition function by
\bm
\label{fpart gen fields}
Z(v) = \int\dd\rho_{u}(\omega) \sum_{\sigma\in\Sigma(\omega)} \exp \biggl\{ -\sum_{\{x,y\} \in \caE} \int_{0}^{\beta} \dd t \Bigl[ (\vec\sigma_{xt}^{(1)} - \vec\sigma_{yt}^{(1)}) (v_{xt}-v_{yt}) + \tfrac14 (v_{xt}-v_{yt})^{2} \Bigr] \\
+ \sum_{x\in\Lambda} \int_{0}^{\beta} \dd t \Bigl[ a \vec\sigma_{xt}^{(1)} \frac{\partial^{2} v_{xt}}{\partial t^{2}} - b \Bigl( \frac{\partial v_{xt}}{\partial t} \Bigr)^{2} \biggr\}.
\end{multline}
The constants $a$ and $b$ will be chosen later.

\begin{proposition}
\label{prop space-time rp}
Assume that $u \in [0,\frac12]$. For every $v \in \caV_{c_{0}}$, there exists $v^{*} = (v^{*}_{t}) \in \caV_{c_{0}}$ that depends on $t$ but not on $x$, such that $Z(v) \leq Z(v^{*})$.
\end{proposition}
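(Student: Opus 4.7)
The plan is to extend the spatial reflection positivity of Proposition \ref{prop rp} to the time-dependent fields $v \in \caV_{c_0}$ and then apply the standard iterative-reflection argument exactly as in Proposition \ref{prop max Z(v)}. Crucially, Proposition \ref{prop space-time rp} only asks for $v^*$ constant in $x$ (not in $t$), so no actual reflection in the time direction is needed; the time dependence simply rides along through the spatial reflections. The role of the clause ``in space and time'' in the section title is that $v^*_t$ can then be optimized separately in $t$ further downstream.

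First I would establish the spatial RP analogue of Proposition \ref{prop rp}: for any reflection $R$ across a hyperplane perpendicular to a lattice direction and any $v = (v^{(1)}, v^{(2)}) \in \caV_{c_0}$,
\[
Z(v^{(1)}, v^{(2)})^{2} \leq Z(v^{(1)}, R v^{(1)}) \, Z(R v^{(2)}, v^{(2)}).
\]
The proof is the same as for Proposition \ref{prop rp}. The new terms
\[
\sum_{x\in\Lambda} \int_{0}^{\beta} \dd t \, \Bigl[ a\, \vec\sigma_{xt}^{(1)} \tfrac{\partial^{2} v_{xt}}{\partial t^{2}} - b \bigl( \tfrac{\partial v_{xt}}{\partial t} \bigr)^{2} \Bigr]
\]
are completely \emph{local in space}: they decompose as $\sum_{x \in \Lambda^{(1)}} + \sum_{x \in \Lambda^{(2)}}$ with no coupling across the reflection plane. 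Hence they are absorbed without change into the functions $F_{N}(v^{(1)}, \bar\xi, \cdot)$ and $\overline{F_{N}}(v^{(2)}, \bar\xi, \cdot)$ in the proof of Proposition \ref{prop rp}, and the remaining coupling through the edges $\bar\caE$ is handled by the same Gaussian-integral decoupling and Cauchy--Schwarz step.

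Next, I would show the existence of a maximizer of $Z$ on $\caV_{c_0}$. Modding out by the obvious translation $v_{xt} \mapsto v_{xt} + c$ (which does not affect $Z$) we may impose, say, $v_{0,0} = 0$. The quadratic penalties $-\tfrac14 (v_{xt} - v_{yt})^{2}$ and $-b(\partial_t v_{xt})^{2}$ together with the uniform bound $|\partial_t v_{xt}| \leq c_{0}$ (which controls oscillations in $t$) give coercivity: along any sequence with $\sup_{x,t} |v_{xt}^{(n)}| \to \infty$, some spatial difference $v_{xt}^{(n)} - v_{yt}^{(n)}$ becomes large on a set of positive $t$-measure, forcing $Z(v^{(n)}) \to 0$. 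Equicontinuity in $t$ (from $|\partial_t v| \leq c_{0}$) together with a uniform $L^{\infty}$ bound in $x$ then puts a maximizing sequence in a compact subset of $\caV_{c_{0}}$ via Arzel\`a--Ascoli.

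Finally, I would use the standard reflection-iteration argument illustrated in Fig.\ \ref{fig rp maximize}. Starting from a maximizer $v$, a single reflection across a plane perpendicular to $\vec e_{i}$ gives a new field, $(v^{(1)}, Rv^{(1)})$ or $(Rv^{(2)}, v^{(2)})$, which by spatial RP is again a maximizer and which agrees with its reflection across the chosen plane. Iterating across a complete family of parallel planes yields a maximizer that is constant in the $x_{i}$-direction; iterating over $i = 1, \dots, d$ produces a maximizer $v^{*} = (v^{*}_{t})$ that depends only on $t$. Each step preserves the pointwise bound $|\partial_t v| \leq c_{0}$ since reflection just permutes spatial indices, so we remain in $\caV_{c_{0}}$ throughout. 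I expect the main obstacle to be the existence/compactness step; the coercivity in the spatial differences gives only a seminorm, and one must verify carefully that the time regularity built into $\caV_{c_{0}}$ is enough to extract a convergent maximizing subsequence in a topology for which $Z$ is continuous.
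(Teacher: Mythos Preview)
Your proposal is correct and follows essentially the same approach as the paper. The paper's own proof is a two-sentence sketch: extend Proposition~\ref{prop rp} to the partition function \eqref{fpart gen fields} and repeat the iteration of Proposition~\ref{prop max Z(v)}, noting that ``the time-dependence of the fields and the extra terms do not play any r\^ole.'' You have correctly unpacked this---in particular the key observation that the new $\partial_t^2 v$ and $(\partial_t v)^2$ terms are purely local in $x$ and therefore split cleanly across the reflection plane---and you rightly flag the compactness step for the existence of a maximizer in $\caV_{c_0}$ as the place where the time-dependence requires a little extra care beyond the finite-dimensional argument in Proposition~\ref{prop max Z(v)}.
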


The proof can be done by extending Proposition \ref{prop rp} to the partition function above, and by repeating the proof of Proposition \ref{prop max Z(v)}. It turns out that the time-dependence of the fields and the extra terms do not play any r\^ole.

\begin{proposition}
\label{prop gen GD}
Assume that $u \in [0,\frac12]$ and that $b > 2d a^{2} (1-u) \kappa(e_{1},0)$. Then there exists $c_{0}>0$ such that $Z(v) \leq Z(0)$ for every $v \in \caV_{c_{0}}$.
\end{proposition}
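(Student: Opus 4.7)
The plan is to apply Proposition~\ref{prop space-time rp} to reduce to fields $v^*$ depending only on $t$, and then to show that under the hypothesis on $b$ the quadratic expansion of $\log Z(v^*)$ around $v^*=0$ is strictly negative in the non-trivial directions, so that $Z(v^*)\leq Z(0)$ on $\caV_{c_0}$ follows provided $c_0$ is chosen small enough. For $v^*$ constant in space, the spatial gradient terms in \eqref{fpart gen fields} vanish and
\be
Z(v^*) = \int\dd\rho_u(\omega)\sum_{\sigma\in\Sigma(\omega)}\exp\Bigl\{a\int_0^\beta \ddot v^*_t\,\phi_t\,\dd t - b|\Lambda|\int_0^\beta(\dot v^*_t)^2\,\dd t\Bigr\},
\ee
where $\phi_t = \sum_{x\in\Lambda}\vec\sigma^{(1)}_{xt}$.

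Integrating by parts in $t$ (boundary terms cancel by periodicity), one notices that $\phi_t$ is piecewise constant with jumps only at the $Q$-bars of $\omega$: a cross merely swaps the two spin labels and preserves $\phi$, while a $Q$-bar at $(\{x,y\},t_i)$ transforms $(\vec a_i,\vec a_i)\to(\vec b_i,\vec b_i)$ and jumps $\phi$ by $2(\vec b_i^{(1)}-\vec a_i^{(1)})$. Thus $W_1 := a\int\ddot v^*_t\phi_t\,\dd t = -2a\sum_i \dot v^*_{t_i}(\vec b_i^{(1)}-\vec a_i^{(1)})$, summed over $Q$-bars. Conditioning on $\omega$, the spin assigned to each loop of $\caL(\omega)$ is independent and uniform over the $2S+1$ values, so
\be
\bbE_\sigma\bigl[e^{W_1}\bigm|\omega\bigr] = \prod_{L\in\caL(\omega)}\bbE\bigl[e^{-2ac^{(1)}w_L}\bigr], \qquad w_L = \sum_i \dot v^*_{t_i}\bigl(\bbone[L_i^+=L]-\bbone[L_i^-=L]\bigr).
\ee
Using $\bbE[c^{(1)}]=0$, $\bbE[(c^{(1)})^2]=\tfrac1{2(2S+1)}$, together with the pointwise bound $|w_L|\leq c_0 n_L$ (where $n_L$ is the number of $Q$-bars touching $L$), the Taylor expansion of each factor produces, for $c_0$ small,
\[
\bbE_\sigma[e^{W_1}\mid\omega]\leq \exp\Bigl\{\tfrac{a^2}{2S+1}\sum_L w_L^2 + \mathrm{O}(c_0^3)\Bigr\}.
\]

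The heart of the proof is the computation of $\bbE_\mu\bigl[\sum_L w_L^2\bigr]$ under the loop-weighted measure $\dd\mu(\omega)=(2S+1)^{|\caL(\omega)|}\dd\rho_u(\omega)/Z(0)$. Its diagonal part equals $2\bbE_\mu[\sum_i(\dot v^*_{t_i})^2\bbone[L_i^+\neq L_i^-]]$, and a $Q$-bar is a ``splitter'' (i.e.\ $L_i^+\neq L_i^-$) precisely when the two endpoints $(x,t),(y,t)$ lie in the same loop of $\omega\setminus\{i\}$. Campbell's formula for Poisson processes, combined with the observation that removing a splitter reduces $|\caL|$ by one (so that stripping off the bar generates an extra factor $2S+1$ in the weight), yields by translation invariance
\be
\bbE_\mu\Bigl[\sum_i(\dot v^*_{t_i})^2\bbone[L_i^+\neq L_i^-]\Bigr] = d(2S+1)(1-u)|\Lambda|\kappa(e_1,0)\int_0^\beta(\dot v^*_t)^2\,\dd t,
\ee
so that the diagonal contribution to $\tfrac12\bbE[W_1^2] = \tfrac{a^2}{2S+1}\bbE_\mu[\sum_L w_L^2]$ equals $2da^2(1-u)|\Lambda|\kappa(e_1,0)\int(\dot v^*)^2\,\dd t$.

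Combining with $W_2=-b|\Lambda|\int(\dot v^*)^2\,\dd t$, the quadratic term in $\log(Z(v^*)/Z(0))$ is bounded by $|\Lambda|\bigl(2da^2(1-u)\kappa(e_1,0)-b\bigr)\int(\dot v^*)^2\,\dd t$, which is strictly negative by hypothesis. For $c_0$ small enough the strict margin absorbs both the $\mathrm{O}(c_0^3)$ Taylor remainder and the off-diagonal ($i\neq j$) contributions $\sum_{i\neq j}\dot v^*_{t_i}\dot v^*_{t_j}C_{ij}$ with $C_{ij}=\bbone[L_i^+=L_j^+]-\bbone[L_i^+=L_j^-]-\bbone[L_i^-=L_j^+]+\bbone[L_i^-=L_j^-]$. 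The principal obstacle is precisely the control of this off-diagonal sum: one must either exploit cancellations arising from the time-translation invariance of $\dd\mu$ (so that $\bbE[\dot v^*_{t_i}\dot v^*_{t_j}]$ vanishes for effectively decorrelated bar times after removing the zero mode of $\dot v^*$), or establish a further smallness of these cross-terms in $c_0$ that keeps them within the strict margin provided by $b > 2da^2(1-u)\kappa(e_1,0)$.
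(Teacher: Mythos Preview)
Your approach differs fundamentally from the paper's, and the gap you yourself flag at the end is real and not repairable by taking $c_0$ small.

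The paper does not attempt a direct Hessian analysis of $Z(v^*)$ for general time-dependent $v^*$. After reducing to space-constant fields via Proposition~\ref{prop space-time rp}, it discretizes time into $N$ steps and applies \emph{reflection positivity in the time direction}: reflecting across horizontal planes $t = n\beta/N$ and iterating the Cauchy--Schwarz argument shows that any maximizer of $Z_N$ can be taken of jigsaw form $v^*_t = (-1)^{Nt/\beta}\,c/N$ for a single real parameter $c$. The problem thus collapses from an infinite-dimensional one to a one-parameter one, and the remaining task is only to check that $c=0$ maximizes this family. The alternating sign in the jigsaw profile is precisely what makes the term linear in $c$ collapse (via the trick of symmetrizing each transition over $t$ and $t+\beta/N$) into something effectively quadratic and explicitly computable, yielding the condition $b > 2da^2(1-u)\kappa(e_1,0)$.

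Your direct expansion cannot bypass this reduction. The off-diagonal sum $\sum_{i\neq j}\dot v^*_{t_i}\dot v^*_{t_j}C_{ij}$ is homogeneous of degree two in $\dot v^*$, exactly like the diagonal part and like $W_2$; shrinking $c_0$ rescales all three identically and absorbs nothing. The cancellations you gesture at (time-translation invariance, decorrelation of bar times) would have to be established at the level of the full quadratic form in $\dot v^*$, which amounts to proving an infrared-type bound for the time-time correlation kernel of $\phi_t$---precisely the sort of statement reflection positivity is designed to deliver and that a bare Taylor expansion does not. In short, the paper's time-reflection step is not a convenience but the mechanism that eliminates the off-diagonal problem; without a substitute for it, your argument is incomplete.
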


\begin{proof}
Because of Proposition \ref{prop space-time rp} it is enough to prove it for space-invariant fields only. We must show that maximizers are time-invariant as well. We have
\be
Z(v) = \lim_{N\to\infty} Z_{N}(v)
\ee
where
\bm
Z_{N}(v) = \int\dd\rho_{u}(\omega) \sum_{\sigma\in\Sigma(\omega)} \exp \biggl\{ -\frac N\beta \sum_{x\in\Lambda} \sum_{t \in \frac\beta N \{1,\dots,N\}} \\
\Bigl[ a (\vec\sigma_{x,t+\frac\beta N}^{(1)} - \vec\sigma_{x,t}^{(1)}) (v_{t+\frac\beta N} - v_{t}) + b (v_{t+\frac\beta N} - v_{t})^{2} \biggr\}.
\end{multline}
Indeed, we have discretized $\frac{\partial v_{t}}{\partial t}$ and $\frac{\partial^{2} v_{t}}{\partial t^{2}}$ and used the discrete integration by parts. We take the limit along $N \in 2\bbN$.

We apply ``horizontal reflections'' across the planes determined by $t = n\frac\beta N$, $n=1,\dots,N$. Then $Z_{N}(v^{(1)},v^{(2)}) \leq Z_{N}(v^{(1)},Rv^{(1)}) Z_{N}(Rv^{(2)},v^{(2)})$ from the Cauchy-Schwarz inequality --- the situation is actually simpler than in the proof of Proposition \ref{prop rp} as we do not need to introduce extra fields to decouple the two parts. See Fig.\ \ref{fig rp time} for an illustration. Proceeding as before, we obtain the existence of a maximizer $v^{*}$ for $Z_{N}$ with jigsaw shape, namely
\be
v^{*}_{t} = (-1)^{\frac{Nt}\beta} \frac cN
\ee
for some constant $|c| \leq c_{0}\beta/2$.

\begin{centering}
\bfig
\begin{picture}(0,0)%
\includegraphics{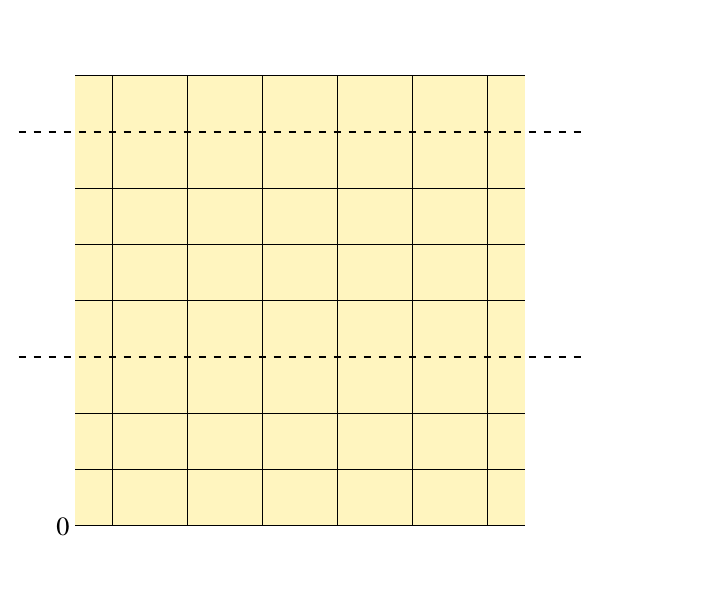}
\end{picture}%
\setlength{\unitlength}{2368sp}%
\begingroup\makeatletter\ifx\SetFigFont\undefined%
\gdef\SetFigFont#1#2#3#4#5{%
  \reset@font\fontsize{#1}{#2pt}%
  \fontfamily{#3}\fontseries{#4}\fontshape{#5}%
  \selectfont}%
\fi\endgroup%
\begin{picture}(5645,4802)(600,-4562)
\put(976,-436){\makebox(0,0)[lb]{\smash{{\SetFigFont{8}{9.6}{\rmdefault}{\mddefault}{\updefault}{\color[rgb]{0,0,0}$\beta$}%
}}}}
\put(4876,-1786){\makebox(0,0)[lb]{\smash{{\SetFigFont{8}{9.6}{\familydefault}{\mddefault}{\updefault}{\color[rgb]{0,0,0}$(Rv^{(1)})_t$}%
}}}}
\put(4876,-3586){\makebox(0,0)[lb]{\smash{{\SetFigFont{8}{9.6}{\familydefault}{\mddefault}{\updefault}{\color[rgb]{0,0,0}$v_t^{(1)}$}%
}}}}
\put(601,-3586){\makebox(0,0)[lb]{\smash{{\SetFigFont{8}{9.6}{\rmdefault}{\mddefault}{\itdefault}{\color[rgb]{0,0,0}$\beta/N$}%
}}}}
\end{picture}%

\caption{Reflection in the time direction, across the dotted lines. There are two dotted lines because of periodicity.}
\label{fig rp time}
\efig
\end{centering}

We need to show that $c=0$. We have
\be
Z_{N}(v^{*}) = \int\dd\rho_{u}(\omega) \sum_{\sigma\in\Sigma(\omega)} \exp \biggl\{ -\frac{4bc^{2} |\Lambda|}\beta - \frac{2ac}\beta \sum_{x \in \Lambda} \sum_{t \in \frac\beta N \{1,\dots,N\}} (-1)^{\frac{Nt}\beta} \bigl( \vec\sigma_{x,t+\frac\beta N}^{(1)} - \vec\sigma_{xt}^{(1)} \bigr) \biggr\}.
\ee
We need to show that the term linear in $c$ is actually quadratic. This can be done using the following trick. When integrating over realizations $\omega$ with the measure $\rho_{u}$, we replace each transition $(x,y,t) \in \omega$ by $\frac12$ transition at $t$, and $\frac12$ transition at $t + \frac\beta N$. Because of the alternating sign in the maximizer $v^{*}$, we obtain
\bm
Z_{N}(v^{*}) = \e{-\frac{4bc^{2}}\beta |\Lambda|} \int\dd\rho_{u}(\omega) \sum_{\sigma\in\omega} \prod_{(x,y,t) \in \omega} \tfrac12 \Biggl( \exp\biggl\{ \frac{2ac}\beta \bigl( \vec\sigma_{x,t+\frac\beta N}^{(1)} + \vec\sigma_{y,t+\frac\beta N}^{(1)} - \vec\sigma_{xt}^{(1)} - \vec\sigma_{yt}^{(1)} \bigr) \biggr\} \\
+ \exp\biggl\{ -\frac{2ac}\beta \bigl( \vec\sigma_{x,t+\frac\beta N}^{(1)} + \vec\sigma_{y,t+\frac\beta N}^{(1)} - \vec\sigma_{xt}^{(1)} - \vec\sigma_{yt}^{(1)} \bigr) \biggr\} \Biggr) + O\Bigl( \frac1N \Bigr).
\end{multline}
The correction $O(\frac1N)$ is due to the realizations $\omega$ where transitions occur at almost the same location and time.

If the transition is a cross we have $\vec\sigma_{x,t+\frac\beta N} = \vec\sigma_{y,t}$ and $\vec\sigma_{y,t+\frac\beta N} = \vec\sigma_{x,t}$ and the corresponding factor is 1. If the transition is a double bar, we get the factor
\be
\cosh\Bigl[ \frac{4ac}\beta \bigl( \vec\sigma_{x,t+\frac\beta N}^{(1)} - \vec\sigma_{xt}^{(1)} \bigr) \Bigr] = \exp\Bigl\{ \frac{8a^{2}c^{2}}{\beta^{2}} \bigl( \vec\sigma_{x,t+\frac\beta N}^{(1)} - \vec\sigma_{xt}^{(1)} \bigr)^{2} + O(c^{4}) \Bigr\}.
\ee
Let $\caT(\omega)$ denote the set of double bars that are present in the realization $\omega$.
We then obtain
\be
Z_{N}(v^{*}) = \e{-\frac{4bc^{2}}\beta |\Lambda|} \int\dd\rho_{u}(\omega) \sum_{\sigma\in\omega} \exp\Bigl\{ \tfrac{8a^{2}c^{2}}{\beta^{2}} \sum_{(x,y,t) \in \caT(\omega)} \bigl( \vec\sigma_{x,t+\frac\beta N}^{(1)} - \vec\sigma_{xt}^{(1)} \bigr)^{2} + O(c^{4}) \Bigr\} + O\Bigl( \frac1N \Bigr).
\ee
Notice that $O(c^{4})$ is bounded uniformly in $N$, and $O(\frac1N)$ is bounded uniformly in $c$. They depend on $|\Lambda|$ and $\beta$, on the other hand, but it does not matter. Let $A \subset \Omega$ be the event where a double bar occurs on the edge $\{0,e_{1}\}$ in the time interval $[0,\frac\beta N]$. Expanding the exponential, and using translation invariance in space and time, we get
\be
Z_{N}(v^{*}) = Z_{N}(0) \bigl[ 1 - \tfrac{4bc^{2}}\beta |\Lambda| \bigr] + \tfrac{8a^{2}c^{2}}{\beta^{2}} |\caE| N \int_{A} \dd\rho_{u}(\omega) \sum_{\sigma\in\Sigma(\omega)} \bigl( \vec\sigma_{0,\frac\beta N}^{(1)} - \vec\sigma_{0,0}^{(1)} \bigr)^{2} + O(c^{4}) + O(\tfrac1N).
\ee
We now estimate the contribution of the realizations with an enforced double bar on $(0,e_{1})$ at a time close to 0. As before, we denote $\bar\omega$ the realization $\omega$ with an extra double bar at $\{0,e_{1}\} \times 0$.
\be
\begin{split}
\lim_{N\to\infty} \frac1{Z_{N}(0)} &\frac N\beta \int_{A} \dd\rho_{u}(\omega) \sum_{\sigma\in\Sigma(\omega)} \bigl( \vec\sigma_{0,\frac\beta N}^{(1)} - \vec\sigma_{0,0}^{(1)} \bigr)^{2} \\
&= (1-u) \lim_{N\to\infty} \frac1{Z_{N}(0)} \int_{E_{0,e_{1},0}} \dd\rho_{u}(\omega) 1_{E_{0,0,0+}^{\rm c}}(\bar\omega) \sum_{\sigma\in\Sigma(\bar\omega)} \bigl( \vec\sigma_{0,\frac\beta N}^{(1)} - \vec\sigma_{0,0}^{(1)} \bigr)^{2} \\
&= (1-u) \bbP(E_{0,e_{1},0}^{-}) \\
&\leq (1-u) \bbP(E_{0,e_{1},0}).
\end{split}
\ee
The sum over space-time spin configurations that are compatible with $\omega \in E_{0,e_{1},0}$, was computed as follows:
\be
\begin{split}
\sum_{\sigma\in\Sigma(\bar\omega)} \bigl( \vec\sigma_{0,\frac\beta N}^{(1)} - \vec\sigma_{0,0}^{(1)} \bigr)^{2} &= (2S+1)^{|\caL(\omega)|-1} \sum_{a,b=-S}^{S} \bigl( \vec a^{(1)} - \vec b^{(1)} \bigr)^{2} \\
&= (2S+1)^{|\caL(\omega)|-1} \frac1{2S} \sum_{i=1}^{2S} \sum_{a,b=-S}^{S} \bigl( \vec a^{(i)} - \vec b^{(i)} \bigr)^{2} \\
&= (2S+1)^{|\caL(\omega)|-1} \frac1{2S} \sum_{a,b=-S}^{S} \| \vec a - \vec b \|^{2} \\
&= (2S+1)^{|\caL(\omega)|}.
\end{split}
\ee
We used Eqs \eqref{simplex}. We have obtained
\be
Z_{N}(v^{*}) \leq Z_{N}(0) \Bigl[ 1 - \tfrac{4bc^{2}}\beta |\Lambda| + \tfrac{8a^{2}c^{2}}\beta |\caE| (1-u) \kappa(e_{1},0) \Bigr] + O(c^{4}) + O(\tfrac1N).
\ee
We see that $c=0$ is local maximizer whenever $\frac{4b}\beta > \frac{8a^{2}d}\beta (1-u) \kappa(e_{1},0)$, which yields the relation between $a$ and $b$ that is stated in the proposition.
\end{proof}

We now use Proposition \ref{prop gen GD} in order to get a generalized infrared bound for $\widetilde\kappa(k,\tau)$. This is similar to Proposition \ref{prop irb Duhamel}.

\begin{proposition}
\label{prop gen irb}
Assume that there exists $c_{0}>0$ such that the partition function in Eq. \eqref{fpart gen fields} satisfies $Z(v) \leq Z(0)$ for every $v \in \caV_{c_{0}}$. Then for all $(k,\tau) \neq (0,0)$, we have
\[
\widetilde\kappa(k,\tau) \leq (2S+1) \frac{\eps(k) + 4b\tau^{2}}{(\eps(k) + a\tau^{2})^{2}}.
\]
\end{proposition}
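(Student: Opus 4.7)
The plan is to adapt the argument of Proposition~\ref{prop irb Duhamel} to the enlarged class of space–time fields, by testing the local-maximum hypothesis against a single oscillatory mode. For fixed $(k,\tau) \in \Lambda^{*} \times \tfrac{2\pi}{\beta}\bbZ$ with $(k,\tau)\neq(0,0)$, I would choose the real field $v_{xt} = \eta \cos(kx+\tau t)$. Since $\partial_{t} v_{xt} = -\eta\tau\sin(kx+\tau t)$, this field lies in $\caV_{c_{0}}$ as soon as $|\eta\tau|\leq c_{0}$, so the hypothesis $Z(\eta v)\leq Z(0)$ applies for all small enough $\eta$.

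Next, I would rewrite \eqref{fpart gen fields} by a discrete integration by parts (as in the first line of the proof of Proposition~\ref{prop irb Duhamel}), so that the terms linear in $\vec\sigma^{(1)}$ take the form $\int_{0}^{\beta}\dd t \sum_{x}\vec\sigma_{xt}^{(1)} w_{xt}$ with $w_{xt} = (\Delta v)_{xt} + a\,\partial_{t}^{2} v_{xt}$, while the purely deterministic part is $\tfrac14 \sum_{x}\int\vec v(\Delta v) - b\sum_{x}\int(\partial_{t} v)^{2}$. For my test field, $w_{xt} = -(\eps(k)+a\tau^{2})v_{xt}$. Taylor expanding $Z(\eta v)/Z(0) = 1 + \eta\bbE[X] + \eta^{2}\bigl(\tfrac12\bbE[X^{2}] + Y\bigr) + O(\eta^{3})$ with $X = \int\sum_{x}\vec\sigma_{xt}^{(1)} w_{xt}\,\dd t$ and $Y$ the deterministic quadratic form, the linear term vanishes by the spin-reversal symmetry $\vec\sigma\mapsto-\vec\sigma$ of the measure, and the hypothesis forces $\tfrac12\bbE[X^{2}] + Y \leq 0$.

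The computation of the two sides then follows the same Fourier manipulation as in Proposition~\ref{prop irb Duhamel}, now in space and time: using the key identity $\bbE(\vec\sigma_{00}^{(1)}\vec\sigma_{zs}^{(1)}) = \tfrac{1}{2(2S+1)}\kappa(z,s)$ and translation invariance,
\[
\bbE[X^{2}] = \frac{(\eps(k)+a\tau^{2})^{2}}{2(2S+1)}\,(v,v)\,\widetilde\kappa(k,\tau),
\qquad
(v,v) := \sum_{x}\int_{0}^{\beta}\cos^{2}(kx+\tau t)\,\dd t,
\]
where the cross-term involving $\sin(kz+\tau s)$ integrates to zero by the reflection symmetries of $\kappa$. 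For $\tau\neq 0$ the two time integrals $\int\cos^{2}$ and $\int\sin^{2}$ coincide, so the deterministic part evaluates to $Y = -\tfrac{1}{4}(\eps(k)+4b\tau^{2})(v,v)$. The factor $(v,v)$ cancels between the two sides of $\tfrac12\bbE[X^{2}]+Y\leq 0$, yielding exactly
\[
\widetilde\kappa(k,\tau)\,\leq\,(2S+1)\,\frac{\eps(k)+4b\tau^{2}}{(\eps(k)+a\tau^{2})^{2}}.
\]

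The only genuine worry is the $\caV_{c_{0}}$ constraint: the hypothesis only supplies $Z(\eta v)\leq Z(0)$ locally near $0$, not globally. This is harmless because we only use the second-order Taylor coefficient, which is controlled by arbitrarily small $\eta$. The case $\tau=0$ reproduces Proposition~\ref{prop irb Duhamel} (the $b$ term drops and $(v,v)$ still cancels even if $\sum_{x}\cos^{2}kx$ takes the exceptional value $L^{d}$ at $k=\pi$), so no separate treatment is needed.
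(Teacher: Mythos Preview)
Your argument is correct and matches the paper's proof: test against the single mode $v_{xt}=\cos(kx+\tau t)$, expand $Z(\eta v)$ to second order in $\eta$, and read off the bound from the nonpositivity of the quadratic coefficient (the paper carries out the identical Fourier computation and cancels the common factor $\int_{0}^{\beta}(v_{\cdot,t},v_{\cdot,t})\,\dd t$). One small caveat: for $S>\tfrac12$ the spin flip $\sigma\mapsto-\sigma$ does \emph{not} act as $\vec\sigma\mapsto-\vec\sigma$ on the simplex vectors (since $\vec{(-a)}\neq-\vec a$), so the vanishing of the linear term should instead be attributed to $\sum_{a}\vec a=0$, cf.\ \eqref{simplex}, or equivalently to the full permutation symmetry of the spin labels.
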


\begin{proof}
We choose $v_{xt} = \cos(kx+\tau t)$. We have
\be
\begin{split}
Z(\eta v) = &\int\dd\rho_{u}(\omega) \sum_{\sigma\in\Sigma(\omega)} \exp\biggl\{ \int_{0}^{\beta} \dd t \Bigl[ \eta \bigl( \vec\sigma_{\cdot,t}^{(1)}, \Delta v_{\cdot,t} \bigr) + \tfrac14 \eta^{2} (v_{\cdot,t}, \Delta v_{\cdot,t}) \Bigr] \\
&+ \sum_{x\in\Lambda} \int_{0}^{\beta} \dd t \Bigl[ a\eta \vec\sigma_{xt}^{(1)} \frac{\partial^{2} v_{xt}}{\partial t^{2}} + b\eta^{2}
 v_{xt} \frac{\partial^{2} v_{xt}}{\partial t^{2}} \Bigr] \biggr\}.
\end{split}
\ee
We now use $-\Delta v = \eps(k) v$ and $-\frac{\partial^{2}}{\partial t^{2}} v = \tau^{2} v$, and we get
\be
\begin{split}
Z(\eta v) &= \int\dd\rho_{u}(\omega) \sum_{\sigma\in\Sigma(\omega)} \exp\biggl\{ -\int_{0}^{\beta} \dd t \Bigl[ \eta \bigl( \eps(k) + a\tau^{2} \bigr) \bigl( \vec\sigma_{\cdot,t}^{(1)}, v_{\cdot,t} \bigr) \\
&\hspace{70mm} + \tfrac14 \eta^{2} \bigl( \eps(k) + 4b \tau^{2} \bigr) (v_{\cdot,t}, v_{\cdot,t}) \Bigr] \biggr\} \\
&= \int\dd\rho_{u}(\omega) \sum_{\sigma\in\Sigma(\omega)} \biggl( 1 + \tfrac12 \eta^{2} \bigl( \eps(k) + a\tau^{2} \bigr)^{2} \int_{0}^{\beta} \dd t \int_{0}^{\beta} \dd t' \bigl( \vec\sigma_{\cdot,t}^{(1)}, v_{\cdot,t} \bigr) \bigl( \vec\sigma_{\cdot,t'}^{(1)}, v_{\cdot,t'} \bigr) \\
&\hspace{5cm} - \tfrac14 \eta^{2} \bigl( \eps(k) + 4b \tau^{2} \bigr) \int_{0}^{\beta} \dd t (v_{\cdot,t}, v_{\cdot,t}) + O(\eta^{4}) \biggr).
\end{split}
\ee

We have
\be
\bigl( \vec\sigma_{\cdot,t}^{(1)}, v_{\cdot,t} \bigr) \bigl( \vec\sigma_{\cdot,t'}^{(1)}, v_{\cdot,t'} \bigr) = \sum_{x,y\in\Lambda} \cos(kx+\tau t) \cos(ky+\tau t') \vec\sigma_{xt}^{(1)} \vec\sigma_{yt'}^{(1)},
\ee
and, using the symmetries of the cubic box and Eqs \eqref{simplex},
\be
\begin{split}
\frac1{Z(0)} \int\dd\rho_{u}(\omega) \sum_{\sigma\in\Sigma(\omega)} \vec\sigma_{xt}^{(1)} \vec\sigma_{yt'}^{(1)} &= \frac1{Z(0)} \int_{E_{0,y-x,t-t'}} \dd\rho_{u}(\omega) (2S+1)^{|\caL(\omega)|-1} \frac1{2S} \sum_{a=-S}^{S} \|\vec a\|^{2} \\
&= \frac1{2(2S+1)} \kappa(y-x,t'-t).
\end{split}
\ee
Finally, we obtain the Fourier transform of correlation functions:
\be
\begin{split}
\int_{0}^{\beta} \dd t \int_{0}^{\beta} \dd t' &\sum_{x,y \in \Lambda} \cos(kx+\tau t) \cos(ky+\tau t') \, \kappa(y-x,t'-t) \\
&= \int_{0}^{\beta} \dd t \int_{0}^{\beta} \dd t'' \sum_{x,z \in \Lambda} \cos(kx+\tau t) \cos(k(x+z) + \tau(t+t'')) \, \kappa(z,t'') \\
&= \int_{0}^{\beta} \sum_{x\in\Lambda} \cos(kx+\tau t) \Re \e{\ii kx + \ii \tau t} \int_{0}^{\beta} \dd t'' \e{\ii kz + \ii \tau t''} \kappa(z,t'') \\
&= \int_{0}^{\beta} \dd t \sum_{x\in\Lambda} \cos(kx+\tau t)^{2} \widetilde\kappa(-k,-\tau).
\end{split}
\ee
We actually have $\widetilde\kappa(-k,-\tau) = \widetilde\kappa(k,\tau)$ because of symmetries. We have obtained
\bm
Z(\eta v) = Z(0) \biggl[ 1 + \frac1{4(2S+1)} \eta^{2} \bigl( \eps(k) + a\tau^{2} \bigr)^{2} \widetilde\kappa(k,\tau) \int_{0} \int_{0}^{\beta} \dd t (v_{\cdot,t}, v_{\cdot,t}) \\
- \tfrac14 \eta^{2} \bigl( \eps(k) + 4b\tau^{2} \bigr)^{2} \int_{0}^{\beta} \dd t (v_{\cdot,t}, v_{\cdot,t}) + O(\eta^{4}) \biggr].
\end{multline}
The conclusion follows.
\end{proof}

\begin{corollary}
Assume $u \in [0,\frac12]$. Then for all $(k,\tau) \neq (0,0)$, we have
\[
\widetilde\kappa(k,\tau) \leq \frac{2S+1}{\eps(k) + \frac{\tau^{2}}{8d (1-u) \kappa(e_{1},0)}}.
\]
\end{corollary}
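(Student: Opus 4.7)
The plan is to apply Proposition \ref{prop gen irb} with parameters $a,b$ optimized subject to the constraint in Proposition \ref{prop gen GD}. Writing $\lambda = 2d(1-u)\kappa(e_{1},0)$, the hypothesis $b > a^{2}\lambda$ ensures $Z(v) \leq Z(0)$ on some $\caV_{c_{0}}$, and Proposition \ref{prop gen irb} then yields
\[
\widetilde\kappa(k,\tau) \leq (2S+1) \frac{\eps(k) + 4b\tau^{2}}{(\eps(k) + a\tau^{2})^{2}}.
\]
Since the right side is monotone increasing in $b$, I would take $b \downarrow a^{2}\lambda$, which is legitimate by continuity, leaving a one-parameter family of bounds indexed by $a>0$.

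The heart of the argument is to choose $a$ so that the numerator $\eps(k) + 4a^{2}\lambda\tau^{2}$ equals the inner factor $\eps(k) + a\tau^{2}$ appearing in the denominator. This happens precisely when $4a\lambda = 1$, that is,
\[
a = \frac{1}{4\lambda} = \frac{1}{8d(1-u)\kappa(e_{1},0)}.
\]
A one-line differentiation in $a$ confirms that this is the unique critical point on $(0,\infty)$ and a minimum. Substituting cancels one factor of $\eps(k)+a\tau^{2}$ and gives
\[
\widetilde\kappa(k,\tau) \leq \frac{2S+1}{\eps(k) + a\tau^{2}} = \frac{2S+1}{\eps(k) + \tau^{2}/\bigl(8d(1-u)\kappa(e_{1},0)\bigr)},
\]
which is the claimed bound.

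I do not foresee a substantial obstacle. The two minor points to verify are: (i) that the field $v_{xt} = \cos(kx+\tau t)$, after scaling by a small $\eta$ in the second-order expansion of Proposition \ref{prop gen irb}, lies in $\caV_{c_{0}}$ with the $c_{0}$ produced by Proposition \ref{prop gen GD}, which is immediate since $|\partial_{t}(\eta v_{xt})| \leq \eta|\tau|$ can be made arbitrarily small; and (ii) the degenerate case $\kappa(e_{1},0)=0$, which is handled by letting $a\to\infty$ along $b = a^{2}\lambda + \varepsilon$ with $\varepsilon\downarrow 0$, sending the $\tau^{2}$-coefficient in the denominator to $+\infty$ in a manner consistent with the stated bound.
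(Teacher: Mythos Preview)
Your proof is correct and follows exactly the same approach as the paper's, which simply states that the bound of Proposition~\ref{prop gen irb} holds with $b = 2d a^{2}(1-u)\kappa(e_{1},0)$ by Proposition~\ref{prop gen GD}, and that the result follows by optimizing over $a$. You have carried out that optimization explicitly and noted a couple of minor technical points the paper leaves implicit.
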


\begin{proof}
It follows from Proposition \ref{prop gen GD} that the bound of Proposition \ref{prop gen irb} holds with $b = 2d a^{2} (1-u) \kappa(e_{1},0)$, for any $a>0$. We get the result by optimizing over $a$.
\end{proof}

\begin{corollary}
\label{cor gen irb}
Assume $u \in [0,\frac12]$. Then for all $k \neq 0$, we have
\[
\widehat\kappa(k,0) = \frac1\beta \sum_{\tau \in \frac{2\pi}\beta \bbZ} \widetilde\kappa(k,\tau) \leq \frac{(2S+1) \sqrt{2d (1-u) \kappa(e_{1},0)}}{\sqrt{\eps(k)}} \coth \Bigl( \beta \sqrt{2d (1-u) \kappa(e_{1},0) \eps(k)} \Bigr).
\]
\end{corollary}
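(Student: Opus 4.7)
The identity $\widehat\kappa(k,0) = \frac{1}{\beta}\sum_{\tau \in \frac{2\pi}{\beta}\bbZ} \widetilde\kappa(k,\tau)$ is just Fourier inversion on the time torus $[0,\beta]_{\rm per}$: since $\widetilde\kappa(k,\tau) = \int_{0}^{\beta} e^{-\ii\tau t}\widehat\kappa(k,t)\dd t$, evaluating the inverse transform at $t=0$ gives the claim. So the work lies in bounding the $\tau$-sum by inserting the pointwise bound from the previous corollary and computing the resulting series in closed form.

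The plan is to substitute $\widetilde\kappa(k,\tau) \leq (2S+1)\bigl(\eps(k) + \tau^{2}/C\bigr)^{-1}$, with $C := 8d(1-u)\kappa(e_{1},0)$, into the above sum. Writing $\tau = 2\pi n/\beta$ with $n \in \bbZ$ and pulling constants out yields
\[
\widehat\kappa(k,0) \leq \frac{(2S+1)C}{\beta} \sum_{n\in\bbZ} \frac{1}{C\eps(k) + 4\pi^{2} n^{2}/\beta^{2}} = \frac{(2S+1)C\beta}{4\pi^{2}} \sum_{n\in\bbZ} \frac{1}{a^{2}+n^{2}},
\]
where $a = \beta\sqrt{C\eps(k)}/(2\pi)$. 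The classical identity $\sum_{n\in\bbZ}\frac{1}{a^{2}+n^{2}} = \frac{\pi}{a}\coth(\pi a)$ (which follows from the Mittag-Leffler expansion of $\coth$, or equivalently from the Poisson summation formula applied to $(a^{2}+x^{2})^{-1}$) then produces
\[
\widehat\kappa(k,0) \leq \frac{(2S+1)\sqrt{C}}{2\sqrt{\eps(k)}}\coth\!\Bigl(\tfrac{\beta}{2}\sqrt{C\eps(k)}\Bigr).
\]

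Finally I would substitute back $C = 8d(1-u)\kappa(e_{1},0)$, so that $\sqrt{C}/2 = \sqrt{2d(1-u)\kappa(e_{1},0)}$ and $\frac{\beta}{2}\sqrt{C\eps(k)} = \beta\sqrt{2d(1-u)\kappa(e_{1},0)\,\eps(k)}$, which is exactly the stated bound. There is no real obstacle here: the only thing to be careful about is bookkeeping of the constants $C$ and the factor of $2$ in the hyperbolic cotangent argument, and checking that $k \neq 0$ (needed to keep $\eps(k)>0$ and the $n=0$ term in the series well defined) is the only hypothesis required for termwise application of the previous corollary. The hypothesis $u\in[0,\tfrac12]$ is used only insofar as it is needed for the previous corollary to hold.
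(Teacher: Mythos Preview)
Your proof is correct and follows exactly the approach indicated in the paper, which simply cites the identity $\sum_{n\in\bbZ}\frac{1}{n^{2}+\xi^{2}} = \frac{\pi}{\xi}\coth(\pi\xi)$ without further elaboration. You have spelled out the bookkeeping of constants carefully and accurately, and your observation that $k\neq 0$ is needed precisely to keep $\eps(k)>0$ (so the $n=0$ term is finite) is the right way to read the hypothesis.
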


This corollary follows from the identity $\sum_{n\in\bbZ} \frac1{n^{2}+\xi^{2}} = \frac\pi\xi \coth(\pi\xi)$. We can now prove Theorem \ref{thm gen irb}.

\begin{proof}[Proof of Theorem \ref{thm gen irb}]
For the first bound, we use
\be
\begin{split}
1 = \kappa(0,0) &= \frac1{|\Lambda| \beta} \sum_{k\in\Lambda^{*}} \sum_{\tau \in \frac{2\pi}\beta \bbZ} \widetilde\kappa(k,\tau) \\
&= \frac1{|\Lambda| \beta} \widetilde\kappa(0,0)  + \frac1{|\Lambda| \beta} \sum_{\tau \in \frac{2\pi}\beta \bbZ \setminus \{0\}} \widetilde\kappa(0,\tau)  + \frac1{|\Lambda| \beta} \sum_{k\in\Lambda^{*} \setminus \{0\}} \sum_{\tau \in \frac{2\pi}\beta \bbZ} \widetilde\kappa(k,\tau).
\end{split}
\ee
The first term is equal to $\bbE \bigl( \frac{L_{(0,0)}}{\beta|\Lambda|} \bigr)$. The middle term vanishes in the limit $|\Lambda|\to\infty$. The last term can be bounded by Corollary \ref{cor gen irb}, recalling that $\widetilde\kappa$ is real because of lattice symmetries. We get
\bm
\lim_{|\Lambda|\to\infty} \bbE \Bigl( \frac{L_{(0,0)}}{\beta|\Lambda|} \Bigr) \geq 1 - (2S+1) \sqrt{2d (1-u) \kappa(e_{1},0)} \\
\times \frac1{(2\pi)^{d}} \int_{[-\pi,\pi]^{d}} \frac{\coth(\beta \sqrt{2d (1-u) \kappa(e_{1},0) \eps(k)})}{\sqrt{\eps(k)}} \dd k.
\end{multline}
The cotangent disappears in the limit $\beta\to\infty$ by dominated convergence (in $d\geq3$) and we obtain the first bound of Theorem \ref{thm gen irb} since the integral of $1/\sqrt{\eps(k)}$ is equal to $J_{d}' / \sqrt{2d}$.

For the second bound we use the sum rule of \cite{KLS1}. Using invariance under lattice rotations, we have
\be
\begin{split}
\kappa(e_{1},0) &= \frac1{d |\Lambda| \beta} \sum_{k\in\Lambda^{*}} \sum_{\tau \in \frac{2\pi}\beta \bbZ} \widetilde\kappa(k,\tau) \sum_{i=1}^{d} \cos k_{i} \\
&= \frac1{|\Lambda| \beta} \widetilde\kappa(0,0)  + \frac1{|\Lambda| \beta} \sum_{\tau \in \frac{2\pi}\beta \bbZ \setminus \{0\}} \widetilde\kappa(0,\tau)  + \frac1{d|\Lambda| \beta} \sum_{k\in\Lambda^{*} \setminus \{0\}} \sum_{\tau \in \frac{2\pi}\beta \bbZ} \widetilde\kappa(k,\tau) \sum_{i=1}^{d} \cos k_{i}.
\end{split}
\ee
As before, the first term is equal to $\bbE \bigl( \frac{L_{(0,0)}}{\beta|\Lambda|} \bigr)$ and the middle term vanishes in the limit $|\Lambda|\to\infty$. Using Corollary \ref{cor gen irb}, we get
\bm
\lim_{|\Lambda|\to\infty} \bbE\Bigl( \frac{L_{(0,0)}}{\beta|\Lambda|} \Bigr) \geq \kappa(e_{1},0) - (2S+1) \sqrt{2d (1-u) \kappa(e_{1},0)} \\
\times \frac1{d(2\pi)^{d}} \int_{[-\pi,\pi]^{d}} \frac{\coth(\beta \sqrt{2d (1-u) \kappa(e_{1},0) \eps(k)})}{\sqrt{\eps(k)}} \Bigl( \sum_{i=1}^{d} \cos k_{i} \Bigr)_{+} \dd k.
\end{multline}
The cotangent again disappears in the limit $\beta\to\infty$ and the integral is equal to $I_{d}' / \sqrt{2d}$.
\end{proof}

\section{Specific models of interest}
\label{sec models}

It is time to give flesh to the quantum models under study. The Hamiltonians were defined in terms of operators $T_{xy}$, $Q_{xy}$, and $P_{xy}$, that were chosen for their mathematical convenience rather than their physical relevance. In this section we discuss the special cases $S=\frac12$ and $S=1$ in details.

\subsection{Spin $\frac12 $ models}
\label{sec spin12}

Let us start with the {\bf spin $\frac12$ Heisenberg ferromagnet}, the model that was studied by Conlon and Solovej \cite{CS}, and by T\'oth who introduced the representation with ``crosses'' \cite{Toth1}. The parameters are $S=\frac12$ and $u=1$. The loop parameter is thus $\theta=2$. Using $\vec S_{x} \cdot \vec S_{y} = \frac12 T_{xy} - \frac14 \Id$, we find that
\be
H_{\Lambda}^{(1)} = -2 \sum_{\{x,y\} \in \caE} \bigl( \vec S_{x} \cdot \vec S_{y} - \tfrac14 \bigr).
\ee
Since only transpositions are present, the loop representation can be seen as a model of ``spatial permutations'', i.e., bijections $\Lambda \to \Lambda$. Each loop corresponds to a permutation cycle, and the length of the loop is equal to $\beta$ times the number of vertices in the cycle.

Next, we discuss the {\bf spin $\frac12$ Heisenberg antiferromagnet}.
Let $S=\frac12$ and $u=0$, and consider the Hamiltonian $\tilde H_{\Lambda}^{(0)} = -\sum_{\{x,y\}} P_{xy}$. We have $\vec S_{x} \cdot \vec S_{y} = \frac14 \Id - \frac12 P_{xy}$, so that
\be
\tilde H_{\Lambda}^{(0)} = 2 \sum_{\{x,y\} \in \caE} \bigl( \vec S_{x} \cdot \vec S_{y} + \tfrac14 \bigr).
\ee
This is indeed the Hamiltonian of the Heisenberg antiferromagnet. We cannot use Theorem \ref{thm integer spin} because the spin is half-integer. But if we assume in addition that the graph is bipartite, i.e., $\Lambda = \Lambda_{\rm A} \cup \Lambda_{\rm B}$ such that all edges of $\caE$ involve one site in $\Lambda_{\rm A}$ and one site in $\Lambda_{\rm B}$, the Hamiltonian is unitarily equivalent to $H_{\Lambda}^{(0)}$:
\be
H_{\Lambda}^{(0)} = \Bigl( \prod_{x \in \Lambda_{\rm B}} \e{-\ii \pi S_{x}^{2}} \Bigr) \tilde H_{\Lambda}^{(0)} \Bigl( \prod_{x \in \Lambda_{\rm B}} \e{\ii \pi S_{x}^{2}} \Bigr).
\ee
Then we can use the probabilistic representation. It only involves double bars since $u=0$, and it was introduced by Aizenman and Nachtergaele \cite{AN}. Spin correlations are given by
\be
\langle S_{x}^{i} S_{y}^{i} \rangle = \tfrac14 (-1)^{x-y} \bbP(E_{x,y,0})
\ee
for $i=1,2,3$, where $(-1)^{x-y}$ is equal to 1 if $x,y$ belong to the same sublattice, $-1$ otherwise.

The case of {\it frustrated} systems where the graph is not bipartite is currently attracting a lot of attention by condensed matter physicists. The probabilistic representation does not apply, because the weights of loops would carry signs.

We can check that
\be
Q_{xy} = 2 \bigl( S_{x}^{1} S_{y}^{1} - S_{x}^{2} S_{y}^{2} + S_{x}^{3} S_{y}^{3} \bigr) + \tfrac12.
\ee
We then obtain a family of Heisenberg models with anisotropic spin interactions, namely
\be
H_{\Lambda}^{(u)} = -2 \sum_{\{x,y\}\in\caE} \bigl( S_{x}^{1} S_{y}^{1} + (2u-1) S_{x}^{2} S_{y}^{2} + S_{x}^{3} S_{y}^{3} - \tfrac14 \bigr).
\ee
The case $u=\frac12$ gives the {\bf spin $\frac12$ XY model}. 
A consequence of Theorem \ref{thm corr} is that
\be
0 \leq | \langle S_{x}^{2} S_{y}^{2} \rangle | \leq \langle S_{x}^{1} S_{y}^{1} \rangle = \langle S_{x}^{3} S_{y}^{3} \rangle.
\ee
Neither the second inequality, nor the positivity of the latter correlations, seem to be immediate.

An additional motivation for the XY model comes from the fact that it is equivalent to the {\bf hard-core Bose gas}. This is well-known, see e.g.\ \cite{KLS1,ALSSY}, but we recall it for the convenience of the readers. Notice that the present XY model differs somewhat from standard conventions, since interactions are between spins in the directions 1 and 3 rather than 1 and 2, and this requires a few modifications of the usual correspondence.

Let $a_{x} = S_{x}^{1} + \ii S_{x}^{3}$ and its adjoint $a_{x}^{\dagger} = S_{x}^{1} - \ii S_{x}^{3}$. These operators satisfy the commutation relations
\be
\begin{split}
&[a_{x}, a_{y}^{(\dagger)}] = 0 \quad \text{for all } x \neq y, \\
&\{ a_{x}, a_{x}^{\dagger} \} = \Id_{\Lambda} \quad \text{for all } x \in \Lambda.
\end{split}
\ee
In addition, the operator for the number of particles at $x$ is
\be
n_{x} = a_{x}^{\dagger} a_{x} = S_{x}^{2} + \tfrac12 \Id_{\Lambda}.
\ee
The Hamiltonian can be rewritten as
\be
H_{\Lambda}^{(\frac12)}  = -\sum_{\{x,y\} \in \caE} \Bigl( a_{x}^{\dagger} a_{y} + a_{y}^{\dagger} a_{x} - \tfrac12 \Bigr).
\ee

The relevant correlation functions are those representing ``off-diagonal long-range order'', that signal the occurrence of Bose-Einstein condensation:
\be
\begin{split}
\langle a_{x}^{\dagger} a_{y} \rangle &= \frac1{Z_{\Lambda}^{(\frac12)}} \Tr (S_{x}^{1} - \ii S_{x}^{3}) (S_{y}^{1} + \ii S_{y}^{3}) \e{-\beta H_{\Lambda}^{(\frac12)}} \\
&= 2 \langle S_{x}^{3} S_{y}^{3} \rangle.
\end{split}
\ee
We used the identity $\langle S_{x}^{1} S_{y}^{1} \rangle = \langle S_{x}^{3} S_{y}^{3} \rangle$, and the fact that $\langle S_{x}^{3} S_{y}^{1} \rangle = -\langle S_{x}^{1} S_{y}^{3} \rangle = 0$. The density-density correlation function, on the other hand, is given by
\be
\langle n_{x} n_{y} \rangle - \langle n_{x} \rangle \langle n_{y} \rangle = \langle S_{x}^{2} S_{y}^{2} \rangle.
\ee
The latter correlation function is given by difference of probabilities of $E_{x,y,0}^{+}$ and $E_{x,y,0}^{-}$, see Theorem \ref{thm corr}; we conjecture below, in Conjecture 1, that it has exponential decay with respect to $\|x-y\|$. This is indeed expected in the case of the hard-core Bose gas.

Existence of long-range order was proved in \cite{DLS,KLS1} in rectangular boxes, $d\geq3$, and low temperatures. Theorem \ref{thm irb} does not improve these results. But we use the random loop representation to give a heuristic description of phase transitions and symmetry breaking in Section \ref{sec outlook}.

\subsection{Spin 1 SU(2)-invariant model}
\label{sec spin1}

It is well-known that any two-body SU(2)-invariant interaction for $S=1$ can be be written as
\be
\label{general spin 1 Hamiltonian}
\check H_{\Lambda} = -\sum_{\{x,y\}\in\caE} \Bigl( J_{1} \vec S_{x} \cdot \vec S_{y} + J_{2} (\vec S_{x} \cdot \vec S_{y})^{2} \Bigr).
\ee

It may be worth discussing first the ground state phase diagram of the classical model, where each site is the host of a vector in $\bbS^{2}$. There is a ferromagnetic phase when $J_{1}>0, J_{2}\geq0$, an antiferromagnetic phase when $J_{1}<0, J_{2}\geq0$, and a nematic phase when $J_{1}=0, J_{2}>0$. Results for low temperatures have been obtained in \cite{FSS} in the case of the classical Heisenberg models ($J_{2}=0$), and in \cite{AZ,BC} in the case of the classical nematic model ($J_{1}=0, J_{2}>0$). The case $J_{2}<0$ would be interesting to scrutinize.

The phase diagram of the quantum model has been investigated by several authors, see \cite{BO,TZX,TLMP,FKK} and references therein, and it differs significantly from the classical one. It is displayed in Fig.\ \ref{fig phd}. The line $J_{2}=0$ corresponds to the usual Heisenberg models. The line in the direction $(-1,-\frac13)$ corresponds to the model introduced by Affleck, Kennedy, Lieb, and Tasaki in order to study Haldane's conjecture \cite{AKLT}.

\bfig
\includegraphics[width=10cm]{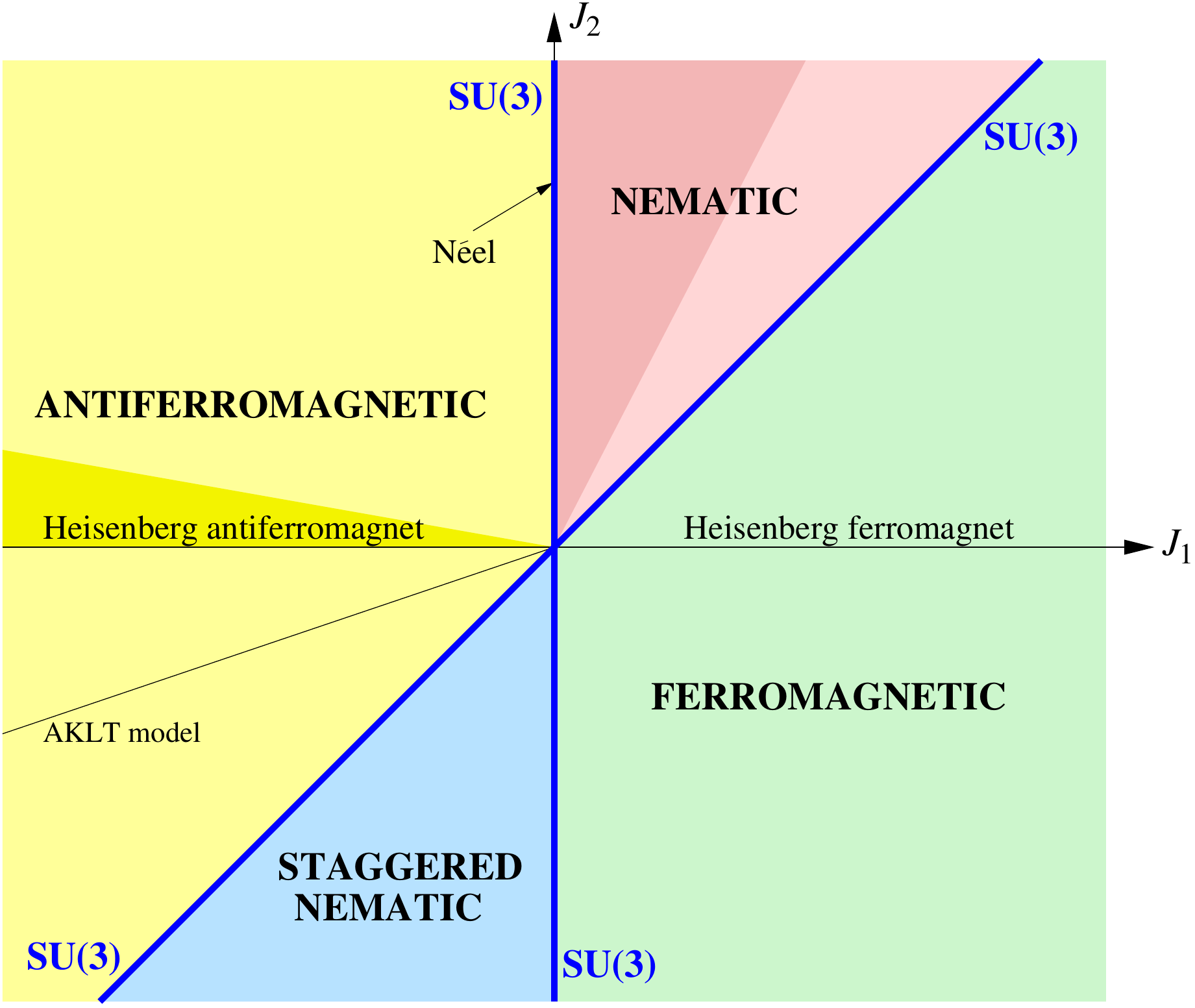}
\caption{(Color online) Phase diagram of the general spin 1 model with Hamiltonian \eqref{general spin 1 Hamiltonian} in dimension $d\geq3$. On the two lines $J_{1}=0$ and $J_{2}=J_{1}$ the model has SU(3) invariance, not only SU(2). The phase diagram is expected to show four phases (ferromagnetic, nematic, antiferromagnetic, staggered nematic) that are separated by those lines. Antiferromagnetic long-range order has been proved in the dark yellow region \cite{DLS,KLS1}. The random loop representation allows to prove N\'eel order for $J_{1}=0$ and $J_{2}>0$ (Theorem \ref{thm spin 1 af}), and to prove another form of magnetic order in the dark pink region $0 < J_{1} \leq \frac12 J_{2}$, that is compatible with a nematic phase (Theorem \ref{thm spin 1}).}
\label{fig phd}
\efig

It is possible to check that $J_{1} \vec S_{x} \cdot \vec S_{y}$ is reflection positive, in the quantum sense, when $J_{1} \leq 0$ and that $J_{2} (\vec S_{x} \cdot \vec S_{y})^{2}$ is reflection positive when $J_{2} \geq 0$. Thus $\check H_{\Lambda}$ is definitely reflection positive in the quadrant $J_{1} \leq 0, J_{2} \geq 2$. Long-range order has been proved for the antiferromagnet when $d\geq3$ and when the temperature is low enough \cite{DLS,KLS1}. One can obtain an infrared bound for the usual correlation function, which allows to extend the domain of long-range order to the dark yellow domain depicted in Fig.\ \ref{fig phd}.
The domain of reflection positivity presumably extends a bit beyond the quadrant, but this has not been shown yet.

The Hamiltonian $\tilde H^{(u)}_{\Lambda}$ defined in Eq.\ \eqref{def fam2} is SU(2) invariant, and is therefore of the form \eqref{general spin 1 Hamiltonian} up to a shift by the identity operator. Let us express $T_{xy}$ and $P_{xy}$ as linear combinations of $\vec S_{x} \cdot \vec S_{y}$ and $(\vec S_{x} \cdot \vec S_{y})^{2}$.

\begin{lemma}
\label{lem spin 1}
In the case $S=1$, we have
\[
\begin{split}
&T_{xy} = \vec S_{x} \cdot \vec S_{y} + (\vec S_{x} \cdot \vec S_{y})^{2} - 1, \\
&P_{xy} = (\vec S_{x} \cdot \vec S_{y})^{2} - 1.
\end{split}
\]
\end{lemma}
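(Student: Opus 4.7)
The plan is to exploit the SU(2) invariance of both sides together with the well-known decomposition $\caH_{x} \otimes \caH_{y} = \bbC^{3} \otimes \bbC^{3}$ into total-spin eigenspaces $V_{0} \oplus V_{1} \oplus V_{2}$ (of dimensions $1, 3, 5$), corresponding to total spin $J=0, 1, 2$. Since these three irreducibles are pairwise inequivalent, Schur's lemma tells us every SU(2)-invariant operator on $\caH_{x} \otimes \caH_{y}$ is a scalar on each $V_{J}$. Both $T_{xy}$ and $P_{xy}$ are SU(2)-invariant (Lemma \ref{lem sym}), and so are any polynomials in $\vec S_{x} \cdot \vec S_{y}$. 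Hence each identity reduces to checking three scalar eigenvalues.

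First I compute the spectrum of $\vec S_{x} \cdot \vec S_{y}$ on each $V_{J}$. Using
\[
2 \vec S_{x} \cdot \vec S_{y} = (\vec S_{x} + \vec S_{y})^{2} - \vec S_{x}^{2} - \vec S_{y}^{2}
\]
together with $\vec S_{x}^{2} = S(S+1)\Id = 2\Id$ for $S=1$ (see \eqref{square spins}), one finds that $\vec S_{x} \cdot \vec S_{y}$ acts on $V_{J}$ as the scalar $\tfrac12\bigl[J(J+1) - 4\bigr]$, i.e.\ as $-2$, $-1$, $+1$ on $V_{0}, V_{1}, V_{2}$ respectively. Consequently $(\vec S_{x} \cdot \vec S_{y})^{2} - 1$ acts as $3, 0, 0$ on $V_{0}, V_{1}, V_{2}$.

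For the second identity, recall from the proof of Lemma \ref{lem sym}(b) that $\frac{1}{2S+1} P_{xy} = \frac13 P_{xy}$ is the orthogonal projector onto the singlet subspace $V_{0}$. Hence $P_{xy}$ acts as $3, 0, 0$ on $V_{0}, V_{1}, V_{2}$, which matches $(\vec S_{x} \cdot \vec S_{y})^{2} - 1$ eigenvalue by eigenvalue.

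For the first identity, I note that $T_{xy}$, being the flip, acts as $+1$ on the symmetric part of $\caH_{x} \otimes \caH_{y}$ and as $-1$ on the antisymmetric part. For two spin-$1$ particles the symmetric part is exactly $V_{0} \oplus V_{2}$ and the antisymmetric part is $V_{1}$, so $T_{xy}$ has eigenvalues $+1, -1, +1$ on $V_{0}, V_{1}, V_{2}$. On the other hand $\vec S_{x} \cdot \vec S_{y} + (\vec S_{x} \cdot \vec S_{y})^{2} - 1$ takes the values $-2 + 4 - 1 = 1$, $-1 + 1 - 1 = -1$, $1 + 1 - 1 = 1$ on $V_{0}, V_{1}, V_{2}$, which agree with those of $T_{xy}$. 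The two identities then follow from Schur's lemma. There is no real obstacle; the only point requiring mild attention is identifying symmetric/antisymmetric subspaces with the correct total-spin sectors, which is a classical fact about two-spin coupling.
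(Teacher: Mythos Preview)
Your proof is correct. For $P_{xy}$ your argument is essentially the paper's: the paper also computes the eigenvalues $-2,-1,1$ of $\vec S_{x}\cdot\vec S_{y}$ from $(\vec S_{x}+\vec S_{y})^{2}$, squares them, and checks the identity on each eigenspace using that $\tfrac13 P_{xy}$ is the singlet projector.

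For $T_{xy}$, however, you take a genuinely different route. The paper simply writes $2\vec S_{x}\cdot\vec S_{y} = S_{x}^{+}S_{y}^{-} + S_{x}^{-}S_{y}^{+} + 2S_{x}^{3}S_{y}^{3}$ and verifies the identity by brute-force computation of all matrix elements in the $|a,b\rangle$ basis. Your argument instead uses SU(2) invariance and Schur's lemma on both sides, then identifies the $\pm1$ eigenspaces of the flip $T_{xy}$ (symmetric and antisymmetric tensors) with $V_{0}\oplus V_{2}$ and $V_{1}$ respectively. This is cleaner and more conceptual; it avoids any matrix-element computation and makes the structure transparent. The only fact you invoke beyond what the paper uses is the standard identification of the (anti)symmetric square with the even (odd) total-spin sectors, which is elementary (a dimension count $1+5=6$, $3=3$ suffices once one knows $V_{0}$ contains the manifestly symmetric singlet). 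The paper's direct calculation, by contrast, requires no representation-theoretic input at all, at the cost of being less illuminating.
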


\begin{proof}
Let $S_{x}^{\pm} = S_{x}^{1} \pm \ii S_{x}^{2}$. It is well-known that, in the basis where $S_{x}^{3}$ is diagonal, we have
\be
S^{\pm}_{x} |a\rangle = \sqrt{2 - a(a\pm1)} |a\pm1\rangle,
\ee
with the understanding that $S^{+}_{x} |1\rangle = 0$ and $S^{-}_{x} |-1\rangle = 0$. Using the identity
\be
2 \vec S_{x} \cdot \vec S_{y} = S_{x}^{+} S_{y}^{-} + S_{x}^{-} S_{y}^{+} + 2 S_{x}^{3} S_{y}^{3},
\ee
the claim for $T_{xy}$ can be verified by direct calculations of all matrix elements.

Direct calculations can also be used for $P_{xy}$. However, a more elegant argument uses properties of additions of spins (see e.g.\ \cite{Mes}). It is well-known that the eigenvalues of $(\vec S_{x} + \vec S_{y})^{2}$ are 0, 2, 6 (they are equal to $J(J+1)$ with $J \in \{0,1,2\}$). This gives the eigenvalues for $\vec S_{x} \cdot \vec S_{y}$: $-2$, $-1$, 1, and hence for $(\vec S_{x} \cdot \vec S_{y})^{2}$: 4, 1, 1. Recall that $\frac13 P_{xy}$ is the projector onto the one-dimensional eigenspace of $(\vec S_{x} + \vec S_{y})^{2}$ with eigenvalue 0. The identity of the lemma is now easily checked for any vector that belongs to the eigensubspaces.
\end{proof}

It follows from Lemma \ref{lem spin 1} that $\tilde H_{\Lambda}^{(u)}$ can be written as
\be
\tilde H_{\Lambda}^{(u)} = -\sum_{\{x,y\}\in\caE} \Bigl( u \vec S_{x} \cdot \vec S_{y} + (\vec S_{x} \cdot \vec S_{y})^{2} - 2 \Bigr).
\ee

The region of parameters where the model has the probabilistic representation (with positive weights of the loops) is delimited by the lines $J_{1}=0$ and $J_{2}=J_{1}$. This is precisely the pink region of the nematic phase. The ordinary spin-spin correlation function is given by Theorem \ref{thm integer spin} (a), and this leads to the following conjecture, that is indeed compatible with a nematic phase.

\begin{conjecture}
\label{conj}
Let $(J_{1},J_{2})$ satisfy $0 < J_{1} < J_{2}$. For all $\beta\in\bbR$ and all $d\geq1$, the correlation function
\[
\langle S_{x}^{i} S_{y}^{i} \rangle = \tfrac23 \bigl[ \bbP(E_{x,y,t}^{+}) - \bbE(E_{x,y,t}^{-}) \bigr]
\]
has exponential decay with respect to $\|x-y\|$.
\end{conjecture}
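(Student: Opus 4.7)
The plan is to recast the conjecture via Theorem~\ref{thm integer spin}(a) as an estimate on a signed expectation in the loop picture, and then to extract the needed cancellations by a switching argument on the transitions of the Poisson process. Fix $\{x,y\}\subset\Lambda$ and take $t=0$. On the event $E_{x,y,0}$ the points $(x,0)$ and $(y,0)$ lie on a common loop $\gamma$; pick one of the two arcs of $\gamma$ joining them and let $N(\omega)$ denote the number of double bars along it. A cross preserves the vertical direction of the loop whereas a double bar reverses it, so the sign that discriminates $E^{+}$ from $E^{-}$ is $(-1)^{N(\omega)}$, independent of the choice of arc since any loop carries an even total number of double bars. Writing $\mathbb{P}$ for the loop measure \eqref{La mais-je}, this gives
\[
\tfrac32 \langle S_x^i S_y^i \rangle \;=\; \mathbb{P}(E_{x,y,0}^{+}) - \mathbb{P}(E_{x,y,0}^{-}) \;=\; \mathbb{E}\bigl[ (-1)^{N(\omega)}\,\mathbf{1}_{E_{x,y,0}}(\omega) \bigr],
\]
so the conjecture reduces to exhibiting exponential cancellations on the right-hand side.

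Since $\mathbb{P}(E_{x,y,0})$ can stay bounded away from zero at large $\|x-y\|$ — exactly the regime where Theorem~\ref{thm integer spin}(b) predicts nematic long-range order — no trivial bound of the form $|\mathbb{P}(E^{+})-\mathbb{P}(E^{-})|\leq\mathbb{P}(E)$ can suffice. The natural route is a \emph{switching} argument in the spirit of Aizenman's random current representation of the Ising model: pair each configuration contributing a positive sign with one contributing a negative sign. Concretely, I would condition on the Poisson skeleton (positions of all transitions in $\caE\times[0,\beta]$) and resample the cross/double-bar label of a transition lying on the arc connecting $(x,0)$ to $(y,0)$; a single relabelling flips the parity of $N$. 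For $u\in(0,1)$ the intensity $1-u$ of double bars is strictly positive, so with high probability the arc carries at least $c\,\|x-y\|$ candidate transitions. If the switching can be arranged as a measure-preserving involution on the weight $\theta^{|\caL(\omega)|}\dd\rho_u(\omega)$, iterating along the arc would yield
\[
\bigl|\mathbb{P}(E_{x,y,0}^{+})-\mathbb{P}(E_{x,y,0}^{-})\bigr| \;\leq\; C(u)\,\e{-c(u)\|x-y\|},
\]
with constants depending only on $u$. Uniformity in $\beta$ and $d$ would reflect the purely local nature of each flip, matching the form of the conjecture.

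The hard part is the design of the switching bijection. Relabelling a single transition does not merely flip $N$: it can split $\gamma$ into two loops, merge it with a neighbour, or reroute the arc so that $(x,0)$ and $(y,0)$ end up on different loops, so the connectivity event $E_{x,y,0}$ is deeply entangled with the parity statistic, and a naive independent-coin coupling does not preserve $\theta^{|\caL(\omega)|}\dd\rho_u(\omega)$. A viable scheme would have to compensate each flip by a simultaneous swap of one or more other transitions, much as the Aizenman--Duminil-Copin--Tassion switching lemma rebalances sources in the Ising current representation; I do not see how to carry this out with the tools available in the present paper. As a parallel line of attack I would investigate a McBryan--Spencer-style complex rotation $\prod_z \e{\ii\phi_z S_z^2}$ applied to $\tilde H_\Lambda^{(u)}$, exploiting the fact that $(\vec S_x\cdot\vec S_y)^2$ has stronger symmetry than $\vec S_x\cdot\vec S_y$ — it is invariant under site-wise spin inversion in the sense relevant to a nematic director — and hoping that the extra symmetry upgrades the usual logarithmic phase penalty to an exponential one. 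Either route appears to require a genuinely new ingredient beyond what is developed here.
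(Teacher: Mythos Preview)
The statement you are trying to prove is explicitly labeled a \emph{Conjecture} in the paper, and the paper provides no proof. Its only commentary is the one-line heuristic immediately following the statement: ``Long-range correlations are only possible if long loops are present, and the vertical orientation is quickly lost. It should be possible to prove this rigorously, although it does not appear to be straightforward when the dimension $d$ is larger than $1$.'' So there is nothing to compare your proposal against in the sense of a paper proof.

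Your write-up is not a proof either, and to your credit you say so. The reduction to $\mathbb{E}[(-1)^{N(\omega)}\mathbf{1}_{E_{x,y,0}}]$ is correct and is precisely the ``vertical orientation is quickly lost'' heuristic the paper alludes to, made concrete. Your diagnosis of why the switching idea stalls is also accurate: flipping a single cross/double-bar label changes the loop structure (and hence the weight $\theta^{|\mathcal{L}(\omega)|}$) and can destroy the connectivity event $E_{x,y,0}$ itself, so there is no obvious measure-preserving involution that flips only the parity. This is the actual obstruction, and the paper offers nothing to get past it. The McBryan--Spencer suggestion is less promising than you imply: the extra site-wise inversion symmetry of $(\vec S_x\cdot\vec S_y)^2$ is discrete, not continuous, and there is no mechanism by which a complex-rotation argument upgrades from algebraic to exponential decay without an additional input such as a mass gap.

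In short: the paper leaves this open, your proposal correctly identifies the natural line of attack and its genuine difficulty, and no gap can be named because no proof is claimed on either side.
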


Long-range correlations are only possible if long loops are present, and the vertical orientation is quickly lost. It should be possible to prove this rigorously, although it does not appear to be straightforward when the dimension $d$ is larger than 1.

The results of Section \ref{sec macroscopic loops} nonetheless imply a phase transition with long-range order. It follows from Theorem \ref{thm integer spin} (b) that macroscopic loops are accompanied by long-range correlations of $(S_{x}^{i})^{2}$. Here is the main result of this article for the model with $S=1$.

\begin{theorem}
\label{thm spin 1}
Let $(\Lambda,\caE)$ be the $d$-dimensional cubic box with periodic boundary conditions and even side length. Assume that $0 \leq J_{1} \leq \frac12 J_{2}$ and that $d\geq5$. Then there exists $\beta_{0} < \infty$ and $c>0$ such that
\[
\frac1{|\Lambda|^{2}} \sum_{x,y\in\Lambda} \Bigl( \langle (S_{x}^{i})^{2} (S_{y}^{i})^{2} \rangle - \langle (S_{x}^{i})^{2} \rangle \langle (S_{y}^{i})^{2} \rangle \Bigr) \geq c
\]
for all $\beta > \beta_{0}$, $i=1,2,3$, $x,y \in \Lambda$, uniformly in the size of the system.
\end{theorem}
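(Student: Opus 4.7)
The strategy is to pass from spin to loop correlations via Theorem~\ref{thm integer spin}(b) and then invoke the infrared bound of Theorem~\ref{thm irb}. Lemma~\ref{lem spin 1} identifies $\tilde H_\Lambda^{(u)}$ with the SU(2)-invariant form \eqref{general spin 1 Hamiltonian} with $(J_1,J_2) = (u,1)$ up to an additive constant, so the hypothesis $0 \leq J_1 \leq \tfrac12 J_2$ is precisely $u \in [0,\tfrac12]$, the reflection-positive range. For $S=1$ the constant $\tfrac1{45} S(S+1)(2S-1)(2S+3)$ equals $\tfrac29$, so Theorem~\ref{thm integer spin}(b) combined with translation invariance on the torus gives
\[
\frac{1}{|\Lambda|^2} \sum_{x,y\in\Lambda} \Bigl( \langle (S_x^i)^2 (S_y^i)^2 \rangle - \langle (S_x^i)^2 \rangle \langle (S_y^i)^2 \rangle \Bigr) = \frac{2}{9\,|\Lambda|} \sum_{x\in\Lambda} \bbP(E_{0,x,0}).
\]
It therefore suffices to lower-bound the loop density $\frac{1}{|\Lambda|}\sum_x \bbP(E_{0,x,0})$ by a positive constant, uniformly in $|\Lambda|$ for all sufficiently large $\beta$.

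Next, I would apply Theorem~\ref{thm irb} with $S=1$ and $u \in [0,\tfrac12]$. Writing $p := \bbP(E_{0,e_1,0})$, $A := \tfrac{3}{\sqrt2}\sqrt{1-u}\,J_d$ and $B := \tfrac{3}{\sqrt2}\sqrt{1-u}\,I_d$, its two branches read
\[
\liminf_{\beta\to\infty} \liminf_{L\to\infty} \frac{1}{|\Lambda|} \sum_{x\in\Lambda} \bbP(E_{0,x,0}) \geq \max\bigl( 1 - A\sqrt p,\; p - B\sqrt p \bigr).
\]
The first piece is positive for $\sqrt p < 1/A$ and the second for $\sqrt p > B$; together they cover the entire interval $[0,1]$ iff $AB<1$, i.e.\ $\tfrac92(1-u) I_d J_d < 1$, which is exactly the $S=1$ instance of the sufficient condition \eqref{sufficient condition}. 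Among $u \in [0,\tfrac12]$ this is most restrictive at $u=0$ and requires $I_d J_d < \tfrac29$; Table~\ref{table} shows that this fails for $d=3,4$ but holds for $d \geq 5$ (one has $I_5 J_5 \approx 0.221$, and $I_d J_d$ decreases in $d$). Hence there exists $c' = c'(d,u) > 0$ such that $\max(1-A\sqrt p,\, p - B\sqrt p) \geq c'$ uniformly in $p \in [0,1]$, and $c = \tfrac29 c'$ fulfils the theorem.

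The main obstacle is that $p$ itself depends on $\beta$ and $L$ and is not known a priori, so the lower bound on the loop density must be uniform in $p \in [0,1]$. This is exactly why both branches of Theorem~\ref{thm irb} must be used together, and it is the source of the dimension restriction $d \geq 5$: for $d = 3, 4$ the product $I_d J_d$ is too large for the spin value $2S+1 = 3$, so a range of $p$-values opens where neither branch yields a positive bound. Converting the double $\liminf$ into the quantitative ``for all $\beta > \beta_0$, uniformly in $|\Lambda|$'' formulation is then a routine unpacking of the finite-volume infrared inequality \eqref{la borne infrarouge}.
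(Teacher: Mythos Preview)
Your proposal is correct and follows exactly the route the paper indicates: the paper's own proof is the single sentence ``Theorem~\ref{thm spin 1} is a direct consequence of Theorem~\ref{thm integer spin}~(b) and Theorem~\ref{thm irb},'' and you have faithfully unpacked this, including the identification $(J_1,J_2)=(u,1)$ via Lemma~\ref{lem spin 1}, the evaluation of the constant $\tfrac29$, and the check of condition~\eqref{sufficient condition} against Table~\ref{table} to obtain the restriction $d\geq5$. Your discussion of why both branches of Theorem~\ref{thm irb} are needed (because $p=\bbP(E_{0,e_1,0})$ is not known a priori) is a useful clarification that the paper leaves implicit.
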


This theorem establishes the existence of a phase transition with symmetry breaking, since uniqueness of infinite-volume Gibbs states implies the decay of all correlations. Such magnetic order is compatible with the nematic phase. Theorem \ref{thm spin 1} is a direct consequence of Theorem \ref{thm integer spin} (b) and Theorem \ref{thm irb}. It actually holds for $d\geq3$ when $J_{1} \lesssim \frac12 J_{2}$.

The case $u=0$ in a bipartite graph is different. Only double bars occur in the loop representation, and the vertical orientation displays alternating properties. Namely, $\bbP(E_{x,y,t}^{-}) = 0$ whenever $x,y$ belong to the same sublattice, and $\bbP(E_{x,y,t}^{+}) = 0$ whenever $x,y$ belong to different sublattices. We therefore obtain the existence of N\'eel order at low temperatures, which stands in stark contrast to the classical case.

\begin{theorem}
\label{thm spin 1 af}
Let $(\Lambda,\caE)$ be the $d$-dimensional cubic box with periodic boundary conditions and even side length. Assume that $J_{1} = 0$, $J_{2} > 0$, and that $d\geq5$. Then there exist $\beta_{0} < \infty$ and $c>0$ such that
\[
\frac1{|\Lambda|^{2}} \sum_{x,y\in\Lambda} (-1)^{x-y} \langle S_{x}^{i} S_{y}^{i} \rangle \geq c
\]
for all $\beta > \beta_{0}$, $i=1,2,3$, $x,y \in \Lambda$, uniformly in the size of the system.
\end{theorem}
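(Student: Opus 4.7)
The plan is to combine Theorem \ref{thm integer spin}(a) with the infrared bound of Theorem \ref{thm irb} applied to the loop measure at parameter $u=0$, $\theta=3$. First, I reduce to the standard form of the Hamiltonian: by Lemma \ref{lem spin 1}, $P_{xy} = (\vec S_x \cdot \vec S_y)^{2} - \Id$, so up to rescaling $\beta$ by $J_{2}>0$ and adding a constant (which is irrelevant to Gibbs states), the case $J_{1}=0$ of $\check H_{\Lambda}$ coincides with $\tilde H_{\Lambda}^{(0)}$. Since $S=1 \in \bbN$, the probabilistic representation of Section \ref{sec quantum} is available, and the associated loop measure is the one from Section \ref{sec loops} with $u=0$, $\theta=3$.

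Next, I turn correlations into loop probabilities. Theorem \ref{thm integer spin}(a) gives
\[
\langle S_{x}^{i} S_{y}^{i} \rangle = \tfrac{2}{3}\bigl[\bbP(E_{x,y,0}^{+}) - \bbP(E_{x,y,0}^{-})\bigr].
\]
The cubic box with even side length is bipartite, and for $u=0$ on a bipartite graph the paragraph following Theorem \ref{thm integer spin} asserts that $\bbP(E_{x,y,0}^{-})=0$ when $x,y$ lie in the same sublattice and $\bbP(E_{x,y,0}^{+})=0$ otherwise. Either way, $(-1)^{x-y}\langle S_{x}^{i} S_{y}^{i}\rangle = \tfrac{2}{3}\bbP(E_{x,y,0}) \geq 0$. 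Using translation invariance on the torus,
\[
\frac{1}{|\Lambda|^{2}}\sum_{x,y \in \Lambda}(-1)^{x-y}\langle S_{x}^{i} S_{y}^{i}\rangle = \frac{2}{3|\Lambda|}\sum_{x\in\Lambda}\bbP(E_{0,x,0}),
\]
so the problem reduces to bounding $\frac{1}{|\Lambda|}\sum_{x}\bbP(E_{0,x,0})$ from below, uniformly in $|\Lambda|$, for large $\beta$.

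Finally, I invoke Theorem \ref{thm irb} with $u=0$, $S=1$. The sufficient condition \eqref{sufficient condition} becomes $3 < (\tfrac{1}{2} I_{d} J_{d})^{-1/2}$; the table in Section \ref{sec macroscopic loops} gives the right-hand side $\approx 3.01$ at $d=5$ and larger for bigger $d$, matching the hypothesis $d \geq 5$. Therefore at least one of the two lower bounds of Theorem \ref{thm irb} is strictly positive, yielding $\lim_{\beta\to\infty}\lim_{L\to\infty}\frac{1}{|\Lambda|}\sum_{x}\bbP(E_{0,x,0}) \geq c' > 0$. To make the inequality uniform in $|\Lambda|$, I inspect the finite-volume infrared estimate \eqref{la borne infrarouge}: for $d \geq 3$ the residual term $\frac{1}{\beta|\Lambda|}\sum_{k\neq 0}\eps(k)^{-1}$ is a Riemann sum approximating $\frac{1}{\beta}\int_{[-\pi,\pi]^{d}}\eps(k)^{-1}\dd k < \infty$, hence $O(1/\beta)$ uniformly in $L$. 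Picking $\beta_{0}$ large enough then gives the claim for some $c>0$. The only real obstacle is the numerical check that $(\tfrac{1}{2} I_{d} J_{d})^{-1/2} > 3$ in the stated dimensions, which decides whether the reflection-positivity bound is non-trivial for $S=1$; the remaining steps are routine bookkeeping transporting the loop-model estimate to spin correlations.
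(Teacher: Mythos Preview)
Your proposal is correct and follows essentially the same approach as the paper, which simply states that the theorem ``also follows directly from Theorem~\ref{thm irb}.'' You have accurately fleshed out the details: the reduction of the $J_1=0$, $J_2>0$ model to $\tilde H_\Lambda^{(0)}$ via Lemma~\ref{lem spin 1}, the use of Theorem~\ref{thm integer spin}(a) together with the bipartite observation (stated after Theorem~\ref{thm integer spin} and again just before Theorem~\ref{thm spin 1 af}) to convert $(-1)^{x-y}\langle S_x^i S_y^i\rangle$ into $\tfrac{2}{3}\bbP(E_{x,y,0})$, and the verification from Table~\ref{table} that the sufficient condition~\eqref{sufficient condition} holds for $S=1$, $u=0$ precisely when $d\geq 5$.
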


This theorem also follows directly from Theorem \ref{thm irb}. N\'eel order certainly occurs in dimensions $d=3,4$ as well.

\section{Conclusion and outlook}

Connections between random loop models and quantum lattice systems provide many deep insights for each of them. The continuous symmetries of the spin systems have far-reaching consequences regarding the size of loops in dimensions 1 and 2. Results about long-range order, obtained in the quantum setting \cite{DLS,KLS1}, establish the occurrence of macroscopic loops when $S=\frac12$, i.e., $\theta=2$. We have extended the result to higher values of $S$ and $\theta$. We proved this by adapting the method of reflection positivity and infrared bounds of \cite{FSS}, rearranging the underlying Poisson point process of transitions so that it becomes reflection positive. The case $S=1$ turns out to correspond to the nematic phase of a very interesting SU(2)-invariant quantum system.

Several of the present results should hold more generally. Long loops should be absent in dimension  2 for all $\theta>0$, not only $\theta=2,3,\dots$ Occurrence of macroscopic loops is proved for small $S$ or large $d$ (and large $\beta$); our conditions could certainly be loosened. The restriction to $u \leq \frac12$ seems to be an inherent feature of the method; it is indeed present in the quantum setting, having frustrated experts since the method was introduced.

Nachtergaele's extension of the loop representation for higher spins \cite{Nac1,Nac2} is different from the ones discussed here. Comparing the information provided by both could lead to new results.

\subsection{Joint distribution of the lengths of macroscopic loops}

This is an intriguing topic for future research, both in itself and for the information it provides on the structure of pure Gibbs states and symmetry breaking. It seems appropriate to discuss this in details. Let $L^{(1)}, L^{(2)}, \dots$ denote the lengths of the loops in decreasing order. Clearly, $\bigl( \frac{L^{(1)}}{\beta |\Lambda|}, \frac{L^{(2)}}{\beta |\Lambda|}, \dots \bigr)$ is a random partition of $[0,1]$. In the case of the random interchange model on the complete graph, i.e., $\theta=1$, $u=1$, and $(\Lambda,\caE)$ is the complete graph, Aldous conjectured that this random partition has Poisson-Dirichlet(1) distribution. This was subsequently proved by Schramm \cite{Sch}, who showed that the time evolution of the loop lengths is described by an effective split-merge process (or ``coagulation-fragmentation''). The relevance of these ideas in the presence of spatial structure (that is, $\Lambda \subset \bbZ^{d}$) and for $\theta=2$ was explained in \cite{GUW}. This is backed by results for the model of ``spatial random permutations'', that are rigorous in the annealed case \cite{BU} and numerical in the quenched lattice case \cite{GLU}.

We start by describing the heuristics. Our main goal is to justify Conjecture \ref{conj joint distr} below. We assume that the graph is regular, $\Lambda \subset \bbZ^{d}$, but the discussion holds more generally.

Macroscopic loops are spread all over $\Lambda$ and they interact between one another, and among themselves, in an essentially mean-field fashion. We introduce a Markov process for which the measure \eqref{La mais-je} is invariant. It is enough that it satisfies the detailed balance property, which can be written as follows:
\bm
\theta^{|\caL(\omega)|} (u \dd t)^{\# \text{ crosses in $\omega$}} \bigl( (1-u) \dd t \bigr)^{\# \text{ double bars in $\omega$}} R(\omega,\omega') \\
= \theta^{|\caL(\omega')|} (u \dd t)^{\# \text{ crosses in $\omega'$}} \bigl( (1-u) \dd t \bigr)^{\# \text{ double bars in $\omega'$}} R(\omega',\omega).
\end{multline}
We have discretized the interval $[0,\beta]$ with mesh $\dd t$, and $R(\omega,\omega')$ denotes the rate at which $\omega'$ occurs when the configuration is $\omega$. The following process satisfies the detailed balance property.
\begin{itemize}
\item A new cross appears in the interval $\{x,y\} \times [t,\dd t]$ at rate $u \theta^{1/2} \dd t$ if its appearance causes a loop to split; at rate $u \theta^{-1/2} \dd t$ if its appearance causes two loops to merge; and at rate $u \dd t$ if its appearance does not modify the number of loops.
\item Same with double bars, but with $1-u$ instead of $u$.
\item An existing cross and double bar is removed at rate $\theta^{1/2}$ if its removal causes a loop to split; at rate $\theta^{-1/2}$ if its removal causes two loops to merge; and at rate 1 if the number of loops remains constant.
\end{itemize}
Notice that any new cross or double bar between two loops causes them to merge. When $u=1$, any new cross within a loop causes it to split.

Let $\gamma, \gamma'$ be two long loops of respective lengths $L,L'$. A new cross or double bar that causes $\gamma$ to split appears at rate $\tfrac12 c_{1} \theta^{1/2} \frac{L^{2}}{\beta |\Lambda|}$; a new cross or double bar that causes $\gamma$ and $\gamma'$ to merge appears at rate $c_{1} \theta^{-1/2} \frac{L L'}{\beta |\Lambda|}$. The rate for an existing cross or double bar to disappear is $\tfrac12 c_{2} \theta^{1/2} \frac{L^{2}}{\beta |\Lambda|}$ if $\gamma$ is split, and $c_{2} \theta^{-1/2} \frac{L L'}{\beta |\Lambda|}$ if $\gamma$ and $\gamma'$ are merged. Consequently, $\gamma$ splits at rate
\be
\tfrac12 (c_{1}+c_{2}) \theta^{1/2} \frac{L^{2}}{\beta |\Lambda|} \equiv \tfrac12 r_{\rm s} L^{2}
\ee
and $\gamma, \gamma'$ are merged at rate
\be
(c_{1}+c_{2}) \theta^{-1/2} \frac{L L'}{\beta |\Lambda|} \equiv r_{\rm m} L L'.
\ee
It is important that the constants $c_{1}$ and $c_{2}$ be the same for all loops and for both the split and merge events. This may seem a daring conjecture to make, but it has been verified numerically in \cite{GLU} in a very similar model. It follows that the lengths of macroscopic loops satisfy an effective split-merge process, and the invariant distribution is Poisson-Dirichlet with parameter $r_{\rm s} / r_{\rm m} = \theta$, see e.g.\ \cite{Bertoin,GUW} and references therein.

The case $u \in (0,1)$ is different because loops split with only half the rate above. Indeed, the appearance of a new transition within the loop may just rearrange it: topologically, this is like $0 \leftrightarrow 8$. This results in PD$(\frac\theta2)$. Notice that this cannot happen when $u=1$, or when $u=0$ on a bipartite graph. These considerations allow to formulate the following conjecture.

\begin{conjecture}
\label{conj joint distr}
Assume that $d, \theta, u, \beta$ are such that macroscopic loops are present. Then, as $|\Lambda|\to\infty$ then $k \to \infty$,
\begin{itemize}
\item the random variable $\sum_{i=1}^{k} \frac{L^{(i)}}{\beta|\Lambda|}$ converges (in probability) to a constant, denoted $\nu$;
\item the sequence of decreasing numbers $\bigl( \frac{L^{(1)}}{\nu \beta |\Lambda|}, \dots, \frac{L^{(k)}}{\nu \beta |\Lambda|} \bigr)$ converges (in probability) to a Poisson-Dirichlet distribution. More precisely, it converges to PD$(\theta)$ if $u=1$ and to PD$(\frac\theta2)$ if $u \in (0,1)$.
\end{itemize}
\end{conjecture}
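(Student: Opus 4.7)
The plan is to construct an explicit continuous-time Markov process on $\Omega$ that is reversible with respect to the loop measure \eqref{La mais-je}, identify the lengths of macroscopic loops as a slow variable, and show that in the thermodynamic limit this variable evolves according to an effective split-merge process on the Kingman simplex. The classical invariance of Poisson-Dirichlet under split-merge dynamics then yields the conjectured distribution, with the parameter read off from the ratio of the split and merge rates.

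First, I would formalise the dynamics described in the paragraph preceding the conjecture: crosses and double bars appear and disappear on $\caE \times [0,\beta]_{\mathrm{per}}$ at rates involving factors $\theta^{\pm 1/2}$ or $1$, according as the local move splits a loop, merges two loops, or preserves the loop count. A direct check shows that these rates satisfy detailed balance against $\theta^{|\caL(\omega)|} \dd\rho_u(\omega)$, so \eqref{La mais-je} is invariant and the process can serve as the continuous-time skeleton for the limiting analysis.

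Next, I would track the vector of loop lengths $(L^{(1)}, L^{(2)}, \dots)$ under this dynamics, restricted to those loops whose lengths are of order $\beta |\Lambda|$. For such loops $\gamma, \gamma'$ of lengths $L, L'$, the rate at which a new transition splits $\gamma$ is obtained by integrating the microscopic rates over the edges and times at which the new cross or bar lies on $\gamma$ and has the topological effect of disconnecting it; similarly for the merging of $\gamma,\gamma'$. Assuming spatial delocalisation of macroscopic loops — that every macroscopic loop visits each small spatio-temporal box with a frequency proportional to its own length, independently of its identity — these integrals collapse to $\tfrac12 r_{\mathrm{s}} L^2$ and $r_{\mathrm{m}} L L'$ for constants $r_{\mathrm{s}}, r_{\mathrm{m}}$ universal across loops and depending only on local statistics such as $\bbP(E_{0,e_1,0})$. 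The asymmetric factors $\theta^{\pm 1/2}$ combine to give $r_{\mathrm{s}}/r_{\mathrm{m}} = \theta$ when $u=1$; when $u \in (0,1)$, exactly half of the new transitions falling on a single macroscopic loop produce a figure-eight rearrangement rather than a genuine split, and this halving yields the ratio $\theta/2$. Combining with the existence of the total macroscopic mass $\nu$ (which follows from Theorem \ref{thm a maintenant un label} together with a concentration argument for $\sum_x L_{(x,0)}$), the effective dynamics on the normalized decreasing lengths is precisely the split-merge process on the Kingman simplex with rates $L^2$ and $LL'$. Invoking the classical characterisation of Poisson-Dirichlet as its unique invariant measure (see \cite{Sch, GUW} and the references therein) identifies the limiting law.

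The main obstacle is the uniform delocalisation hypothesis, which is exactly the point at which the heuristic becomes non-trivial. One must show that, for \emph{any} two macroscopic loops $\gamma, \gamma'$ in a typical configuration, the density of edge-time boxes on which $\gamma$ and $\gamma'$ both sit concentrates, as $|\Lambda|\to\infty$, around a deterministic constant depending only on $LL'/(\beta|\Lambda|)^2$. Equivalently, conditional on the collection of macroscopic lengths, the loops must be spatially exchangeable in an ergodic sense, so that the rate constants $c_1, c_2$ in the heuristic are genuinely universal and not loop-dependent. Numerical evidence in closely related quenched lattice models \cite{GLU}, and rigorous results in the annealed mean-field regime \cite{Sch, BU}, strongly support this picture, but a rigorous proof in the spatial setting of $\bbZ^d$ appears to require substantial new ideas, perhaps combining the infrared bounds of Section \ref{sec macroscopic loops} with a control on the joint spatial occupation of distinct macroscopic loops. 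A secondary difficulty is to show that the mixing time of the effective split-merge process is short on the natural $\beta|\Lambda|$ scale of the underlying dynamics, so that the Gibbs measure is indeed close to the invariant distribution of the limiting process.
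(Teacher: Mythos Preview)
Your proposal mirrors the paper's own treatment almost exactly: the paper also introduces the reversible Markov process with rates $\theta^{\pm 1/2}$, derives the effective split and merge rates $\tfrac12 r_{\mathrm s}L^2$ and $r_{\mathrm m}LL'$, and reads off the Poisson--Dirichlet parameter from $r_{\mathrm s}/r_{\mathrm m}$, with the halving for $u\in(0,1)$ coming from the $0\leftrightarrow 8$ rearrangements. Crucially, the paper does \emph{not} prove this statement either---it is stated as a conjecture, and the surrounding discussion is explicitly labelled a heuristic. The gap you flag (that the constants $c_1,c_2$ governing the rates are the same for all macroscopic loops, i.e.\ your ``uniform delocalisation hypothesis'') is precisely the assumption the paper singles out as the daring step, supported only by the numerics of \cite{GLU} and the annealed/mean-field results of \cite{Sch,BU}. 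So you have correctly reproduced both the argument and its acknowledged weak point; there is no proof in the paper to compare against beyond this.
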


The case $u=0$ is a bit subtle as it depends on the graph. PD$(\theta)$ is the right choice for bipartite lattices, while PD$(\frac\theta2)$ should be expected otherwise.

It turns out that Conjecture \ref{conj joint distr} is relevant for the discussion about symmetry breaking, even though the heuristics is rather indirect. If $x$ and $y$ are two vertices that are very far apart, the probability that they belong to the same loop is equal to the probability $\nu^{2}$ that they both belong to macroscopic loops, times the probability that they belong to the same element of the corresponding random partition. This can easily be calculated using Beta$(\vartheta)$ i.i.d.\ random variables $X_{1}, X_{2}, \dots$, so that
\[
\Bigl( X_{1}, (1-X_{1}) X_{2}, (1-X_{1}) (1-X_{2}) X_{3}, \dots \Bigr)
\]
has GEM$(\vartheta)$ distribution, which is closely related to PD$(\vartheta)$. Then, when $x$ and $y$ are far apart,
\be
\label{GEM calculation}
\begin{split}
\bbP(E_{x,y,0}) &\approx \nu^{2} \sum_{k\geq1} \bbE \bigl( (1-X_{1})^{2} \dots (1-X_{k-1})^{2} X_{k}^{2} \bigr) \\
&= \nu^{2}  \sum_{k\geq1} \Bigl( \frac\vartheta{\vartheta+2} \Bigr)^{k-1} \frac2{(\vartheta+1) (\vartheta+2)} \\
&= \frac{\nu^{2}}{\vartheta+1}. 
\end{split}
\ee
The approximation should become exact in the limits $|\Lambda|\to\infty$ then $\|x-y\| \to \infty$.

\subsection{Nature of pure Gibbs states}
\label{sec outlook}

Let us focus on the case $u=1$. The Hamiltonian $H_{\Lambda}^{(1)}$ is (minus) the sum of transposition operators. Ferromagnetic product states of the form $\otimes_{x} |a\rangle$, with $a \in \{-S,\dots,S\}$, are ground states: They are eigenstates of each $T_{xy}$ with eigenvalue 1; and this is the largest eigenvalue since $T_{xy}^{2} = \Id$. It is tempting to conclude that a ferromagnetic phase transition takes place (for $d\geq3$) and that the pure Gibbs states are of the form $\langle \cdot \rangle_{\vec\Omega}$ with $\vec\Omega \in \bbS^{2}$:
\be
\langle \cdot \rangle_{\vec\Omega} = \lim_{h\to0+} \lim_{|\Lambda|\to\infty} \langle \cdot \rangle_{H_{\Lambda}^{(1)} + h \sum_{x} \vec\Omega \cdot \vec S_{x}}.
\ee
The pure state $\langle \cdot \rangle_{\vec e_{3}}$ is represented by space-time spin configurations where long loops have spin $S$, while finite loops have any spin values. It follows that
\be
\langle S_{x}^{3} \rangle_{\vec e_{3}} = \nu S, \qquad \langle S_{x}^{1} \rangle_{\vec e_{3}} = \langle S_{x}^{2} \rangle_{\vec e_{3}} = 0.
\ee
Decomposing the rotation-invariant Gibbs state into pure states, and using asymptotic clustering of pure states, we have
\be
\label{heur1}
\begin{split}
\langle S_{x}^{3} S_{y}^{3} \rangle &= \tfrac13 \langle \vec S_{x} \cdot \vec S_{y} \rangle = \tfrac13 \tfrac1{4\pi} \int_{\bbS^{2}} \langle \vec S_{x} \cdot \vec S_{y} \rangle_{\vec\Omega} \dd\vec\Omega \\
&= \tfrac13 \langle \vec S_{x} \cdot \vec S_{y} \rangle_{\vec e_{3}} \approx \tfrac13 \sum_{i=1}^{3} \langle S_{x}^{i} \rangle_{\vec e_{3}} \langle S_{y}^{i} \rangle_{\vec e_{3}} = \tfrac13 \nu^{2} S^{2}.
\end{split}
\ee
On the other hand, using Theorem \ref{thm corr} and Eq.\ \eqref{GEM calculation} with $\vartheta = 2S+1$, we have
\be
\label{heur2}
\langle S_{x}^{3} S_{y}^{3} \rangle = \tfrac13 S(S+1) \bbP(E_{x,y,0}) \approx \tfrac16 \nu^{2} S.
\ee
Eqs \eqref{heur1} and \eqref{heur2} agree in the case $S=\frac12$. This should be expected, as $H_{\Lambda}^{(1)}$ is then the Hamiltonian of the Heisenberg ferromagnet (see Section \ref{sec spin12}). But the equations disagree for all other values of $S$, in particular $S=1$. Eq.\ \eqref{heur2} seems trustworthy as it relies on Conjecture \ref{conj joint distr}. This suggests that the nature of symmetry breaking and the structure of pure Gibbs states are more subtle due to the bigger SU(3) symmetry. Hopefully more light will be shed on these questions in the future.

The case $u=0$ is similar, with the staggered magnetization replacing the magnetization. The calculations above confirm the existence of antiferromagnetic pure states when $S=\frac12$, while the situation for $S \geq 1$ is less clear. In the case $u\in(0,1)$ and $S=\frac12$, we can check that Conjecture \ref{conj joint distr} is compatible with the breaking of the U(1) symmetry: Indeed, let $\bbS^{1}$ denote the unit circle in the plane 1-3; then
\be
\begin{split}
\langle S_{x}^{3} S_{y}^{3} \rangle &= \tfrac12 \langle S_{x}^{1} S_{y}^{1} + S_{x}^{3} S_{y}^{3} \rangle \\
&= \tfrac12 \tfrac1{2\pi} \int_{\bbS^{1}} \langle S_{x}^{1} S_{y}^{1} + S_{x}^{3} S_{y}^{3} \rangle_{\vec\Omega} \dd\vec\Omega = \tfrac12 \langle S_{x}^{1} S_{y}^{1} + S_{x}^{3} S_{y}^{3} \rangle_{\vec e_{3}} \\
&\approx \tfrac12 \langle S_{x}^{1} \rangle_{\vec e_{3}} \langle S_{y}^{1} \rangle_{\vec e_{3}} + \tfrac12 \langle S_{x}^{3} \rangle_{\vec e_{3}} \langle S_{y}^{3} \rangle_{\vec e_{3}} = \tfrac18 \nu^{2}.
\end{split}
\ee
This is compatible with Theorem \ref{thm corr}, $\langle S_{x}^{3} S_{y}^{3} \rangle = \frac14 \bbP(E_{x,y,0})$, and Eq.\ \eqref{GEM calculation} with $\vartheta = 1$.

In the case $S=1$ and $u \in (0,1)$, a similar heuristics should be possible, that would confirm and help characterize the nematic phase that is expected in the quantum model.

\renewcommand{\refname}{\small Handwritten notes}

\newpage
{
\renewcommand{\refname}{\small References}
\bibliographystyle{symposium}

}

\end{document}